\theoremstyle{definition}
\theoremstyle{remark}
\newtheorem{theorem}{Theorem}
\newtheorem{corollary}{Corollary}
\newcommand{\hobs}{\ensuremath{{\Gamma}}}
\newcommand{\be}{\begin{equation}}
\newcommand{\ee}{\end{equation}}
\newcommand{\bd}{\begin{displaymath}}
\newcommand{\ed}{\end{displaymath}}
\newcommand{\BE}{\begin{eqnarray}}
\newcommand{\EE}{\end{eqnarray}}
\newcommand{\bra}{\left\langle}
\newcommand{\ket}{\right\rangle}
\newcommand{\id}{{\rm 1\!\!I}}
\newcommand{\bs}{\ensuremath{\mathbf{s}}}
\newcommand{\bu}{\ensuremath{\mathbf{u}}}
\newcommand{\bv}{\ensuremath{\mathbf{v}}}
\newcommand{\bx}{\ensuremath{\mathbf{x}}}
\newcommand{\mcD}{\mathcal{D}}
\newcommand{\mcH}{\mathcal{H}}
\newcommand{\mcP}{\mathcal{P}}
\newcommand{\obs}[2][R]{\left.#2\right|_{#1}}
\newcommand{\I}[1]{\text{\sout{$I$}}_{#1}}
\newcommand{\oo}{observer-as-object}
\newcommand{\os}{observer-as-subject}
\newtcolorbox{mybox}{colback=green!5!white,colframe=green!75!black}
\newcommand{\nocontentsline}[3]{}
\newcommand{\tocless}[2]{\bgroup\let\addcontentsline=\nocontentsline#1{#2}\egroup}
\begin{document}

\title{How can scientists establish an observer-independent science? \\ Embodied cognition, consciousness and quantum mechanics}

\author{John Realpe}\email{john.realpe@gmail.com}
\affiliation{Laboratory for Research in Complex Systems, San Francisco, USA}


\date{\today}

\begin{abstract}
Evidence is growing for the theory of embodied cognition, which posits that action and perception co-determine each other, forming an action-perception loop. This suggests that we humans somehow participate in what we perceive. So, how can scientists escape the action-perception loop to obtain an observer-independent description of the world? Here we present a set of conjectures informed by the philosophy of mind and a reverse-engineering of science and quantum physics to explore this question. We argue that embodiment, as traditionally understood, can manifest aspects of imaginary-time quantum dynamics. We then explore what additional constraints are required to obtain aspects of genuine, real-time quantum dynamics. In particular, we conjecture that an embodied scientist doing experiments must be described from the perspective of another scientist, which is ignored in traditional approaches to embodied cognition, and that observers play complementary roles as both objects experienced by other observers and ``subjects'' that experience other objects. 
\end{abstract}

\maketitle

\section{Introduction}\label{sm:intro}
\begin{figure*}
\includegraphics[width = 0.90\textwidth]{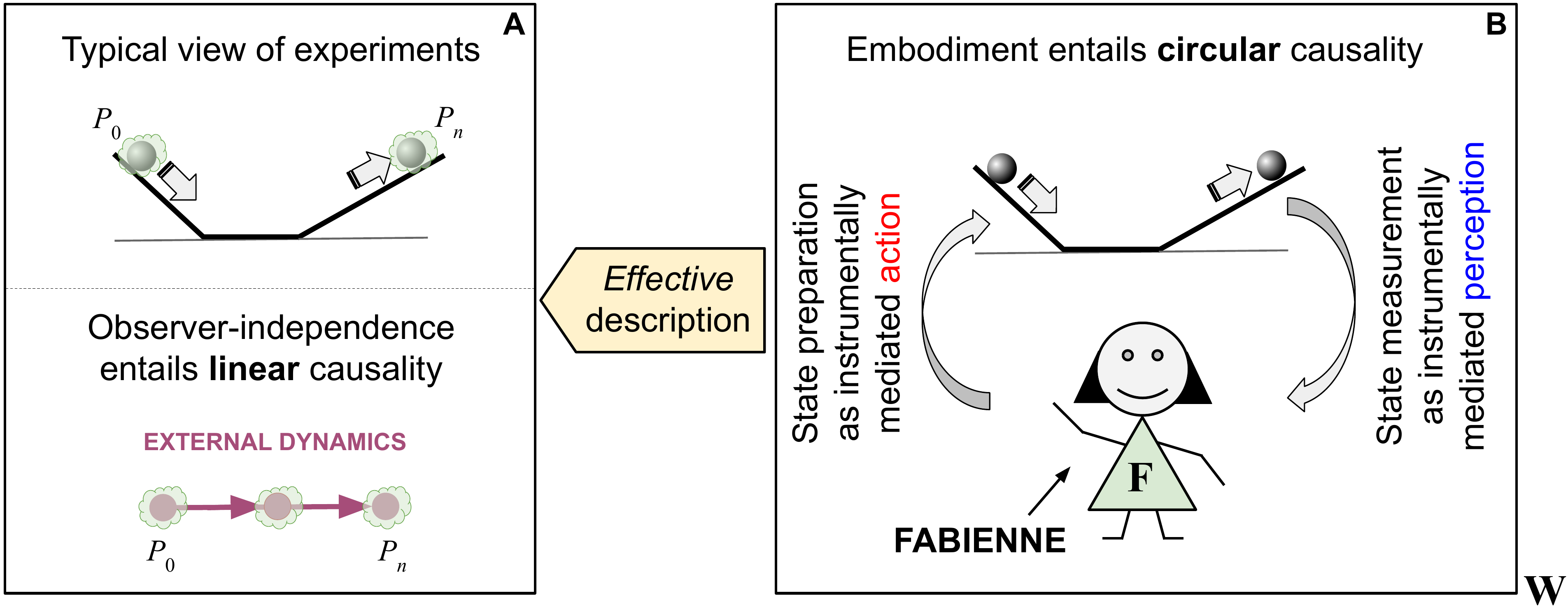}
\caption{{\em Effective seemingly observer-independent description of scientists doing experiments:} 
 (A) Experiments, and the world more generally, are typically viewed as observer-independent and so as instances of linear causality. (B) The instrumentally-mediated action-perception loop associated to an embodied scientist, Fabbiene (denoted by $F$), doing an experiment is usually considered an instance of circular causality. The circular causality of the instrumentally-mediated action-perception loop can indeed be effectively described in terms of a kind of linear causality (A)---the influence of the observer (green clouds) is effectively captured by describing the state of the system by a probability {\em matrix} that follows an imaginary-time quantum dynamics. However, here Fabbiene is described from the perspective of an external observer, Wigner (denoted by $W$---bottom right). Taking into account such an external observer might be a key aspect to obtain a genuine, real-time quantum dynamics (see Sec.~\ref{sm:first}). 
}\label{f:effective}
\end{figure*}

A central theme in modern cognitive science is the idea that action and perception are circularly related and fundamentally inseparable~\cite{varela2017embodied,thompson2010mind,di2017sensorimotor,shapiro2019embodied,djebbara2019sensorimotor,wilson2002six,bridgeman2011embodied}. That is, action and perception co-determine each other, forming an {\em action-perception loop}---i.e., ``a cycle in which perception leads to particular actions, which in turn create new perceptions, which then lead to new actions, and so on''~\cite{shapiro2019embodied}. Djebbara {\em et al.}~\cite{djebbara2019sensorimotor} recently reported experimental evidence for the existence of the action-perception loop. The idea that action and perception are somehow co-dependent has roots in a variety of fields~\cite{wilson2002six}, including Piaget's developmental psychology, which holds that cognitive abilities somehow emerge from sensorimotor skills; Gibson's ecological psychology, which sees perception in terms of potential interactions with the environment; and Merleau-Ponty's phenomenology, which views perception not as something that happens inside an organism that {\em passively} receives information about the world, but as a process wherein the organism {\em actively} seeks out information and interprets it in terms of the bodily actions it enables.

The action-perception loop has been particularly emphasized in the research program of embodied cognition~\cite{varela2017embodied,thompson2010mind,di2017sensorimotor,shapiro2019embodied,djebbara2019sensorimotor,wilson2002six,bridgeman2011embodied}, which arose as a reaction against the view that ``the mind and the world could be treated as separate and independent of each other, with the outside world mirrored by a representational model inside the head''~\cite{thompson2010mind}. In the traditional view, cognition begins with an input to the brain and ends with an output from the brain; so, traditional cognitive science can limit its investigations to processes within the head, without regard for the world outside the organism~\cite{shapiro2019embodied}. 

In contrast, embodied cognition posits that ``cognitive processes emerge from the nonlinear and circular causality of continuous sensorimotor interactions involving the brain, body, and environment''~\cite{thompson2010mind} In this view, perception does not result from {\em passively} sensing the physical world but from {\em actively} engaging with it in an ongoing reciprocal interaction between brain, body and world---such interactions could in principle be mediated by technologies that enhance motor and sensory capabilities. Again, we are involved in an action-perception loop, wherein we act to perceive and vice versa. 

According to Varela {\em et al.}~\cite{varela2017embodied}, the overall concern of embodied cognition is ``not to determine how some perceiver-independent world is to be recovered; it is, rather, to determine the common principles or lawful linkages between sensory and motor systems that explain how action can be perceptually guided in a perceiver-dependent world.'' Along the same lines, more recently di Paolo {\em et al.}~\cite{di2017sensorimotor} say (comments within brackets are our own):

\begin{quote}
``Action in the world is always perceptually guided. And perception is always an active engagement with the world. The situated perceiver does not aim at extracting properties of the world as if these were pregiven, but at understanding the engagement of her body [possibly enhanced by technological devices] with her surroundings, usually in an attempt to bring about a desired change in relation between the two. To understand perception is to understand how these sensorimotor regularities or contingencies are generated by the coupling of body and world [possibly mediated by technologies that can enhance motor and sensory capabilities] and how they are used in the constitution of perceptual and perceptually guided acts.''
\end{quote}

The action-perception loop is also emphasized in the theory of active inference, wherein an agent has a generative model of the external world and its motor systems suppress prediction errors through a dynamic interchange of prediction and action. In other words, ``there are two ways to minimize prediction errors: to adjust predictions to fit the current sensory input and to adapt the unfolding of movement to make predictions come true. This is a unifying perspective on perception and action suggesting that action is both perceived by and caused by perception''~\cite{djebbara2019sensorimotor}. 
 
According to Friston~\cite{friston2013life}, in active inference there is a circular causality analogous to the action-perception loop. Such circular causality means that ``external states cause changes in internal states, via sensory states, while the internal states couple back to the external states through active states---such that internal and external states cause each other in a reciprocal fashion. This circular causality may be a fundamental and ubiquitous causal architecture for self-organization.''

While the research program of embodied cognition and related fields encompass a broad spectrum of views, among which there is still ongoing debate, we here focus only on the action-perception loop, which appears to be a rather uncontroversial feature. Moreover, as already mentioned, Djebbara {\em et al.} recently reported experimental evidence for the existence of the action-perception loop. Depending on the context and on the interest of the authors, the action-perception loop tends to be modeled with different tools and with different degrees of complexity. For instance, the enactive view of embodied cognition tends to emphasize dynamical systems, while active inference tends to emphasize variational Bayesian methods. 
Here we use tools from statistical physics to model the action-perception loop in a rather parsimonious way, focusing exclusively on its main feature: the circular causality between action and perception.


Now, in cognitive science it is routine to model human beings interacting with external systems. Here we investigate the particular case where the human beings are scientists and the external systems are experimental systems. That is, we model scientists performing scientific experiments. This reflexive application of science to itself brings up an interesting question. Indeed, the action-perception loop entails that humans play an active and constructive role in the information they perceive about the world. In contrast, scientists apparently manage to obtain a completely observer-independent view of the world, {\em passively} mirroring an external reality without influencing it in any way (see Fig.~\ref{f:effective}A). How do scientists achieve such a feat? Of course, technology enhances scientists' capacities for perception and action, enabling them to transcend the limitations of their senses and to implement sophisticated interventions, e.g., at the sub-atomic level. However, while it is clear that technology can enable an enhanced, instrumentally-mediated action-perception loop, it is {\em not} at all clear that it can also change its circular topology. In other words, it is not clear that technology can break such an enhanced loop of instrumentally-mediated action and instrumentally-mediated perception (see Fig.~\ref{f:effective}B). 

In brief, our approach allows us to ask: How can scientists establish an observer-independent science, even though this seems to defy the very notion of embodied cognition? In other words, how can scientists escape the action-perception loop? Of course, logical reasoning is another powerful tool that allows scientists to transcend their limitations. However, logical reasoning should be able to acknowledge the existence of the action-perception loop, if it exists, and tell us how is it that we escape it. From a different perspective, our approach could also be considered as a self-consistency check to materialism: instead of {\em a priori} neglecting the physics or embodiment of scientists, as if they were immaterial, we let a scientific analysis tells us {\em a posteriori} how is it that we can do so. 

In principle, scientists differ from generic human beings in that they strive to achieve objectivity, which is often equated with observer-independence. Of course, we cannot start from the assumption of an observer-independent science since how this is established is precisely what we want to explore. Instead, we will use three conditions that, according to Velmans, characterize what in practice we may call a {\em reliable} science. These are~\cite{velmans2009understanding} (p. 219; see also Refs.~\cite{varela2017embodied,thompson2014waking,bitbol2008consciousness}): 

\begin{quote} 
{\bf R1. Standardization:} The procedures we used to investigate the world are standardized and explicit, so we clearly know what we are talking about. 

{\bf R2. Intersubjectivity:} The observations we do are intersubjective and repeatable, so we can mutually agree about the actual scientific facts.

{\bf R3. Truthfulness:} Observers are dispassionate, accurate and truthful, for obvious reasons. 
\end{quote}
Again, we are not {\em a priori} equating the notion of reliability with that of objectivity in the sense of observer-independence. However this does not deny {\em a priori} either that an observer-independent science can be established. 

\

\section{Outline}\label{sm:brief}

This work is to be read as a set of conjectures informed by the philosophy of mind and a reverse-engineering of science and quantum physics---see Sec.~5 in Ref.~\cite{realpe2} for a brief conceptual presentation of the main ideas involved. It is outlined as follows. In Sec.~\ref{sm:third} we discuss the circular dynamics of an embodied scientist interacting with an experimental system, which is similar to that of an action-perception loop~\cite{friston2010free,djebbara2019sensorimotor,di2017sensorimotor}. We show that this manifests aspects of ``imaginary-time'' quantum dynamics, which is described by a von Neumann equation without imaginary unit. While this is real-valued, genuine or ``real-time'' quantum dynamics is complex-valued. 

A natural question is: what conditions would be required to obtain a real-time quantum dynamics? We address this in Sec.~\ref{sm:first}. To explore this question we write the complex-valued von Neumann equation as a pair of real-valued equations by separating its real and imaginary parts. These equations look very similar to the imaginary-time von Neumman equation and its conjugate, except that a term is swapped, effectively coupling the two otherwise independent imaginary-time dynamics. We show that a similar swapping appears when dealing with reflexive systems, such as a pair of mirrors reflecting each other or a pair of video-camera systems pointing at each other. 

This suggests that we might obtain real-time quantum dynamics by reflexively coupling two (sets of) observers mutually observing each other. In this case, observers are relative to each other rather than to an external, unacknowledged observer. We explore this in the rest of Sec.~\ref{sm:first}. Based on an analogy with reflexive systems, we introduce some conjectures characterizing an observer, and we show that these lead to a dynamics with aspects of a genuine, real-time quantum dynamics---we refer to these conjectures collectively as the {\em reflexive coupling hypothesis}. 

However, Sec.~\ref{sm:first} should not be considered as a rigorous derivation of real-time quantum dynamics from reflexivity. The reason is that, reflexivity being a rather subtle and scarcely studied subject, the connection between the ideas of reflexivity and the conjectures we introduce may not be completely transparent. We introduce this section in this work because we find it conceptually plausible and we hope it can suggest future research on potential connections between the philosophy of mind and quantum physics. The literature on reflexive systems is rather scarce and we hope that an interdisciplinary approach to this topic could help further clarify or improve the conjectures we introduce here.

In sum, somewhat analogous to relational quantum mechanics (RQM)~\cite{Rovelli-1996}, in Sec.~\ref{sm:first} we assume that a classical embodied scientist interacting with a classical experimental system must be described from the perspective of another scientist, which is ignored in traditional approaches to embodied cognition ($W$ in Fig.~\ref{f:effective}B). However, to be consistent we should also take into account who observes this new scientist. We conjecture that we can escape the infinite regress that a na\"ive approach would entail in two steps. 

First, we assume that observers play complementary roles as both objects experienced by other observers and ``subjects'' that experience other objects. Here the word ``subject'' is used in a strict technical sense as the opposite of object. In this approach, the physics of objects can in principle be modelled in the traditional way---e.g., as particles following a cause-and-effect mechanism. In contrast, the physics of ``subjects'', being the opposite of objects, cannot be modelled in the same way, but only as random fluctuations irreducible to lower-level mechanisms.

Second, like two mirrors reflecting each other as well as another object, we have to conjecture that two (sets of) observers mutually observe each other as well as the experimental system. In this way, with some further assumptions, we show that it is plausible to obtain two coupled imaginary-time quantum dynamics that can be written as the imaginary and real parts of a dynamics formally analogous to genuine, real-time quantum dynamics. 

Finally, in Sec.~\ref{sm:discussion} we summarize our work and place it in the landscape of the philosophy of mind. Further details are provided in the appendices.

\section{Embodiment and imaginary-time quantum dynamics}\label{sm:third}
\begin{figure*}
\includegraphics[width = 0.90\textwidth]{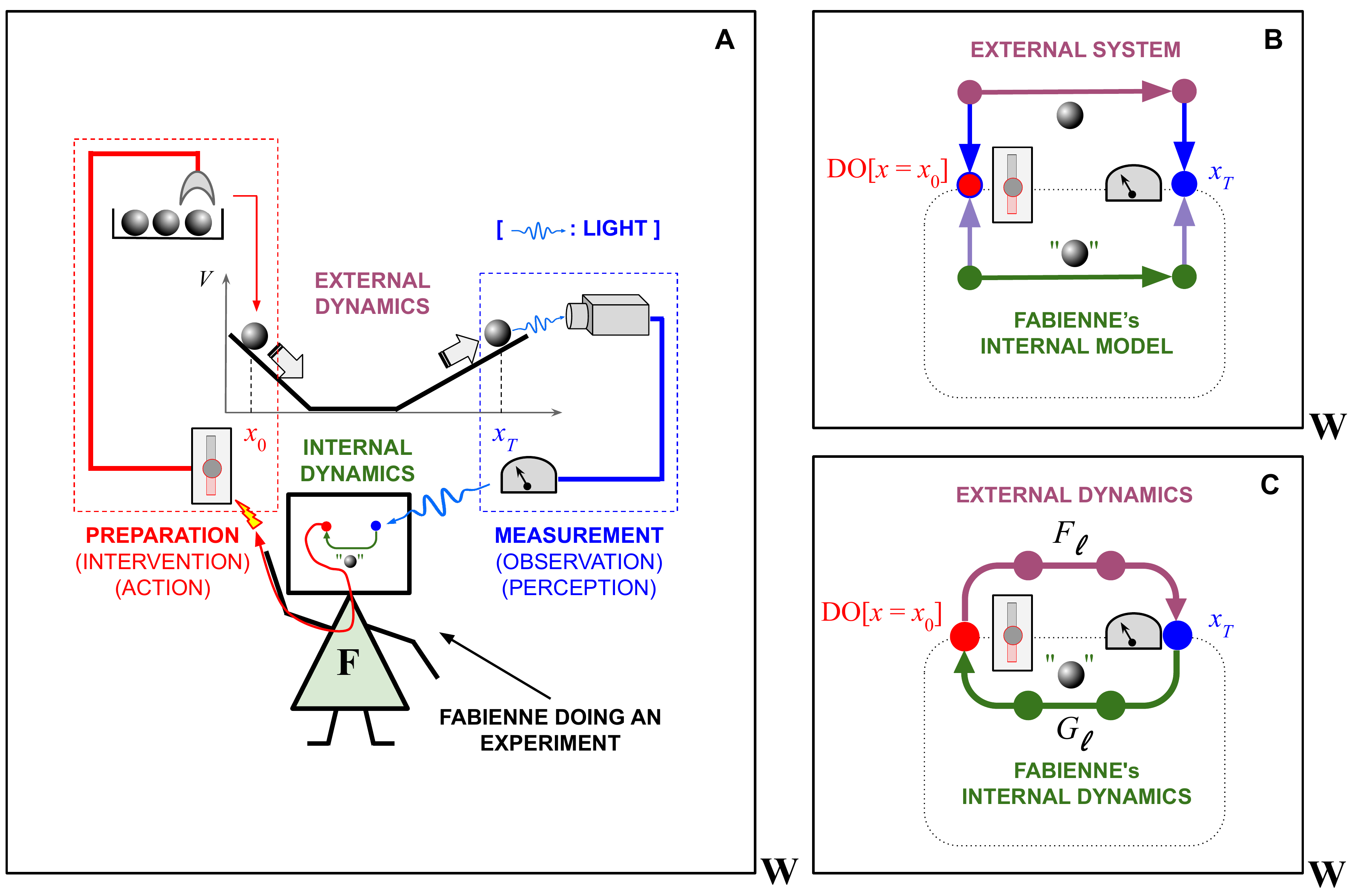}
\caption{{\em Experiments as circular processes:} 
(A) A scientist (Fabbiene) doing an experiment.
(B) Model of a scientist doing an experiment in the spirit of active inference. Fabbiene's actions can always prepare the same initial state, $x = x_0$, effectively implementing a causal intervention (this is denoted here as $\textsc{do}[x = x0]$) 
(C) Enactive model of a scientist and an experimental system as two physical systems involved in a circular interaction (see Appendix~\ref{s:enactive} and Fig.~\ref{f:enactive} therein). Arrows indicate the direction of the circular interaction, not conditional probabilities as in Bayesian networks. Factors $F_\ell$ and $G_\ell = F_{2n-1-\ell}$ describe, respectively, the dynamics external and and internal to Fabienne---here $\ell = 0,\dotsc , n$. In principle, instead of the single variable $x_0$ we should introduce two variables $x_0^{\rm e}$ describing the experimental system and $x_0^{\rm i}$ describing the corresponding Fabienne's physical correlate. However, since Fabienne intervenes the initial state of the system to be $x_0$, we have that $x_0^{\rm e} = x_0$ and $x_0^{\rm i} = x_0$, so we effectively have one variable $x_0$. Similar considerations apply to the final state $x_n$: there should be two variables $x_n^{\rm e}$ and $x_n^{\rm i}$. However, since Fabienne's measures the final state, which is $x_n$, we have that $x_n^{\rm e} = x_n$ and $x_n^{\rm i} = x_n$.
}\label{f:circular_noRQM}
\end{figure*}

\subsection{Embodied scientists doing experiments}

Here we build on enactivism whose task is ``to determine the common principles or lawful linkages between sensory and motor systems that explain how action can be perceptually guided in a perceiver-dependent world''~\cite{varela2017embodied} (p. 173) 
In Appendix~\ref{s:model_scientists} we provide a brief introduction to some aspects of embodied cognition. 

Importantly, we neglect the long and painful learning stage, when scientists are engaged in the invention and fine-tuning of new protocols, devices, and even concepts (e.g., spacetime curvature) that enables them to couple to the world in ways that were not possible before, and thus to enact new kinds of lawful regularities. For instance, the kind of regularities associated to quantum and relativity theories, which are invisible to the naked eye, are enabled by sophisticated experimental protocols and devices, as well as conceptual frameworks, all developed by scientists themselves.

Figure~\ref{f:circular_noRQM}A illustrates the dynamical coupling between an embodied scientist and an experimental system. This can be divided into four stages: (i) scientist's interventions on the experimental system, e.g. via moving some knobs, for preparing the desired initial state---this requires the physical interaction between the knobs and the observer's actuators; (ii) experimental system's dynamics---this is the main process traditionally analyzed in physics; (iii) scientist's measurement of the experimental system---this requires the physical interaction between the experimental system and the observer's sensors via the measuring device; (iv) scientist's internal dynamics which correlate with her experience of the experimental system. 

In the related approach of active inference~\cite{friston2010free,schwobel2018active}, experimental systems would be considered as generative {\em processes} which scientists can only access indirectly via the data generated in their sensorium (see Appendix~\ref{s:active}). Scientists can perturb such generative processes via their actions and have a generative {\em model} of their dynamics, including the effect of their own actions, which they can make as accurate as possible via learning. This is reflected in that, in Fig.~\ref{f:circular_noRQM}B, the topology of the Bayesian network representing the scientist mirrors the topology of the Bayesian network representing the experimental system. In particular, both internal and external dynamics flow in the same direction (horizontal arrows in Fig.~\ref{f:circular_noRQM}B; see Appendix~\ref{s:active} and Fig.~\ref{f:active} therein). 

Following enactivism~\cite{varela2017embodied, thompson2010mind,di2017sensorimotor}, instead, we give more relevance to the dynamical coupling between scientists and experimental systems. Learning scientific lawful regularities is not so much about extracting pre-existent properties of the world as about stabilizing this circular coupling and achieving ``reliability'' (conditions {\bf R1}-{\bf R3} above). This may include the development of new technologies, protocols and concepts. The lawful regularities achieved in the post-learning stage are our focus here. So, our approach is independent of a specific theory of learning (see Fig.~\ref{f:circular_noRQM}C; see also Appendix~\ref{s:enactive} and Fig.~\ref{f:enactive} therein).

\

\subsection{As simple as possible, but not simpler}

Of course, the scientific process generally involves many scientists and technologies. However, much as the theory of relativity can be developed without modeling all types of realistic clocks, our approach aims at capturing some general underlying principles valid beyond the particular model investigated. For instance, we could also have a situation where, say, a scientist in the UK prepares a laser pulse to send to another scientist in the Netherlands who would then perform a measurement and send the result back to the former via email. Only after both scientists have communicated can they reach any scientific conclusions about any potential correlations between the initial and final states of the laser pulse. This would again be a circular process. Instead of two scientists we could have many and the fundamental process would still be circular. For simplicity, we focus here on a single scientist. However, experiments generally comprise the four stages above. So, ours can be considered as a model of a generic process of ``reliable'' {\em observations}---though ignoring relativistic considerations. This process is embodied because all scientists and technologies involved are so. 

\

\subsection{Experiments as circular processes}

Here we setup the mathematical framework. Science is fundamentally concerned with causation, not with mere correlation. So, in general, a scientist do not passively observe the system to determine its initial state. Rather, she actively intervenes it to prepare a fixed initial state, runs the experiment and observes the final state. She repeats this enough times to determine the probability that, given that the initial state prepared is $x=x_0$, the final state observed is $x_n$. Using Pearl's do-calculus, this probability can be denoted as $\overline{\mcP}(x_n| \textsc{do}[x=x_0])$, where $\textsc{do}[x=x_0]$ refers to the scientist's intervention (notice the bar on $\overline{\mcP}$). This notation emphasizes that the scientist is not passively {\it observing} the initial state to be $x=x_0$, but rather actively intervening the system to make sure the initial state is always $x=x_0$. Ideally, the scientist would prepare every possible initial state to compute the full probability distribution for any initial state, $x_0$ prepared---in practice this might be impossible, though. In principle, she can select each intervention with a given probability. 

Unlike Pearl's do-calculus, we explicitly model the scientist doing the causal intervention. So, instead of using the {\sc do} operator, we can deal with such an intervention in a more direct manner, as we are about to see. As we mentioned earlier, we are considering only the post-learning stage, when the scientist is just repeating the experiment a statistically significant number of times. We model this as the stationary state, $\widetilde{\mcP}(\widetilde{\boldsymbol{x}})$, of a stochastic process on a cycle, which includes deterministic systems as a particular case (see Fig.~\ref{f:circular_noRQM}C; notice the tildes on $\widetilde{\mcP}$ and $\widetilde{\boldsymbol{x}}$)---this allows us to establish {\em a posteriori} which is the case. Here $\widetilde{\bx} =(x_0,\dotsc , x_{k -1})$ denotes a closed path $x_0\to x_1\to\cdots\to x_{k-1}\to x_0$ which returns to $x_{k} = x_0$ due to the scientist's causal interventions---as we said, experiments are not mere passive observations. 
This path could be divided into two open paths $x_0\to\cdots\to x_{n}$ and $x_n\to\cdots\to x_{k}$, with $x_k = x_0$, corresponding to the experimental system and the scientist, respectively. Furthermore, $\widetilde{\mcP}(\widetilde{\boldsymbol{x}})$ denotes the probability to observe a path $\widetilde{\bx}$. As we said above, the scientist can in principle select each intervention with a given probability, so $\widetilde{\mcP}(\widetilde{\boldsymbol{x}})$ can be non-zero for paths with different values of $x_0$---again, causal interventions are reflected in the fact that paths are closed. 

In principle, instead of the single variable $x_0$ we should introduce two variables $x_0^{\rm e}$ describing the experimental system and $x_0^{\rm i}$ describing the corresponding Fabienne's physical correlate. However, since Fabienne intervenes the initial state of the system to be $x_0$, we have that $x_0^{\rm e} = x_0$ and $x_0^{\rm i} = x_0$, so we effectively have one variable $x_0$. Similar considerations apply to the final state $x_n$: there should be two variables $x_n^{\rm e}$ and $x_n^{\rm i}$. However, since Fabienne's measures the final state, which is $x_n$, we have that $x_n^{\rm e} = x_n$ and $x_n^{\rm i} = x_n$.

Since energy plays a key role in physics, we assume that the stationary state is characterized by an ``energy'' function $\mcH_\ell (x_{\ell + 1}, x_\ell)$, where $0\leq \ell\leq k$ denotes the time step. For the case of a particle in a non-relativistic potential $V$ we have 
\be\label{em:non-relativisticH}
\mathcal{H}_\ell(x_{\ell +1} ,x_\ell) = \frac{m}{2}\left(\frac{ x_{\ell +1} - x_\ell}{\epsilon}\right)^2 + \frac{1}{2}\left[V(x_\ell)+V(x_{\ell+1})\right].
\ee
for the external path ($\ell = 0,\dotsc , n-1$)---in principle, the internal path ($\ell=n,\dotsc , k$) can have a different functional form (but see below). More precisely, a part of the external path, say $\ell = 0\dotsc , \ell_{\rm prep}<n$, could be considered as the preparation process and have a different function $\mcH_{\ell}$, but we obviate this for simplicity. Unlike the traditional Hamiltonian function, $\mcH_\ell$ is written in terms of consecutive position variables, $x_\ell$ and $x_{\ell +1}$, rather than instantaneous position and momentum. The potential $V$ in Eq.~\eqref{em:non-relativisticH} is symmetrized for convenience. We will discuss later on the case of more general, complex-valued, and so ``non-stoquastic'' Hamiltonians (see Sec.~\ref{sm:obs-only-rel} and Appendix~\ref{s:EM}).

We derive $\widetilde{\mathcal{P}}$ using the principle of maximum path entropy~\cite{presse2013principles}, a general variational principle analogous to the free energy principle from which a wide variety of well-known stochastic models at, near, and far from equilibrium has been derived~\cite{presse2013principles} (see Appendix~\ref{s:MaxCal}). To do so, we use the assumption, common in statistical physics, that we only know the average energy on the cycle $E_{\rm av}=\bra\frac{\epsilon}{T}\sum_\ell\mathcal{H}_\ell\ket_{\widetilde{\mathcal{P}}}$ (see below). Here $\epsilon\to 0$ is the time step size and $T = (k + 1) \epsilon $ is the total duration of a cycle. This is known~\cite{presse2013principles} to yield a Boltzmann distribution (see Appendix~\ref{s:MaxCal})
\be\label{em:wholeP}
\widetilde{\mcP}\propto \exp\{-\epsilon\sum_\ell\mcH_\ell/\hobs\},
\ee
where $\hobs = T/\lambda$ and $\lambda$ is a Lagrange multiplier fixing the average energy $E_{\rm av}$ on the cycle (see Appendix~\ref{s:MaxCal}). We will investigate later on the potential sources of fluctuations characterized here by the temperature- or diffusion-like parameter $\Gamma$.

So, how can scientists escape their embodiment and obtain an observer-independent description of the world? Obviously, we cannot just forcefully neglect the scientist at this point. The proper way to ignore the scientist in our approach is by marginalizing $\widetilde{\mcP}$ over the degrees of freedom associated to her. So, following the tradition in physics, we now focus on the external system and ignore the scientist by marginalizing $\widetilde{\mcP}$ over the internal paths, i.e., over $(x_{n+1},\dotsc , x_{k-1})$. This yields (see Appendix~\ref{s:MaxCal}; notice the absence of tildes in the left-hand side)
\begin{equation}\label{em:circular}
\begin{split}
\mathcal{P}(\bx) & = \sum_{x_{n+1},\dotsc , x_{k-1}}\widetilde{\mathcal{P}}(\widetilde{\bx})\\
&= \frac{1}{Z} \widetilde{F}_n(x_{0}^\prime,x_n)\cdots F_1(x_2, x_1) F_0(x_{1},x_0),
\end{split}
\end{equation}
where $Z$ is the normalization constant and we have written $x_0^\prime= x_0$ for future convenience---here we use sums to indicate either sums or integrals depending on the context. The expression $\bx =(x_0,\dotsc , x_n)$ denotes a path $x_0\to x_1\to\cdots\to x_n\to x_0$ which returns to $x_0$ due to the scientist's causal intervention, but where we disregard how it does so.
Furthermore, 
\be\label{em:Ftilde}
\widetilde{F}_n(x_{0}^\prime,x_n)  = \sum_{x_{n+1},\dotsc , x_{k-1}}F_{k-1}(x_0^\prime, x_{k-1})\cdots  F_n(x_{n+1},x_n),
\ee
summarizes the dynamics internal to the scientist, whose details we have disregarded, and
\be\label{em:F}
F_\ell(x^\prime ,x) = e^{-\epsilon \mathcal{H}_\ell(x^\prime ,x)/\hobs}/Z_\epsilon,
\ee
for $\ell=0,\dotsc , k$, where the constant $Z_\epsilon = \sqrt{2\pi\Gamma\epsilon/m}$ is introduced for convenience.  

\

\subsection{Circular causality and imaginary-time quantum dynamics}

We now describe the relationship between our model of embodied scientists doing experiments (see Fig.~\ref{f:effective}B) and the typical view of experiments, and the world more generally (see Fig.~\ref{f:effective}A). We typically think of experiments and the world in terms of linear causality. That is, as external systems that have an observer-independent initial state that evolves forward in time according to some observer-independent dynamical law (see Fig.~\ref{f:effective}A). In contrast, the action-perception loop associated to an embodied scientist doing experiments is usually considered as an instance of circular causality (see Fig.~\ref{f:effective}B). Here we show that such a circular causality can be effectively described in terms of a kind of linear causality. That is, we will show that the circular dynamics entailed by the presence of the embodied scientist can be {\em effectively} described {\em as if} it were an observer-independent dynamics. The price to pay, however, is that the state of the system has to be described in terms of a probability {\em matrix} that follows a dynamics formally analogous to imaginary-time quantum dynamics (see Fig.~\ref{f:effective}). 

\

\subsubsection{Linear causality and Markov chains}

First, notice that if we neglect the scientist, i.e., if we neglect the ``energy'' function associated to the internal paths, then $\widetilde{F}_n(x_0, x_n)$ becomes a constant. In this case the cycle in Fig.~\ref{f:circular_noRQM}C turns into a chain and we recover the most parsimonious non-trivial dynamical model where the probability distribution in Eq.~\eqref{em:circular} is Markov with respect to a chain on variables $x_\ell$~\cite{pearl2009causality} (p. 16; see Appendix~\ref{s:chainMarkov} herein)---a more parsimonious dynamical model would be memoryless. 

In particular, by knowing only the initial marginal $p_0$ and the forward transition probabilities $\mcP_{\ell}^+$ from time step $\ell$ to $\ell+1$, for all $\ell$, we can readily obtain the probability for a path (see Appendix~\ref{s:chainMarkov})
\be\label{em:Pchain}
\mcP_{\rm ch}(\bx) = p_0(x_0)\mcP^+_{ 0}(x_1|x_0)\cdots\mcP^+_{n-1}(x_n|x_{n-1}).
\ee
This implies in particular that we can obtain the marginal $p_{\ell +1}$ from the previous marginal $p_\ell$ via a Markovian update 
\be\label{em:Markov_chain}
p_{\ell+1}(x_{\ell+1}) = \sum_{x_\ell} \mathcal{P}_{\ell}^+(x_{\ell +1}|x_\ell) p_{\ell}(x_\ell).
\ee
That is, via a linear transformation specified by kernels $\mathcal{P}_{\ell}^+(x_{\ell +1}|x_\ell)$ satisfying the Chapman-Kolmogorov equation---i.e., where the transition probability from $\ell$ to $\ell + 2$, for instance, can be written as 
\be
\mcP_{\ell+2 | \ell}^+(x_{\ell+2}| x_\ell) = \sum_{x_{\ell+1}}\mcP_{\ell + 1}^+(x_{\ell+2}| x_{\ell+1})\mcP_{\ell}^+(x_{\ell+1}| x_\ell).
\ee 

This Markov chain describes the external system in terms of an observer-independent initial state $p_\ell$ that evolves forward in time according to an observer-independent dynamical law $\mcP_{\ell}^+$. In this sense, it could be considered as a paradigmatic example of {\em linear} causality. 

\

\subsubsection{Circular causality and imaginary-time quantum dynamics}

In general, we cannot neglect the observer and we cannot write the probability of a closed path in terms of a Markov chain due to the loopy correlations. This implies in particular that we cannot obtain the marginal $p_{\ell+1}$ from the previous one $p_\ell$ via a Markovian update as above. Indeed, since conditioning on two variables, $x_0$ and $x_n$, turns the cycle into a chain on the remaining variables, $x_1,\dotsc , x_{n-1}$, it is possible to show that Eq.~\eqref{em:circular} can be written as (see Appendix~\ref{s:cycleBernstein})
\be\label{em:P_Bernstein}
\mcP(\bx) = p(x_0, x_n)\prod_{\ell = 0}^{n-2}\mcP_\ell^+(x_{\ell+1}|x_n,x_\ell),
\ee
which yields a Bernstein process where initial {\em and } final states must be specified~\cite{Zambrini-1987} (here the two-variable marginal $p$ and the transition probability $\mcP_\ell^+$, respectively, plays the role of $m$ and $h$ in Eq.~(2.7) therein). 

However, we can recover an {\em effective} Markovian-{\em like} update on configuration space if, instead of marginals, we consider (real) probability matrices. Indeed, if we relax the condition $x_0^\prime = x_0$ in Eq.~\eqref{em:circular} and marginalize all other variables, then we obtain a probability matrix $P_0(x_0^\prime , x_0) = \sum_{x_1,\dotsc , x_n}\mathcal{P}(\bx)$ whose diagonal $P_0(x_0 ,x_0)= p_0(x_0)$ yields the actual probabilities. So, interpreting factors as matrix elements, Eq.~\eqref{em:circular} yields $P_0 = \widetilde{F}_n\cdots F_1 F_0/Z$. Similarly, for $\ell=1$ we get $P_1=F_0\widetilde{F}_n\cdots F_1/Z$ and $P_1(x_1 , x_1)=p_1(x_1)$. Here we have removed the prime from $x_0$ in Eq.~\eqref{em:circular}, added a prime to $x_1$ in $F_0$, moved $F_0(x_1^\prime, x_0)$ to the beginning of Eq.~\eqref{em:circular}, and done the marginalization over all other variables, $P_1(x_1^\prime , x_1)=\sum_{x_0 , x_2,\dotsc , x_n}\mathcal{P}(\bx)$. 

So, we can obtain the probability matrix $P_1=F_0\widetilde{F}_n\cdots F_1/Z$ from the previous one, $P_0 = \widetilde{F}_n\cdots F_1 F_0/Z$, via the cyclic permutation of matrix $F_0$. Iterating this process $\ell$ times yields 
\be\label{em:matrix}
P_\ell = \frac{1}{Z} F_{\ell-1}\cdots F_1 F_0\widetilde{F}_n \cdots F_{\ell+1}F_\ell ,
\ee
where $P_\ell(x,x) = p_\ell(x)$. If $F_\ell$ is invertible we can write (for simplicity, we are assuming the case of mixed states in Eq.~\eqref{em:matrix}, since pure states would be associated to non-invertible matrices $F_\ell$---however, we can make a mixed state as close as we want to a pure state)
\be\label{em:P_l+1}
P_{\ell +1} = F_{\ell} P_{\ell} F_{\ell}^{-1},
\ee
for $\ell=0,\dotsc , n-1$. This is an effective Markovian-like update in that it yields $P_{\ell +1}$ via a linear transformation of $P_\ell$ alone, where the kernels $F_\ell$ satisfy the analogue of Chapman-Kolmogorov equation---i.e., the factor between time steps $\ell$ and $\ell +2$, for instance, can be written as $F_{\ell + 2 | \ell}\equiv F_{\ell +1} F_{\ell}$. In this sense, the Markovian-like update above could be considered a paradigmatic example circular causality.

This shows that we can effectively sidestep the circular causality entailed by the embodied scientist. In other words, we can indeed describe experiments in the traditional way, i.e., in terms of an external causal chain that seems to be independent of the observer (see Fig.~\ref{f:effective}A). However, the price to pay is that the state of such an external system has to be described in terms of probability matrices instead of probability vectors. The off-diagonal elements of such matrices contain relevant dynamical information since, if we neglect them, we cannot build $P_{\ell+1}$ from $P_\ell$ and $F_\ell$ alone. Much as in quantum physics, the diagonal elements of such probability matrices yield the actual probabilities to observe the system in a particular state. We will now see that such probability matrices follow an imaginary-time quantum dynamics, i.e., they satisfy von Neumann equation in imaginary-time. 

Indeed, when $\epsilon\to 0$, we can assume that variables $x_\ell$ and $x_{\ell +1}$ are typically close to each other. In other words, we can assume that 
\be\label{em:F=1+J}
F_\ell = \id + \epsilon J_\ell + O(\epsilon^2),
\ee
where $\id$ is the identity.  For discrete variables, the {\em dynamical matrix} $J_\ell$ has non-negative off-diagonal elements.  For continuous variables $J_\ell$ is actually an operator. For instance, for $\mcH_\ell$ in Eq.~\eqref{em:non-relativisticH} we have $J_\ell \to - H/\hobs$, when $\epsilon\to 0$, where
\be\label{em:H}
H = -\frac{\hobs^2}{2 m}\frac{\partial^2}{\partial x^2} +V(x) ,
\ee
is equivalent to the quantum Hamiltonian of a non-relativistic particle in a potential $V$, and $\hobs$ plays the role of Planck's constant. 

We can see this by applying the corresponding factor $F_\ell$ to a generic and well-behaved test function $g$, i.e.,
\be\label{em:test}
[F_\ell g](x) = \int F_\ell(x, x^\prime) g(x^\prime)\mathrm{d} x^\prime.
\ee
Introducing Eq.~\eqref{em:non-relativisticH} into Eq.~\eqref{em:F}, we have
\be\label{em:F_Gaussian}
F_\ell (x,x^\prime) = \tfrac{1}{\sqrt{2\pi\sigma^2}}e^{-\frac{1}{2\sigma^2}(x-x^\prime)^2}e^{-\frac{\epsilon}{2\Gamma} [V(x)+V(x^\prime)]},
\ee
where $\sigma^2 = \epsilon\hobs/m$ is the variance of the Gaussian factor.  
When $\epsilon\to 0$, this Gaussian factor 
is exponentially small except in the region where ${|x - x^\prime| = O(\sqrt{\hobs\epsilon/m})} $. So, we can estimate the integral in Eq.~\eqref{em:test} to first order in $\epsilon$ by expanding $g(x^\prime)$ around $x$ up to second order in ${x - x^\prime}$ and performing the corresponding Gaussian integrals. Consistent with this approximation to first order in $\epsilon$, we can also do $\exp{ [-V(y) \epsilon /2 \hobs]} = 1 - V(x)\epsilon /2 \hobs + O(\epsilon^2) $, for $y$ equal to either $x$ or $x^\prime$. This finally yields 
\be\label{em:g-Hg}
[F_\ell g](x) = g(x) -\epsilon H g(x)/\hobs + O(\epsilon^2),
\ee
i.e.,  $F_\ell = \id - \epsilon H/\hobs+ O(\epsilon^2)$, where $H$ is given by Eq.~\eqref{em:H} (see Appendix~\ref{s:derivationF=1+J}). So, for factors given by Eq.~\eqref{em:F_Gaussian} we have

Either way, whether the variables are discrete or continuous, introducing Eq.~\eqref{em:F=1+J} into Eq.~\eqref{em:P_l+1} yields
\be\label{em:real_vN}
\Delta {P}_{\ell} = \epsilon [J_\ell, P_\ell] + O(\epsilon^2),
\ee
where $\Delta P_{\ell} = P_{\ell + 1} - P_\ell$ and  
\be
[A, B] =  AB - BA,
\ee
is the commutator between operators $A$ and $B$. To obtain Eq.~\eqref{em:real_vN} we have taken into account that 
\be\label{em:F-1}
F_\ell^{-1} = \id -\epsilon J_\ell + O(\epsilon^2),
\ee
when $F_\ell$ is invertible. Dividing by $\epsilon$ and taking the continuous-time limit ($\epsilon\to 0$), Eq.~\eqref{em:real_vN} yields $\partial P/\partial t = [J, P]$, or 
\be\label{em:dPdt=[H,P]}
-\hobs\frac{\partial P}{\partial t} = [H, P],
\ee 
where $t=\ell\epsilon$. This is  von Neumann equation in imaginary time with $\Gamma$ playing the role of Planck's constant. Indeed, the von Neumann equation is given by 
\be\label{em:drhodt=[H,rho]}
i\hbar\frac{\partial \rho}{\partial t} = [H, \rho],
\ee
where $\rho$ is the density matrix and $i$ is the imaginary unit. Multiplying and dividing the left hand side of Eq.~\eqref{em:drhodt=[H,rho]} by $i$ yields $-\hbar\partial\rho/\partial (it) = [H,\rho]$, which is equivalent to Eq.~\eqref{em:dPdt=[H,P]} if we replace $i t$ by $t$ and $\hbar$ by $\hobs$. 


\

\subsubsection{Imaginary-time Schr\"odinger equation as belief propagation}\label{sm:imaginary-time-bp}
Here we show that the cavity method of statistical mechanics~\cite{Mezard-book-2009} (ch. 14) naturally leads to the imaginary-time versions of the wave function, $\psi(x)$, the Born rule, $p(x) = \psi(x)\psi^\ast(x)$,  and Schr\"odinger's equation
\be\label{em:idpsi/dt=Hpsi}
i\hbar\frac{\partial \psi}{\partial t} = H\psi,
\ee
which is equivalent to Von Neumann's equation, Eq.~\eqref{em:drhodt=[H,rho]}, for pure states, i.e. for $\rho(x, x^\prime)=\psi(x)\psi^\ast(x^\prime)$.

While the cavity method is not exact on cycles, here we will see that it remains exact when the particle is initially localized at position $x_0=x_0^\ast$, for some $x_0^\ast$, which is an instance of a pure state. This can be described by a factor of the form $F_0(x_1,x_0) = \delta(x_0-x_0^\ast) \overline{F}(x_1,x_0)$, where the Dirac delta function enforces the constraint $x_0=x_0^\ast$ and $\overline{F}(x_1,x_0)$ collects the remaining contributions to $F_0(x_1,x_0)$. In this case, the probability for the particle to be at position $x_\ell$ at time step $\ell$ is given by (cf. Eq.~(14.4) in Ref~\cite{Mezard-book-2009}; the factor $e^{\beta B \sigma_j}$ therein can be absorbed in the $\hat{\nu}$ messages)
\be\label{em:pl-bp}
\begin{split}
p_{\ell}(x_\ell) =& \int{\mcP}({\bx})\prod_{\ell^\prime\neq\ell}\mathrm{d}x_{\ell^\prime}
= \mu_{\to\ell}^\ast(x_\ell)\mu_{\ell\leftarrow}^{\ast}(x_\ell),
\end{split}
\ee
where $\mcP(\bx)$ is given by Eq.~\eqref{em:circular} and the integral is performed over variables $x_{\ell^\prime}$ for $\ell^\prime = 0, \dotsc \ell -1, \ell+1,\dotsc ,n$. Here the functions $\mu_{\to\ell}^\ast$ and $\mu_{\ell\leftarrow}^\ast$ collect the contributions for $\ell = 1,\dotsc , \ell -1$ and $\ell+1,\dotsc , n$, respectively, after integrating over $x_0$---the superindex $\ast$ reminds us that we have the constraint $x_0 = x_0^\ast$. They are given by (cf. Eq.~(14.2) in Ref.~\cite{Mezard-book-2009}; the messages therein are normalized differently)
\begin{widetext}
\BE
\mu_{\to\ell}^{\ast}(x_\ell) =& \frac{1}{\sqrt{Z}} \int F_{\ell-1}(x_\ell, x_{\ell-1})\cdots F_1(x_2,x_1) \overline{F}_{0}(x_1,x_0^\ast)\prod_{\ell^\prime = 1}^{\ell-1}\mathrm{d}x_{\ell^\prime},\label{em:it-bp->}\\
\mu_{\ell\leftarrow}^{\ast}(x_\ell) =& \frac{1}{\sqrt{Z}}\int\widetilde{F}_{n}(x_0^\ast, x_n) F_{n-1}(x_n, x_{n-1})\cdots F_{\ell}(x_{\ell + 1}, x_{\ell})\prod_{\ell^\prime=\ell+1}^{n}\mathrm{d}x_{\ell^\prime}, \label{em:it-bp<-}
\EE
\end{widetext}
for $\ell = 1, \dotsc , n$, with 
\BE
\mu_{\to 1}^{\ast}(x_1) =& \tfrac{1}{\sqrt{Z}}\overline{F}_0(x_1,x_0^\ast),\label{em:bp-init->}\\ 
\mu_{n\leftarrow}^{\ast}(x_n) =& \tfrac{1}{\sqrt{Z}}\widetilde{F}_n(x_0^\ast, x_n).\label{em:bp-init<-}
\EE
Here we have separated the normalization constant $Z$ into two contributions $\sqrt{Z}$, such that when we multiply the functions $\mu_{\to\ell}^{\ast}(x)$ and $\mu_{\ell\leftarrow}^{\ast}(x)$ we recover the full value of $Z$.

Equations~\eqref{em:it-bp->} and \eqref{em:it-bp<-} can be written recursively as
\BE
\mu_{\to\ell}^{\ast}(x_\ell) =& \int F_{\ell-1}(x_\ell, x_{\ell-1})\mu_{\to{\ell-1}}^{\ast}(x_{\ell-1})\mathrm{d}{x_{\ell-1}},\label{em:it-bp-recursion->}\\
\mu_{\ell\leftarrow}^{\ast}(x_\ell) =& \int \mu_{\ell+1\leftarrow}^{\ast}(x_{\ell+1}) F_{\ell}(x_{\ell + 1}, x_{\ell})\mathrm{d}x_{\ell+1},\label{em:it-bp-recursion<-}
\EE
for $\ell=2,\dotsc , n-1$, with initial conditions given by Eqs.~\eqref{em:bp-init->} and \eqref{em:bp-init<-}---although we can leave some factors $F_\ell$ as preparation of a more general initial probabilistic state. This can be seen by using Eqs.~\eqref{em:it-bp->} and \eqref{em:it-bp<-} to replace $\mu_{\to\ell-1}^\ast(x)$ and $\mu_{\ell +1\leftarrow}^\ast(x)$, respectively, in Eqs.~\eqref{em:it-bp-recursion->} and \eqref{em:it-bp-recursion<-}.
Equations~\eqref{em:it-bp-recursion->} and \eqref{em:it-bp-recursion<-} are an instance of the belief propagation algorithm derived via the cavity method (cf. Eq.~(14.5) in Ref~\cite{Mezard-book-2009}). 

Expanding the factors $F_{\ell-1}$ and $F_\ell$, respectively, in the integrals of Eqs.~\eqref{em:it-bp-recursion->} and \eqref{em:it-bp-recursion<-}---as we did for the factor in Eq.~\eqref{em:F_Gaussian}---we can obtain the analogue of Schr\"odinger's equation in imaginary time and its conjugate. Let us see this for the case of factors given by Eq.~\eqref{em:F_Gaussian}, which are symmetric. In this case, since the factors are symmetric, Eq.~\eqref{em:it-bp-recursion<-} can also be written as $\int  F_{\ell}(x_{\ell},x_{\ell + 1})\mu_{\ell+1\leftarrow}^{\ast}(x_{\ell+1})\mathrm{d}x_{\ell+1}$. So, both Eqs.~\eqref{em:it-bp-recursion->} and \eqref{em:it-bp-recursion<-} are instances of Eq.~\eqref{em:test} and, according to Eq.~\eqref{em:g-Hg}, we can write
\BE
\mu_{\to\ell}^{\ast}(x_\ell) =& \mu_{\to{\ell-1}}^{\ast}(x_{\ell})-\frac{\epsilon}{\hobs} H\mu_{\to{\ell-1}}^{\ast}(x_{\ell})\label{em:it-bp-test->}\\
\mu_{\ell\leftarrow}^{\ast}(x_\ell) =& \mu_{\ell+1\leftarrow}^{\ast}(x_{\ell}) -\frac{\epsilon}{\hobs}H\mu_{\ell+1\leftarrow}^{\ast}(x_{\ell}).\label{em:it-bp-test<-}
\EE

To take the continuous-time limit, $\epsilon\to 0$ with $t=\ell\epsilon$, let us write $\mu_\to^{\ast}(x,t)=\mu_{\to\ell}^{\ast}(x)$ and $\mu_\leftarrow^{\ast}(x,t)=\mu_{\ell\leftarrow}^{\ast}(x)$. Since the second terms in the right hand side of Eqs.~\eqref{em:it-bp-test->} and \eqref{em:it-bp-test<-} are already of order $\epsilon$ we can replace $\mu_{\to\ell-1}^{\ast}$ and $\mu_{\ell+1\leftarrow}^{\ast}$, respectively, by $\mu_{\to\ell}^{\ast}$ and $\mu_{\ell\leftarrow}^{\ast}$. Moving the first terms in the right hand side of Eqs.~\eqref{em:it-bp-test->} and \eqref{em:it-bp-test<-} to the left hand side, multiplying both sides by $-\hobs/\epsilon$, and taking the limit $\epsilon\to 0$ we get
\BE
-\hobs\frac{\partial\mu_{\to}^{\ast}(x,t)}{\partial t} &=& H\mu_{\to}^{\ast}(x,t),\label{em:itSch}\\
\hobs\frac{\partial\mu_{\leftarrow}^{\ast}(x,t)}{\partial t}  &=& H\mu_{\leftarrow}^{\ast}(x,t). \label{em:-itSch}
\EE

Equations~\eqref{em:itSch} and \eqref{em:-itSch} are the imaginary-time Schr\"odinger's equation and its adjoint, respectively, where $\hobs$, $\mu_{\to}^{\ast}$ and $\mu_{\leftarrow}^{\ast}$ play the role of Planck's constant, the imaginary-time wave function and its conjugate, respectively (cf. Eq.~\eqref{em:idpsi/dt=Hpsi}). Indeed, Eqs.~\eqref{em:itSch} and \eqref{em:-itSch} are formally analogous to Eqs. (2.1) and (2.17) in Ref.~\cite{Zambrini-1987}, where imaginary-time quantum dynamics is extensively discussed---the analogous of $\theta$ and $\theta^\ast$ therein are here $\mu_{\leftarrow}^\ast$ and $\mu_{\to}^\ast$, respectively. The imaginary-time analogue of the Born rule is naturally given by the continuous-time limit of Eq.~\eqref{em:pl-bp}, which yields $p(x,t) = \mu_\to^{\ast}(x,t)\mu_\leftarrow^{\ast}(x,t)$, where $p(x,t) = p_\ell(x)$ with $t=\ell\epsilon$.

\

\subsection{Example: Imaginary-time quantum interference in a {\em classical} two-slit experiment}\label{sm:two_slit}

\begin{figure}
\includegraphics[width=\columnwidth]{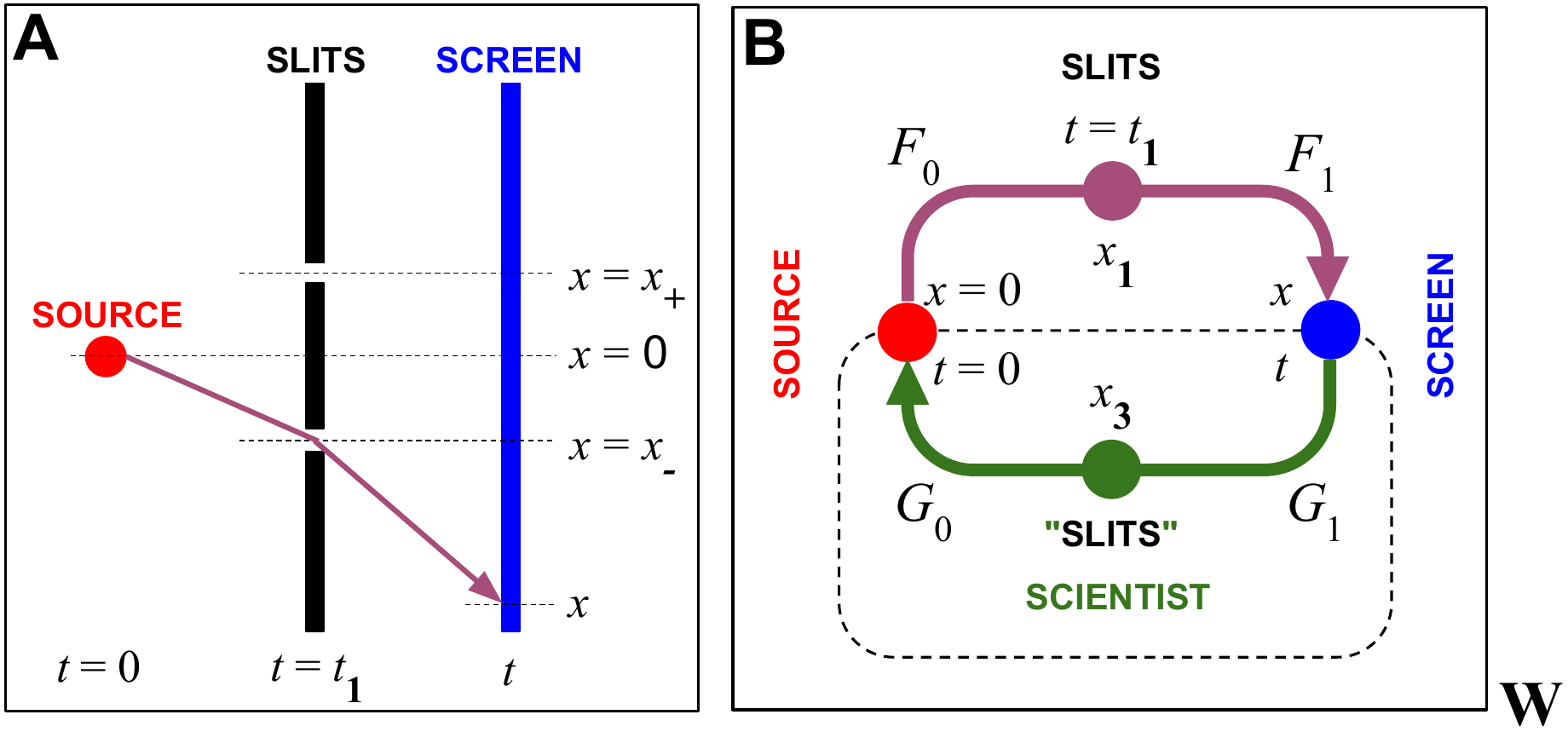}
\caption{{\em Two-slit experiment}: (A) A {\em classical} particle initially at (vertical) position $x=0$ at time $t=0$ goes at time $t = t_1$ through a barrier with two slits, located at positions $x_\pm = \pm x_{\rm slit}$, and hits a screen at a generic position $x$ at a generic time $t$. Slits can be open or closed. 
(B) Factor graph associated to a scientist performing a two-slit experiment. The real non-negative factors $F_0$ and $F_1$ capture the ``external'' dynamics between the source and the slits, and between the slits and the screen, respectively. Factors $G_0$ and $G_1$ captures the dynamics ``internal'' to the scientist. Here $x_3$ refers to the ``internal'' physical correlate of the position of the slits.  When only one slit is open, say the slit located at $x_+$, consistency requires that $x_1 = x_3 = x_+$. However, when both slits are open, $x_1$ does not have to equal $x_3$ anymore, which yields the imaginary-time version of quantum interference.  } \label{f:slits}
\end{figure}

Here we consider the specific instance of a scientist doing a {\em classical} two-slit experiment (see Fig.~\ref{f:slits}). We will see that it coincides with the imaginary-time version of the standard quantum two-slit experiment. 
Consider a path $\widetilde{\bx} = (x_0, x_1, x_2, x_3)$ that starts at the source located at $x_0=0$ at $t_0=0$ and goes through $x_1\in\{x_+ , x_-\}$, $x_2=x$, and $x_3$ at times $t_1$, $t_2=t$, and $t_3$, respectively, to return to $x_4 = x_0=0$. Here $x_1$ describes the slits and $x_3$ describes the physical correlates of the slits ``internal'' to the scientist in Fig.~\ref{f:slits}, so $x_3\in\{x_+, x_-\}$ too. With this notation, following Eq.~\eqref{em:circular}, the probability associated to the path $\widetilde{\bx}$ when both slits are open is
%
\be\label{e:Pfactor-slits}
\widetilde{\mcP}(\widetilde{\bx}) = f(\widetilde{\bx})\delta(x_0) b(x_1)b(x_3),
\ee
%
where 
\be\label{e:ffactors}
f(\widetilde{\bx})=G_0(x_0, x_3 )G_1(x_3 , x_2) F_1 (x_2, x_1) F_0 (x_1 , x_0) 
\ee
includes the factors describing the dynamics internal and external to the scientist. Furthermore, 
\be
b(x^\prime) = \delta(x^\prime - x_+) +\delta(x^\prime - x_-),
\ee
is included to enforce the constrains that $x_1,x_3\in\{x_+,x_-\}$. Additionally, the Dirac delta function $\delta (x_0)$ is included to enforce the constrain $x_0=0$.  In Eq.~\eqref{e:Pfactor-slits} the normalization constant has been absorbed in the factors $F_\ell$ and $G_\ell$, for $\ell = 0,1$.

If one of the two slits is closed, the particle can only go through one of them at time $t_1$. That is, $x_1 = x_\pm$ where the upper and lower sign denotes the situation where the particle goes through the upper and lower slit, respectively. In this case, the physical correlate of the slit internal to the scientist also has to equal $x_\pm$, i.e., $x_3=x_\pm$. To account for this, we have to incorporate two Dirac delta functions $\delta(x_1-x_\pm)$ and $\delta(x_3-x_\pm)$, respectively, instead of the terms $b(x_1)$ and $b(x_3)$ in Eq.~\eqref{e:Pfactor-slits}. Thus, the probability associated to the path $\widetilde{\bx}$ when only one slit is open is
\be
\widetilde{\mcP}_\pm(\widetilde{\bx})= \frac{1}{\mathcal{Z}_\pm}f(\widetilde{\bx})\delta(x_0)\delta(x_1-x_\pm)\delta(x_3-x_\pm), 
\ee
where $\mathcal{Z}_\pm$ is included to guarantee the normalization of $\widetilde{\mcP}_\pm(\widetilde{\bx})$.

According to this, the probability to find the particle at position $x_2 = x$ at time $t_2=t$ is given by
\be\label{e:im-both}
\begin{split}
p_\pm(x, t)\equiv \int\widetilde{\mcP}_\pm(\widetilde{\bx})\mathrm{d} x_0 \mathrm{d} x_1\mathrm{d} x_3= \tfrac{1}{Z_{\rm one}}f(0, x_\pm, x, x_\pm ), 
\end{split}
\ee
where 
\be
Z_{\rm one} = {Z_\pm}  \equiv \int f(0, x_\pm, x, x_\pm )\mathrm{d} x ,
\ee
ensures that $p_\pm$ is normalized. Due to the symmetry between slits, we have that $Z_\pm = Z_{\rm one} $ is the same for both $x_{\pm}$. 

Following Eqs.~\eqref{e:ffactors} and \eqref{e:im-both} we can write 
\BE
\tfrac{1}{\sqrt{Z_{\rm one}}}F_0 (x_\pm , 0) F_1(x, x_\pm)  &=& \sqrt{p_{\pm}(x, t)}e^{S_\pm(x,t)},\label{e:im-state}\\
\tfrac{1}{\sqrt{Z_{\rm one}}}G_1(x_\pm , x) G_0(0, x_\pm ) &=& \sqrt{p_{\pm}(x, t)}e^{-S_\pm(x,t)},\label{e:im-stateG}
\EE
which defines the imaginary-time phases $S_\pm$, i.e., 
\be\label{e:im-phase-def}
S_\pm(x,t)=\tfrac{1}{2}\log\frac{F_0 (x_\pm , 0) F_1(x, x_\pm)}{G_1(x_\pm , x) G_0(0, x_\pm )}.
\ee

If the two slits are open, instead, then the probability for the particle to be located at position $x_2 = x$ at time $t_2 = t$ is given by (see Eq.~\eqref{e:Pfactor-slits})
%
\be\label{e:im-interf}
\begin{split}
p_{\rm both}(x , t) &\equiv \int \widetilde{\mcP}(\widetilde{\bx})\mathrm{d}x_0\mathrm{d}x_1 \mathrm{d}x_3\\
&=\tfrac{1}{Z_{\rm both}}\sum_{x_1, x_3\in\{x_+ , x_-\}} f(0, x_1, x, x_3 ) \\ 
&= C \left\{\frac{1}{2}\left[p_+(x, t)  + p_-(x, t)\right] + \mathcal{I}\right\}\\
\end{split}
\ee
%
where $Z_{\rm both}$ is a normalization constant, $C = {2 Z_{\rm one}}/{Z_{\rm both}}$, and 
\be\label{em:Iim}
\mathcal{I} = \sqrt{p_+(x,t) p_-(x, t)}\cosh\left[\Delta S(x, t)\right] 
\ee
with
\be
\Delta S(x, t) = S_+(x,t) - S_-(x,t).
\ee
The first two terms in the second line of Eq.~\eqref{e:im-interf} come from the elements of the sum with $x_1 = x_3 = x_\pm$ (see Eq.~\eqref{e:im-both}).  The last term in the second line of Eq.~\eqref{e:im-interf}, which is the imaginary-time interference term, comes from the elements of the sum with $x_1\neq x_3$, i.e. (see Eq.~\eqref{e:ffactors}), 
%
\BE\label{e:im-interf-term}
f(0, x_+, x, x_- ) &=& Z_{\rm one}\sqrt{p_+(x, t)p_-(x,t)}e^{\Delta S} ,\\
f(0, x_-, x, x_+ )&=& Z_{\rm one}\sqrt{p_+(x, t)p_-(x,t)}e^{-\Delta S}.
\EE
%
The right hand side of these equations is obtained by using Eqs.~\eqref{e:im-state} and \eqref{e:im-stateG}. 

According to Eq.~\eqref{e:im-interf}, the sum rule of classical probability theory, i.e., 
\be\label{e:P!=P+P}
p_{\rm both}(x, t)\neq \frac{1}{2}\left[p_+(x, t) +p_-(x, t) \right].
\ee
is not satisfied in this context. Here the factor $1/2$ yields te probability for the particle to go through one of the slits. The terms $p_-(x)$ and $p_+(x)$, respectively, yields the conditional probability for the particle to hit the screen at position $x$, given that it passes through the slit at position $x_-$ or $x_+$, respectively.

So, this is analogous to the phenomenon of quantum interference. Indeed, if we change the hyperbolic cosine in Eq.~\eqref{em:Iim} for a cosine, i.e., doing $\cosh\to \cos$, Eq.~\eqref{e:im-interf} coincides exactly with the formula for the actual quantum two-slit experiment. 
In other words, Eq.~\eqref{e:im-interf} describes the imaginary-time version of quantum interference. 

Importantly, if there are no internal dynamics, i.e., if there are no factors $G_\ell$, there is no interference. This is because, in such a case, there would be no variables $x_3$ and so no contributions $x_3\neq x_1$.

\

\subsection{A few comments}\label{sm:fewcomm}
To recap, we have shown that scientists can obtain a seemingly observer-independent view of the world at the price of describing it in terms of a real probability matrix that follows an imaginary-time quantum dynamics. This suggests the following conjecture or observer's property:

\

\begin{quote}
\noindent {\bf O1. Embodiment:} The state of an observer explicitly described as a physical object, i.e., an ``observer-as-object,'' (see Sec.~\ref{sm:others...} below) interacting with an experimental system is given by a real probability matrix $P_\ell$. The diagonal elements of $P_\ell$ are the probabilities for the different outcomes of the experiment to occur. $P_\ell$ follows a dynamics given by (see Eq.~\eqref{em:real_vN})
\be\label{em:O1-embodiment}
\Delta P_\ell = \epsilon [J_\ell, P_\ell],
\ee
where $\Delta P_\ell = P_{\ell +1}-P_\ell$, $\ell$ denotes the time step and $\epsilon$ is the time step size.
\end{quote}

Alternatively, scientists can also describe the world in terms of forward and backward BP cavity messages, which are formally analogous to imaginary-time wave functions and their conjugates. These forward and backward BP cavity messages, respectively, follow a BP dynamics described by the imaginary-time Schr\"odiner equation and its conjugate. However, we here focus on probability matrices because these directly yield probabilistic information, unlike BP messages that must be multiplied by another object---its ``conjugate''---to do so. 

We take this as an conjecture here, not as a given, because it is not clear at this point that our analysis applies to generic initial states. Indeed, according to Eq.~\eqref{em:matrix}, given a set of factors characterizing the internal and external dynamics, the initial state is restricted to $P_0=\widetilde{F}_n F_{n-1}\cdots F_0$. Of course, there is some freedom to choose the initial state if some of the factors are considered as preparing the initial state, but it is not clear that this freedom is enough to cover all possible initial states.

Of course, scientists can also use standard probability distributions, if they prefer so. However, they cannot do it in terms of a single-variable marginal, on variables $x_\ell$, evolving according to a Markovian rule. Scientists would have to use a Bernstein process instead and would have to specify a two-variable marginal as initial state (see Eq.~\eqref{em:P_Bernstein})---e.g., the probability for the initial and final states to have a certain value. 

We have also shown how factor graph models with circular topology can be naturally described in terms of Markov-like chains formally analogous to imaginary-time quantum dynamics. These Markov-like chains are similar to the standard Markov chains, which naturally describe factor graph models with linear topology, except that the state of the system is described in terms of probability matrices rather than standard probability vectors. In this sense, Markov chains and imaginary-time quantum dynamics could be considered as instances of linear and circular causality, respectively.

Interestingly, imaginary-time quantum dynamics already displays some quantum-like features~\cite{Zambrini-1987}. For instance, using our framework we have shown that a {\em classical} two-slit experiment can entail constructive interference. 
This provides a fresh perspective to think about quantum interference. When an embodied observer has information about which slit the particle goes through---e.g., when only one slit is open---this has to be reflected in the physical correlates of the experiment ``internal'' to her. So, the variable $x_1$ describing the slit, which is external to the observer, and the variable $x_3$, which is the corresponding physical correlate internal to the observer, must be equal, i.e., $x_3 = x_1$ (see Fig.~\ref{f:slits}). In this view, the imaginary-time version of quantum interference arises because, when an embodied observer cannot access any information about which slit the particle goes through, the values of $x_1$ and $x_3$ do not have to coincide even though they refer to the same ``thing'' (i.e., the slits). This would be true no matter the reason for which the embodied observer lacks ``which-way'' information. Furthermore, our approach considers the whole experimental setup, or context, from beginning to end. So, it could also potentially take account of variations of the two-slit experiments, such as delay choice or quantum erasure experiments. 

Following the tradition in physics, we have focused on the dynamics external to the scientist. However, similar results can be obtained if, following the tradition in cognitive science, we focus on the dynamics internal to the scientist, instead, and consider the external system as hidden to her---this could be done by marginalizing $\widetilde{\mcP}$ in Eq.~\eqref{em:wholeP} over the external paths instead of marginalizing it over the internal paths as we have done. That is, we could obtain an equation from $\ell=n\dotsc , k-1$, similar to Eq.~\eqref{em:P_l+1}, if we focus on the physical correlates of the experimental system, which are internal to the observer, rather than on the experimental system itself, which is external to the observer.

Up to now we have focused on the case of a well-known so-called ``stoquastic'' Hamiltonian $H$ (see Eq.~\eqref{em:H}. 
However, we will describe more general Hamiltonians later on. 

It seems natural to wonder what about actual, real-time quantum dynamics. Up to now we have considered a scientist $F$ performing an experiment with a generic physical system $S$. However, $F$ is also a physical system. So, the combined system $S_{\rm coup} = F+S$ that we have modeled is also a physical system. In our previous analysis $S_{\rm coup}$ appears as an observer-independent physical system. Consistency would require that such an observer also be included in the action-perception loop. It appears that properly dealing with this situation manifests aspects of a real-time quantum dynamics. We will discuss this in Sec.~\ref{sm:first}.

\

\section{Reflexivity and real-time quantum dynamics}\label{sm:first}

According to the analysis in Sec.~\ref{sm:third}, summarized in observer property {\bf O1} (see Sec.~\ref{sm:fewcomm}), scientists can in principle ``escape'' the action-perception loop and describe classical systems in an {\em effective}, seemingly observer-independent way by describing such systems in terms of real-valued probability matrices that follow an imaginary-time quantum dynamics. This suggests that the observer might indeed be key to the quantum formalism, as emphasized in QBism~\cite{debrota2018faqbism,mermin2018making,fuchs2014introduction}. However, the actual quantum formalism does not take place in imaginary time, but in real time. 

Here we explore how real-time quantum dynamics might relate to our model of embodied scientists doing experiments. To do so, we first rewrite von Neumann equation, Eq.~\eqref{em:drhodt=[H,rho]}, which is complex valued, as a pair of real equations related to its imaginary and real parts. We then show that these equations are related to the equations associated to an embodied scientist doing an experiment, Eq.~\eqref{em:dPdt=[H,P]} and its transposed via a ``swap''operation. We also show how an analogous structure can naturally arise when two mirrors mutually reflect each other. 

This suggests that we might obtain real-time quantum dynamics by reflexively coupling two (sets of) observers mutually observing each other. In this case, observers are relative to each other rather than observer-independent or relative to an external, unacknowledged observer. We explore this in the rest of this section. Based on an analogy with reflexive systems, we conjecture some properties characterizing an observer, and we show that these lead to a dynamics with aspects of a genuine, real-time quantum dynamics---we refer to these conjectures collectively as the {\em reflexive coupling hypothesis}. 

We emphasize that this section is not to be considered as a rigorous derivation of real-time quantum dynamics from reflexivity. The reason is that, reflexivity being a subtle and scarcely studied subject, the connection between the ideas of reflexivity and the conjectures we introduce may not be completely clear. We introduce this section in this work because we find it conceptually plausible and we hope it can suggest future research on potential connections between the philosophy of mind and quantum physics. The literature on reflexive systems is rather scarce and we hope that an interdisciplinary approach to this topic could help further clarify or improve the conjectures we introduce here.

\

\subsection{Von Neumann equation as a pair of real equations}\label{sm:swap}

Actual quantum systems are generally described by a complex-valued density matrix $\rho$ satisfying the von Neumann equation, Eq.~\eqref{em:drhodt=[H,rho]}. In order to explore how the actual quantum dynamics relates to the imaginary-time quantum dynamics that we have obtained, which is real-valued, here we will separate its real and imaginary parts. To do so, we use the fact that the density matrix and the Hamiltonian are Hermitian operators, i.e., they are equal to their adjoints: $\rho^\dagger =\rho$ and $H^\dagger = H$. We will focus here on the common case where the adjoint operation $\dagger$ is given by the combination of transpose $T$ and complex conjugate $\ast$ operations, e.g., $[\rho^\dagger] (x, x^\prime) = [\rho(x^\prime , x)]^\ast$. In this case we can write 
\be\label{em:rhomapP}
\rho = P_s + P_a/ i,
\ee
where $P_s = P_s^T$ and $P_a = -P_a^T$ are, respectively, some generic real-valued symmetric and antisymmetric matrices. Without loss of generality, we can write $P_s = (P+P^T)/2$ and $P_a = (P-P^T)/2$ as the symmetric and antisymmetric parts of a generic real matrix $P$. Since the diagonal elements of $P$ and $P^T$ are the same, these are equal to the diagonal elements of $\rho$. That is, the diagonal elements of $P$ are the actual probabilities encoded in $\rho$. So, when the off-diagonal elements of $P$ are non-negative, $P$ is a probability matrix like those we use in Sec.~\ref{sm:third}.

We can write an equation for the Hamiltonian similar to Eq.~\eqref{em:rhomapP}, i.e., 
\be\label{em:HmapJ}
H = -\hbar J_s -\hbar J_a/ i,
\ee
where $J_s = (J+J^T)/2$ and $J_a = (J-J^T)/2$ are the symmetric and antisymmetric parts of a generic real matrix $J$. We write $H$ in terms of $J$ so we do not have to worry about $\hbar$ below. When $J$ can be interpreted in probabilistic terms as above, e.g., when $H$ is given by Eq.~\eqref{em:H}, it is a {\em dynamical matrix}, with non-negative off-diagonal entries, much like those we used in Sec.~\eqref{sm:third}. This suggests that $P$ and $J$ might actually be the most suitable objects to explore the potential relationship between genuine, real-time quantum dynamics and our framework. 

Introducing $\rho = P_s + P_a/ i$ and $H = -\hbar J_s -\hbar J_a/ i$ in von Neumann equation, Eq.~\eqref{em:drhodt=[H,rho]}, and separating the real and imaginary parts, we get a pair of real-valued equations for $P_s$ and $P_a$, i.e. 
\BE
\frac{\partial P_s}{\partial t} &=& [J_s, P_a] + [J_a, P_s],\label{em:vN-Ps}\\
\frac{\partial P_a}{\partial t} &=& -[J_s, P_s] + [J_a, P_a].\label{em:vN-Pa}
\EE 
Adding and subtracting Eqs.~\eqref{em:vN-Ps} and \eqref{em:vN-Pa} we get an equivalent pair of equations for $P$ and $P^T$, i.e.,
\BE
\frac{\partial P}{\partial t} &=& -[J_s, P^T] + [J_a, P],\label{em:vN-1}\\
\frac{\partial P^T}{\partial t} &=& [J_s, P] + [J_a, P^T].\label{em:vN-2}
\EE 
Notice that Eq.~\eqref{em:vN-2} is the transposed of Eq.~\eqref{em:vN-1}.

If the terms $-[J_s, P^T]$ in Eq.~\eqref{em:vN-1} and $[J_s, P]$ in Eq.~\eqref{em:vN-2} were swapped, Eqs.~\eqref{em:vN-1} and \eqref{em:vN-2} would become
\BE
\frac{\partial P_{\rm swap}}{\partial t} &=& [J, P_{\rm swap}] ,\label{em:no-vN-1}\\
\frac{\partial P^T_{\rm swap}}{\partial t} &=& -[J^T, P^T_{\rm swap}] ,\label{em:no-vN-2}
\EE 
since $J=J_s + J_a$, which are the imaginary-time von Neumann equation, Eq.~\eqref{em:dPdt=[H,P]}, and its transpose. According to our previous results, the former can be associated to an embodied scientist doing an experiment and the latter to the time reversal process.

So, it seems that this swap operation might help us bridge our approach with real-time quantum mechanics. For simplicity, since this swap operation only involves the terms with $J_s$, we will first focus on these terms by doing $J_a = 0$ for now. So, taking $J_a = 0$, discretizing Eqs.~\eqref{em:no-vN-1} and \eqref{em:no-vN-2} for convenience and dropping the subindex ``swap'', we have
\BE
\Delta P_\ell &=& \epsilon [J_{s,\ell}, P_\ell] ,\label{em:Pnoswap}\\
\Delta Q_\ell &=& -\epsilon [J_{s,\ell}, Q_\ell] ,\label{em:PTnoswap}
\EE 
where $Q_\ell = P_\ell^T$. Intuitively, Eqs.~\eqref{em:Pnoswap} and \eqref{em:PTnoswap} are related by a time-reversal, $\epsilon\to -\epsilon$---i.e., if Eq.~\eqref{em:Pnoswap} describes the circular process in Fig.~\ref{f:circular_noRQM}C in a clockwise direction, then Eq.~\eqref{em:PTnoswap} will describe it in a counter-clockwise direction. We will explore this further later on. 

\

\subsection{Reflexive coupling: An analogy with mirrors and video feedback}

\subsubsection{An analogy with mirrors}

\begin{figure*}
\includegraphics[width = \textwidth]{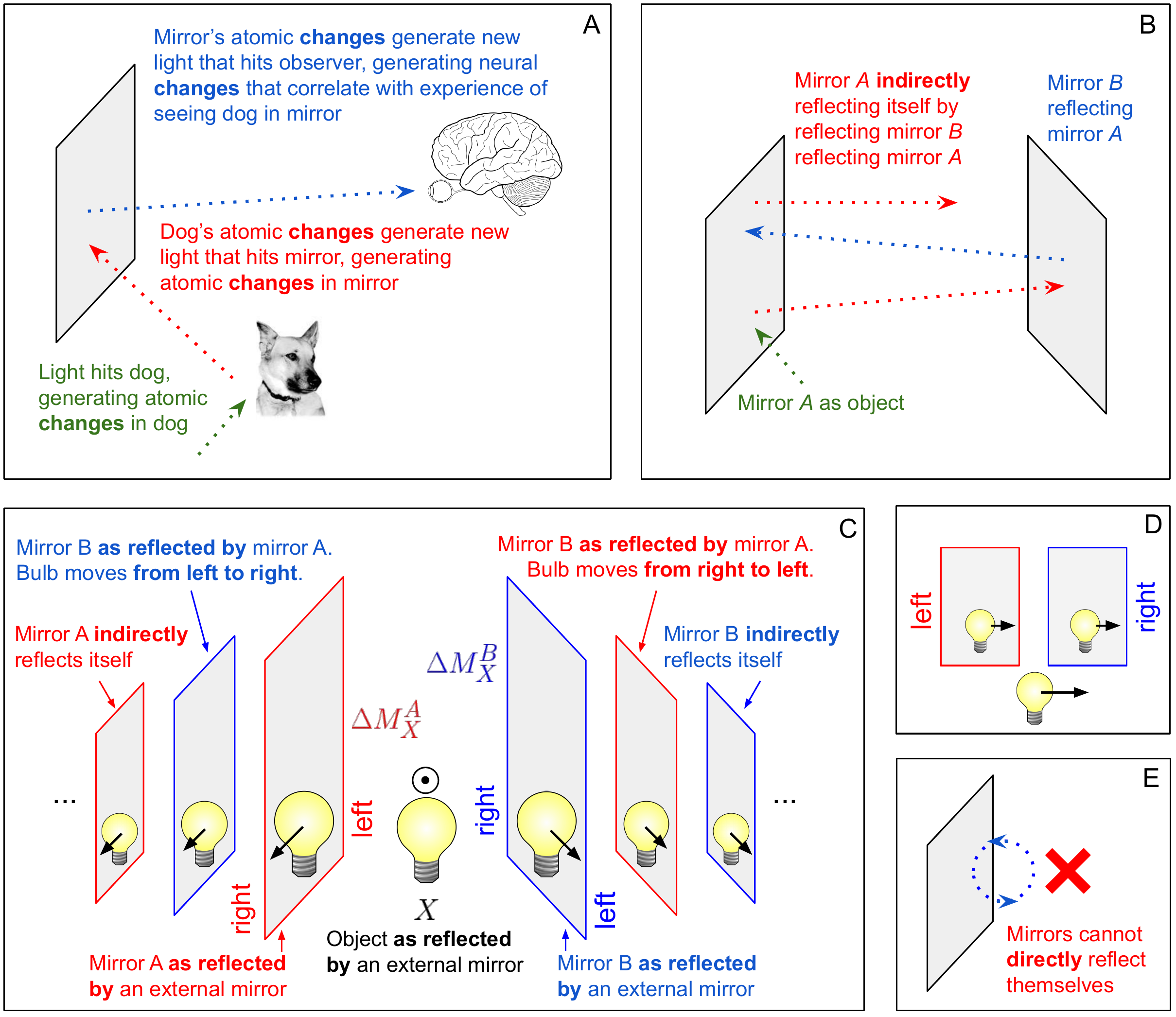}
\caption{{\em Some properties of mirrors:} (A) How do mirrors reflect objects?
When light hits an arbitrary object, say a dog, physical {\em changes} in the dog induce physical {\em changes} in the mirror which, through the observer's eyes, induces {\em changes} in the observer's neural patterns. These neural changes correlate with the experience of seeing the dog reflected in the mirror (see text). 
(B) Mirrors can {\em indirectly} reflect themselves. (C) If there is an object in between the two mutually reflecting mirrors (here a light bulb), they can reflect that object as well as each other reflecting that object. Mirrors $A$ and $B$ as well as the light bulb could also be a reflection in other mirror---say, the black rectangle in which the figure is framed. Importantly, if the object is moving, say, towards outside the page (denoted here by $\bigodot$) then it would appear to the (red) mirror $A$ as moving from left to right and to the (blue) mirror $B$ as moving from right to left instead. We refer to this as an ``apparent time reversal.'' (D) If there is no reflexive coupling between mirrors, i.e., if the mirrors are not facing each other, but are rather parallel to each other, there is no ``apparent time reversal'' as they reflect the object moving in the same direction. (E) Mirrors cannot {\em directly} reflect themselves. }
\label{f:mirror}
\end{figure*}

We have seen that a swap operation turns genuine real-time quantum dynamics, given by Eqs.~\eqref{em:vN-1} and \eqref{em:vN-2}, into two independent imaginary-time quantum dynamics, given by Eqs.~\eqref{em:no-vN-1} and Eq.~\eqref{em:no-vN-2} which are analogous to the equation describing an embodied observer, Eq.~\eqref{em:dPdt=[H,P]}, and its transposed. A similar structure appears when studying reflexive systems such as mirrors. It will then be useful to discuss some aspects of mirror reflection. 

In this analogy with mirrors, ``reflection'' is analogous to ``experience'' or ``observation''. That is, mirrors reflecting objects are analogous to observers experiencing or observing phenomena. In a sense, observers could also ``reflect'' phenomena by communicating them either through language at the conscious level, which is supported by physical processes such as air vibration patterns or ink patterns on paper, or through bioelectric signals at the unconscious level. Untrained observers and trained scientists, respectively, would be analogous to stained and stainless mirrors. We will focus on the latter.

Let us begin with some basic physics of mirror reflection. Consider the process of seeing an arbitrary object on a mirror---say, a dog (see Fig.~\ref{f:mirror}A). When light hits a dog, it induces some physical changes in the dog. Those changes can be energy changes---e.g., some atoms of the dog's body absorb photons and get excited. Such changes produce other physical changes---e.g., the dog's excited atoms can emit photons and relax. These changes in turn generates new light that hits the mirror inducing some physical changes---e.g., the mirror's silver atoms absorb photons, get excited, and then re-emit those photons. Those changes produce more light which can hit the observer's eyes producing some physical changes in the eye's atoms. Finally, these changes can induce changes in the patterns of neural activity, changes that correlate with the experience of seeing the dog reflected on the mirror.

In summary, physical {\em changes} in the dog induce physical {\em changes} in the mirror which, through the observer's eyes, induces {\em changes} in the observer's neural patterns. These neural changes correlate with the experience of seeing the dog reflected in the mirror. In line with this, the first aspect of mirror reflection that we want to consider is the following:

\begin{quote}
{\bf M1. Mirrors reflect objects via physical changes---internal and external.} A mirror reflects an object via the physical changes that light induces on the object. These changes, which are {\em external} to the mirror, generate light that induces physical changes on the mirror. These new changes, which are {\em internal} to the mirror, in turn generate light that induces physical changes on the observer. These latter changes correlate with the experience of seeing the object reflected on the mirror (see Fig.~\ref{f:mirror}A). In directly reflecting an object, the mirror engages two rays of light: one incoming and one outgoing (red and blue arrows in Fig.~\ref{f:mirror}A, respectively).
\end{quote}

Accordingly, we will denote an arbitrary mirror $A$ reflecting a generic object $X$ as 
\be\label{em:dMAdef}
\Delta M^A = \obs[A]{\Delta O^X}\equiv\boxed{\Delta O^X}_A,
\ee
to emphasize that mirror reflection takes place via {\em changes}. Here the rectangle with subindex $A$ denotes mirror $A$. To avoid cluttering the paragraphs with rectangles, we have also introduced the alternative notation of a vertical bar with subindex $A$ (i.e., $\obs[A]{}$) to denote reflection in mirror $A$. We will use the same vertical bar to denote the analogue of reflection in other analogies, i.e., experience or observation in the case of observers, and recording-and-displaying in the case of video-systems.

So, mirrors can take objects as ``input,'' so to speak, and output their reflection. Importantly, an arbitrary mirror $A$ can also be the object being reflected by another arbitrary mirror $B\neq A$---this is denoted as $\Delta M^B = \obs[B]{\Delta M^A}$. In this sense, mirrors are analogous to computer programs or Turing machines. Indeed, computer programs not only take data as input, process them, but they (or, more precisely, their code) can also be the data that other computer programs take as the input to process. The former is an {\em active} role in that mirrors and computer programs perform a function---i.e., to reflect objects and to process data, respectively. The latter is a {\em passive} role in that mirrors and computer programs are just objects and data, respectively, that do {\em not} perform any function at all---rather, a function is performed on them. We can summarize this property of mirrors as follows:

\begin{quote}
{\bf M2. Mirrors play both active and passive roles.} A mirror can both reflect other objects and be the object reflected by other mirrors. The latter is a {\em passive} role: mirror as object. The former is an {\em active} role: mirror as object-reflecting ``subject,'' so to speak---here ``subject'' is used in a strict technical sense as the opposite of object. 
\end{quote}

It is well known that two mirrors can recursively reflect each other, producing a so-called ``infinite mirror'' effect (see Fig.~\ref{f:mirror}B). If there is an object in between the two mirrors, each mirror will reflect both the object and the other mirror reflecting the object (see Fig.~\ref{f:mirror}C). In this way, mirrors can {\em indirectly} reflect themselves, even though they cannot {\em directly} do so (see Fig.~\ref{f:mirror}E). Importantly, the infinite mirror effect appears due to the size distortion, each image appearing smaller than the previous one in the recursion. We do not expect a similar distortion to occur in the case of observers. 

This bring us to the next two properties of mirrors that we want to highlight:

\begin{quote}
{\bf M3. Mirrors can indirectly reflect themselves by reflexively coupling to other mirrors.} According to {\bf M1}, physical changes in mirror $A$ can induce changes in mirror $B\neq A$, which can in turn induce further changes in mirror $A$. At this point, mirror $A$ is reflecting an image of itself. However, in contrast to the {\em direct} reflection described in {\bf M1}, here mirror $A$ engages {\em four} rays of light instead of two: two incoming and two outgoing (see Fig.~\ref{f:mirror}B). In this sense, mirror $A$ {\em indirectly} reflects an image of itself. Moreover, this process of mutually reflecting the physical changes of each other continues {\em ad infinitum}, producing the well-known ``infinite mirror'' effect---this effect is due to size distortion and is not expected to occur in the case of observers. Importantly, each of these changes is external to one mirror and internal to the other; none of these changes can be both internal and external to the very same mirror (see Fig.~\ref{f:mirror}B). If there is an object in between the two mirrors, each mirror will reflect both the object and the other mirror reflecting that same object (see Fig.~\ref{f:mirror}C).
\end{quote}

\begin{quote}
{\bf M4. Mirrors cannot directly reflect themselves.} This would imply that the associated physical changes can simultaneously be both internal and external to the very same mirror, contradicting the situation described in {\bf M3} (see Fig.~\ref{f:mirror}E). Similarly, a mirror cannot directly reflect its own internal changes because these are the very changes that allow the mirror to reflect any external changes at all---i.e., changes induced by light on objects external to the mirror. There is nothing mystical about this. This does not imply that the mirror is not a physical object; of course, it is. It only implies that the mirror's internal changes are fundamentally inaccessible {\em to the mirror itself}, they will always remain implicit---of course, these changes can be accessed and reflected by {\em another} mirror (see Fig.~\ref{f:mirror} B).  
\end{quote}

Property {\bf M4} suggests that every perspective has a blindspot. The situation is analogous to that of an eye that can directly ``see'' any objects in its visual field, but it cannot {\em directly} see itself---of course, it can ``see'' an image of itself, say, in a mirror.

To formalize the situation illustrated in Fig.~\ref{f:mirror}C, consider two mirrors $A$ and $B$ which are reflected by other mirrors, say $W$ and $W^\prime$, respectively. That is,
\BE
\Delta M^{W} &=&\obs[W]{\Delta M^B}\equiv\boxed{\Delta M^B}_{W},\label{em:MW<-MB}\\
\Delta M^{W^\prime} &=& \obs[W^\prime]{\Delta M^A}\equiv\boxed{\Delta M^{A}}_{W^\prime}.\label{em:MW'<-MA}
\EE 
In the particular case in which $W=A$ and $W^\prime = B$, Eqs.~\eqref{em:MW<-MB} and \eqref{em:MW'<-MA} describe the situation illustrated in Fig.~\ref{f:mirror}C when mirrors $A$ and $B$ mutually reflect each other, i.e.,  
\BE
\Delta M^{A} &=&\obs[A]{\Delta M^B}\equiv \boxed{\Delta M^B}_A,\label{em:MA<-MB}\\
\Delta M^B &=& \obs[B]{\Delta M^A}\equiv \boxed{\Delta M^A}_{B}.\label{em:MB<-MA}
\EE 
Equation~\eqref{em:MB<-MA} indicates that mirror $B$ is reflecting mirror $A$. Introducing this into Eq.~\eqref{em:MA<-MB} yields $\Delta M^{A}= \obs[A]{\obs[B]{\Delta M^{A}}}$. Similarly, Eq.~\eqref{em:MA<-MB} indicates that mirror $A$ is reflecting mirror $B$. Introducing this into Eq.~\eqref{em:MB<-MA} yields $\Delta M^{B}= \obs[B]{\obs[A]{\Delta M^{B}}}$. Continuing this process recursively yields the infinite-mirror effect, i.e., 
\BE
\Delta M^A &=& \boxed{\boxed{\boxed{\dots}_A}_B}_A ,\label{em:MB<-MA<-}\\
\Delta M^B &=&\boxed{\boxed{\boxed{\dots}_B}_A}_B .\label{em:MA<-MB<-}
\EE 

Importantly, Eqs.~\eqref{em:MB<-MA<-} and \eqref{em:MA<-MB<-} seems to describe a situation of two empty mirrors reflecting each other, but no other generic objects like the light bulb in Fig.~\ref{f:mirror}C. However, this is not so. It appears to be so because  we have left the other generic object implicit to avoid cluttering the notation. 

It will be useful to explicitly consider the case shown in Fig.~\ref{f:mirror}C of two mirrors reflecting an object moving parallel to the mirrors towards outside the page (denoted in the figure by $\bigodot$). If mirror $A$ reflects the object as moving from its left to its right (see Fig.~\ref{f:mirror}C), then mirror $B$ reflects the object as moving in an opposite direction, i.e., from its right to its left. In this case, Eqs.~\eqref{em:MB<-MA<-} and \eqref{em:MA<-MB<-} can be written more precisely as
\BE
\Delta M^{A} &=&\obs[A]{\Delta M^B}^\leftarrow 
,\label{em:mirror<-}\\
\Delta M^B &=& \obs[B]{\Delta M^A}^\to 
.\label{em:mirror->}
\EE 
Here the arrows $\leftarrow$ and $\to$ are used to indicate that if an object appears moving in one direction in one of the mirrors, it would appear moving in the opposite direction in the other mirror. If we unleash the recursion of Eqs.~\eqref{em:MA<-MB} and \eqref{em:MB<-MA} we get
\BE
\Delta M^A &=& \boxed{\overleftarrow{\boxed{\overleftarrow{{\boxed{\dotsc}_A}}}_B}}_A,\label{em:mirror->unleashed}\\
\Delta M^B &=& \boxed{\overrightarrow{\boxed{\overrightarrow{{\boxed{\dotsc}_B}}}_A}}_B. \label{em:mirror<-unleashed}
\EE

Notice that if there is no reflexive coupling between mirrors, i.e., if the mirrors do not face each other, there is no arrow inversion (see Fig.~\ref{f:mirror}D). This brings us to the next mirror property we want to highlight:
\begin{quote}
{\bf M5. Mirrors' reflexive coupling involves an ``apparent time-reversal'':} Consider two reflexively coupled mirrors, $A$ and $B$, reflecting an external object. If the reflected object appears to move from left to right for mirror $A$, it would appear to move from right to left for mirror $B$. So, if from $A$'s perspective the object has velocity $v$, then from $B$'s perspective the object is moving with velocity $-v$. We will refer to this as an ``apparent time-reversal.'' Alternatively, if mirror $A$ registers a change in position $\Delta x^A = \epsilon v$ then mirror $B$ registers a change in position $\Delta x^B = -\epsilon v$---in this case we could also say that the ``apparent time reversal'' is manifested in the change $\epsilon\to -\epsilon$.
\end{quote}

Of course, the object could move in a diagonal direction and only the horizontal component would be inverted. However, we have to keep in mind that this is only an analogy. We are taking into account only one dimension, the horizontal one, because in the reflexive coupling of observers there would also be only one dimension that is inverted, i.e., the temporal labels.

\

\subsubsection{An analogy with video feedback}\label{sm:video}
\begin{figure}
\includegraphics[width=\columnwidth]{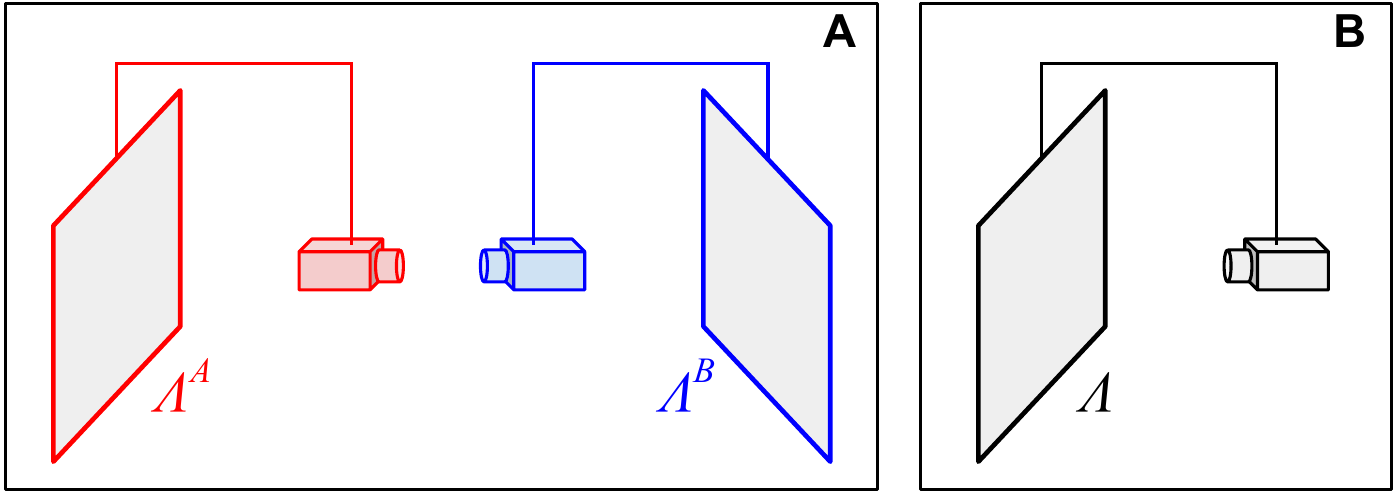}
\caption{{\em Video-feedback analogue of mirror-mirror reflection:} (A) Two video-systems, $A$ and $B$, mutually recording-and-displaying each other. The video screens of systems $A$ and $B$ project images $\Lambda^A_\ell$ and $\Lambda^B_\ell$. (B) A single video-system can partially record-and-display itself. It cannot fully display itself, though; in particular, the camera cannot record itself.}
\label{f:vfeedback}
\end{figure}

To better understand the reflexive coupling between observers, we now resort to a more formal analogy with video-systems (see Fig.~\ref{f:vfeedback}A). In this analogy, each mirror is analogous to a system composed of a video camera connected to a video screen. A mirror receiving light emitted by an object is analogous to the video camera of a video-system recording an object, and that same mirror reflecting the light received is analogous to the video screen displaying the object recorded by the video camera attached to it. 
In Fig.~\ref{f:vfeedback}A the video camera of system $A$ records the video screen of system $B$ and vice versa. While the video camera of a single system can record the video screen of that same system (see Fig.~\ref{f:vfeedback}B), the video camera cannot record itself. So, a video-system composed of both video camera and screen, which together are the analogue of a mirror, cannot completely record-and-display itself either. 

Crutchfield~\cite{crutchfield1984space} analyzed the case shown in Fig.~\ref{f:vfeedback}B, where only one video camera records the very same screen to which it is connected. He formalized the situation along the following lines: Let $\Lambda_\ell$ be the image displayed in the screen at time step $\ell$; that is, $\Lambda_\ell$ is a squared array of pixels, where each pixel can be any gray color, from white to black. 
The image at the next time step, $\ell + 1$ is given by $\Lambda_{\ell+1} = \alpha\Lambda_\ell + \epsilon\mcD\left(\Lambda_\ell\right)$. The first term in the right hand side characterizes the memory of the system---when $\alpha = 0$ the system does not remember the previous image. Here we are interested only in the case with $\alpha = 1$. For this reason we will omit $\alpha$ from now on. 

So, the video-system in Fig.~\ref{f:vfeedback}B satisfies the equation
\be\label{em:JC}
\Delta\Lambda_\ell = \epsilon\mcD\left(\Lambda_\ell\right).
\ee
Here $\Delta\Lambda_\ell\equiv\Lambda_{\ell+1} - \Lambda_\ell$, $\epsilon$ is the time step size and the function $\mcD$, characterizing the recording-and-displaying relation, can include, e.g., a rotation or a scaling of the image at time step $\ell$---i.e., by rotating or zooming in or out the video camera, respectively. The time step size, $\epsilon$, is introduced because we are interested in the limit $\epsilon\to 0$ when this difference equation becomes a differential equation. 

As we already mentioned, the single video-system shown in Fig.~\ref{f:vfeedback}B and described by Eq.~\eqref{em:JC} cannot record-and-display an image of itself since the video camera cannot record itself. A reflexive coupling between two video-systems, like two mirrors reflecting each other, can record-and-display an image of itself.

In analogy with mirrors, we can formalize the situation in Fig.~\ref{f:vfeedback}A by first considering the situation in which a video-system $B$ record-and-display another video-system $A$, respectively. Let $\Lambda_\ell^X$ denote the state of video-system $X$ at time step $\ell$---here $X\in\{A,B\}$. At time step $\ell$ the states of the video-systems $A$ and $B$ are $\Lambda_\ell^A$ and $\Lambda_\ell^B$, respectively. Since the video-system $B$ is recording-and-displaying the video-system $A$, at time step $\ell+1$ the state of the video-system $B$ is 
\be\label{em:video-system}
\Lambda^B_{\ell+1} = \Lambda^B_\ell + \epsilon\mcD_B\left(\Lambda_\ell^A\right)
\ee
Importantly, the first term in the right hand side of this equation is $\Lambda_\ell^B$, not $\Lambda_\ell^A$, because this is a memory term---the video-system $B$ has memory about its {\em own} previous state, not about $A$'s. Similarly, the function $\mcD_B$ has the subindex $B$, not $A$, because it is the camera of video-system $B$ that captures and can transform (e.g., rotate or scale) an image of the video-system $A$. We can write this equation as 
\be
\Delta^{\rm i} \Lambda_{\ell}^B = \obs[B]{\Delta^{\rm e}\Lambda_\ell^A}^{\to}\equiv \epsilon\mcD_B\left(\Lambda_\ell^A\right),\label{em:VFRA}
\ee
where $\Delta^{\rm i}\Lambda_\ell^B = \Lambda_{\ell+1}^B - \Lambda_\ell^B $ are changes internal to the video-system $B$, as emphasized by the superindex ``i.'' In analogy with mirrors, we here denote by $\obs[B]{\Delta^{\rm e}\Lambda_\ell^A}^{\to}$ the changes in $A$ as recorded-and-displayed by the video-system $B$---these changes are external to the video-system $B$, as emphasized by the superindex ``e.'' In this case $\obs[B]{\Delta^{\rm e}\Lambda_\ell^A}$ is given by $\epsilon\mcD_B\left(\Lambda_\ell^A\right)$. Furthermore, the arrow $\to$ plays a role analogous to the arrow in Eq.~\eqref{em:mirror->}, i.e., it indicates that Eq.~\eqref{em:VFRA} is one of a pair describing the reflexive coupling between video-systems $A$ and $B$. So, a reverse arrow $\leftarrow$ is analogous to the arrow in Eq.~\eqref{em:mirror<-}. 

To complete the reflexive coupling we have to consider the case where $A$ and $B$ play the complementary roles of the ``subject'' that records-and-displays and the object being recorded-and-displayed. Again, here the word ``subject'' is used in a strict sense to denote the opposite of object. In analogy with Eq.~\eqref{em:VFRA}, this yields the equation
\be
\Delta^{\rm i}\Lambda_{\ell}^{A} =\obs[A]{\Delta^{\rm e}\Lambda_{\ell}^B}^{\leftarrow}\equiv \epsilon\mcD_{A}\left(\Lambda_\ell^B\right). \label{em:VFR'B}
\ee
Equations~\eqref{em:VFRA} and \eqref{em:VFR'B} are analogous to Eqs.~\eqref{em:mirror->} and \eqref{em:mirror<-} for mirrors. They implement the reflexive coupling between the video-systems $A$ and $B$.

\subsection{Reflexive coupling: Observers mutually observing each other}\label{sm:reflexive-observers}
\subsubsection{Scientists doing experiments are observed by other scientists}\label{sm:others...}
\begin{figure*}
\includegraphics[width = \textwidth]{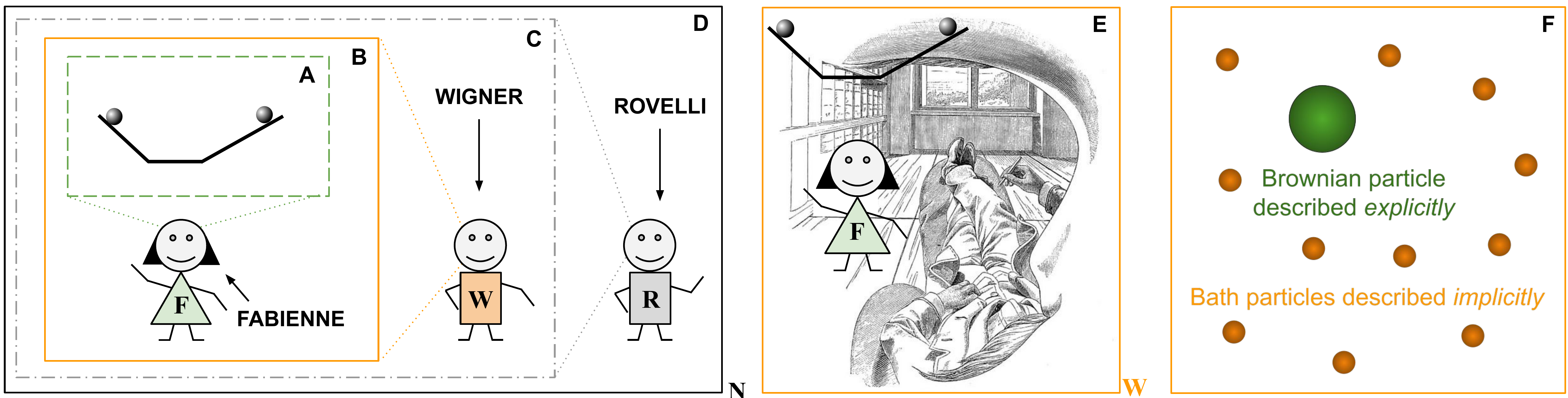}
\caption{{\it Escaping the infinite regress via the first-person perspective (figures {\rm A-D} are nested):} (A) Traditionally, physics focuses on the experimental system alone (inner green dashed box), effectively ignoring the observer. (B) Traditionally, cognitive science focuses on the observer (Fabbiene, Wigner's friend) interacting with an experimental system (middle orange solid box), ignoring the external observer (Wigner) that observes the former. (C) Rovelli's RQM~\cite{Rovelli-1996} hints at the need to take explicit account of Wigner, as we do here. (D) However, this more relational perspective would still ignore the observer who is looking at Wigner. By adding another observer (Rovelli) we are headed to an infinite regress (cf. Fig. 5.1 in Ref.~\cite{rovelli2007quantum}). One way to escape the infinite regress is by distinguishing between the first- and third-person perspectives, between observer-as-subject and observer-as-object (see text). (E) Please imagine that you take the role of Wigner in (C), so you are observing the situation depicted within the solid orange box from your own first-person perspective (1PP). The way you experience the situation is analogous to that of the gray man lying down in Mach self-portrait in (E). You can see Fabienne (and her experimental system), but you cannot see {\em relevant} aspects of yourself. Of course, you can see some aspects of yourself, e.g., parts of your body, but not the physical correlates associated to your experience of seeing Fabbiene, which are our focus here. We will represent the observer-as-subject with the rectangle within which the figure is framed; the letter in the bottom right of this rectangle indicates the observer-as-subject we are referring to (here $W$ for Wigner). (F) The relevant degrees of freedom of an observer-as-subject, represented by the man lying down in (E), are analogous to the degrees of freedom of a thermal bath in that they are inaccessible---however, the former are inaccessible in principle while the latter are inaccessible only in practice (see text). The degrees of freedom of an observer-as-object immersed in the field of experience of an observer-as-subject, so to speak, are analogous to those of a Brownian particle immersed in a thermal bath (see text).  
}
\label{f:homunculus}
\end{figure*}


Physical experiments are usually described as observer-independent systems (see Fig.~\ref{f:homunculus}A). This work aims at understanding how can scientists escape the action-perception loop to establish such a seemingly observer-independent science. To do so, we have been investigating the circular interaction between a scientist, Fabienne ($F$), and an external physical system, $S$ (see Fig.~\ref{f:homunculus}B; see also Figs.~\ref{f:effective}B and \ref{f:circular_noRQM}A). 
Now, not only $S$, but also Fabienne are physical systems. So, the coupled system, $S_{\rm coup} = F+S$, constituted by the scientist, $F$, interacting with the experimental system, $S$, can be considered as a physical system too. 

This brings us back to the beginning of this work: we should not {\em a priori} describe any physical system---in particular $S_{\rm coup}$---as an observer-independent system. 
For our approach to be consistent we have to take into account the observer that observes $S_{\rm coup}$---let us call this observer Wigner ($W$; see Fig.~\ref{f:homunculus}C). That is, Wigner and $S_{\rm coup}$ are in principle also part of an action-perception loop. The role of Wigner is typically played by cognitive scientists who observe other human beings interact with the world---indeed, figures like Fig.~\ref{f:homunculus}B are common in the cognitive science literature. So, we need to ask: How can cognitive scientists too escape the action-perception loop and properly describe the coupled systems of humans and their surroundings as observer-independent systems? Indeed, much like experimental systems in physics, cognitive systems are often described in the literature as observer-independent.  

Interestingly, this approach is in line with Rovelli's relational interpretation of quantum mechanics (RQM), which posits that ``the fact that a certain quantity $q$ has a value with respect to [an observer $F$] is a physical fact; as a physical fact, its being true, or not true, must be understood as relative to an observer, say $[W]$.''~\cite{Rovelli-1996} (see Sec.~II~D therein). However, in RQM observers are considered as generic {\em quantum} system---i.e., as far as physics is concerned, there are no relevant differences between electrons and observers. In contrast, here we are treating observers as {\em classical} cognitive systems. 

Figure~\ref{f:homunculus}C illustrates this. It shows a cognitive scientist (Wigner) looking at another scientist (Fabienne, Wigner's friend) doing an experiment. In principle, we should build a model of Wigner interacting with his friend and the experimental system. But then again the coupled system $S^\prime_{\rm coup} = W + S_{\rm coup}$ of Wigner, his friend and the experimental system is also a physical system and would therefore be relative to another unacknowledged external observer. In principle, we could also make explicit such an additional external observer (Rovelli in Fig.~\ref{f:homunculus}D), but then we would be headed to an infinite regress. That is, we would have to keep on adding external observers {\em ad infinitum}.

This brings us to a potentially subtle point. We conjecture that one way to escape such an infinite regress involves two steps. First, we have to distinguish that observers, similar to mirrors, can play complementary roles as objects and as object-experiencing ``subjects''---here the word ``subject'' is used in the strict technical sense of the opposite of object. Second, we have to implement a reflexive coupling between two (sets of) observers mutually observing each other. 

Let us now describe the first of these two steps in more detail. Please look at Fig.~\ref{f:homunculus}E and imagine that you are experiencing the situation depicted in it---that is, imagine that you play the role of the man in Mach's self-portrait. From a cognitive science perspective, when you look at the system $S_{\rm coup} = F+S$ depicted in Fig.~\ref{f:homunculus}E there is a physical interaction between you and $S_{\rm coup}$---e.g., light reflected from $S_{\rm coup}$ interacts with your eyes. Such a physical interaction generates physical processes inside you that correlate with your experience of observing $S_{\rm coup}$---e.g., the corresponding neural correlates of that experience. If there were no interaction between $S_{\rm coup}$ and you, it would not be possible for neural processes inside you to correlate with the experience of $S_{\rm coup}$. Such correlations are built through (direct or indirect) physical interactions.

Notice that your physical interaction with $S_{\rm coup}$ and the physical processes generated inside you remain unobservable or implicit to you, even though they play a key role in your ability to experience $S_{\rm coup}$. In other words, you cannot simultaneously observe both $S_{\rm coup}$ and the physical correlates associated to your experience of observing $S_{\rm coup}$. Let $\{S_{\rm coup}\}$ denote the latter. If you were to simultaneously observe both $S_{\rm coup}$ and $\{S_{\rm coup}\}$ you would not be observing $S_{\rm coup}$ anymore, but a different object, i.e., $S_{\rm coup}$+$\{S_{\rm coup}\}$, and there would be new physical correlates associated to this new experience, i.e., $\{S_{\rm coup}$+$\{S_{\rm coup}\}\}$. Please remember that here we do not want to jump ahead with assumptions, but to try to model things as explicitly as possible. So, we do not want to {\em a priori} neglect any physical processes associated to you in this example. Rather, we want to understand {\em a posteriori} how is it that we can do so, if indeed we can.

Thus, observers can play two different roles. One is the role that Fabienne, $F$, plays for you, dear reader, when you look at Fig.~\ref{f:homunculus}E: she appears to you as an {\em explicit} physical system, {\em external} to you, which therefore you can in principle model in full mechanical detail. In this sense, Fabienne plays the role of an {\em object} of observation for you, or from the perspective of any other observer different from Fabienne. This is analogous to the role a mirror plays when it is the object being reflected by another mirror---i.e. mirror as object (see mirror's property {\bf M2}). 

Instead, like mirrors that cannot directly reflect themselves, observers cannot directly, simultaneously, fully observe themselves. When you observe an object, there are key physical processes associated to you, which enable you to experience the observed object, and yet remain unobservable or {\em implicit} to you. Those physical processes do not appear to you as an object of observation. So, we will technically say that the role you play for yourself is that of a ``subject''---in the strict technical sense of the opposite of object, or not an object of observation {\it for you}. Of course, those physical processes can in principle appear to others as objects of observation, but not to you. There is nothing mysterious here. We have already described the mirror analogue of this in mirror properties {\bf M1} and {\bf M2}.

In sum, from a third-person perspective (3PP), observers appear as objects of observation. From a first-person perspective (1PP), instead, observers cannot fully appear as objects of observation. There are physical processes that cannot be observed from a 1PP because they are the very processes that enable observers to experience any object at all. Let us summarize this in the following conjectures or observer's properties:

\begin{quote}
{\bf O2. Observer-as-object:} this is how an observer appears to other observers---from a 3PP---i.e., as an {\em explicit} physical system or ``object'' that can be directly experienced by other observers. So, an observer-as-object can be modeled with an explicit mechanical model. When referring to a particular observer-as-object Wigner we can say ``Wigner-as-object'' for short and denote it $W$, as usual.
\end{quote}

\begin{quote}
{\bf O3. Observer-as-subject:} this is how an observer appears to herself---from a 1PP---i.e., as an {\em implicit} physical system or ``subject'' that can directly experience objects, including other observers-as-objects, but cannot {\em directly} experience key physical processes associated to herself---e.g., she cannot directly experience both a dog and the physical correlates associated to her experience of that dog. However, in principle she can {\em indirectly} experience those key aspects of herself via, e.g., a picture of them. When referring to a particular observer-as-subject, say, Wigner, we can say ``Wigner-as-subject'' and denote it as $\I{W}$---we will explain this notation below. An observer-as-subject cannot be modelled in terms of objects following cause-effect mechanisms (see below and observer's property~{\bf O4}).
\end{quote}

Let us now explain the notation $\I{W}$ for an observer-as-subject $W$. Please imagine again that you, dear reader, take the role of observer-as-subject and look at $F$ in Fig.~\ref{f:homunculus}E from your own 1PP. You can refer to that experience as ``I observe $F$''---here you are describing yourself from a 1PP as a subject, as an ``I.'' In contrast, if Wigner looks at you, while you observe $F$, you can be referred to by Wigner as ``he observes $F$''---here you are being described from a 3PP as an ``object,'' as a ``he.''

In analogy with this, we will use the word $I_W$ to refer to observer-as-subject $W$ or, more precisely, to $W$'s 1PP. However, to emphasize that an observer-as-subject cannot directly, fully appear to himself as an object of observation, we will cancel this expression, i.e., $\text{\sout{$I$}}_W$. This notation is inspired in Heidegger's {\em sous erasure}. We can use the convention that the black rectangle on which Fig.~\ref{f:homunculus}E is framed, i.e., $\boxed{\textcolor{white}{\cdot}}_W$, refers to the observer-as-subject and neglect Mach's self-portrait for simplicity. The letter $W$ at the bottom right of that rectangle makes explicit to which observer-as-subject we are referring to---so, $\text{\sout{$I$}}_W = \boxed{\textcolor{white}{\cdot}}_W$. 
Again, we should not fall into the trap of thinking that the rectangle {\em is} the observer-as-subject. Doing so would immediately turn the observer-as-subject into an object of observation, and we would fall back into the {\em infinite regress} of Figs.~\ref{f:homunculus}A-D. In this respect, the symbol $\text{\sout{$I$}}_W$, or the rectangle framing the figure, play a role analogous to that of the number $0$ in that {\em it does not denote a thing but an absence of thing} (cf.~Ref.~\cite{deacon2011incomplete} ch. 0). 
Finally, we will use $\left[\text{\sout{$I$}}_W\right]$ to denote those physical processes that remain unobservable or implicit to observer-as-subject $\text{\sout{$I$}}_W$.

An analogy with consciousness neuroscience would help us be more precise. According to consciousness neuroscience a person, say Wigner, can process information about an external system, $S$, either consciously or unconsciously. In both cases Wigner has information about $S$ in the sense that $S$ ``can be invariantly recognized and influence motor, semantic, and decision levels of processing''~\cite{dehaene2017consciousness}. For this to be possible, in both cases there must be physical correlates that allows Wigner to identify the system $S$, i.e., to discriminate $S$ as different from any other system $S^\prime\neq S$---let us denote these physical correlates as $[S]$. However, in the case of conscious processing Wigner additionally notices the presence of $S$ and can reliably report its identity--in the case of unconscious processing this does not happen. For this to be possible, besides $[S]$ there must be other physical correlates that allows Wigner to notice and report the identity of $S$. These are the so-called neural correlates of consciousness and are somewhat analogous to $[\I{W}]$ here.  

Perhaps a better analogy is with Thompson's recent neurophenomenological perspective that, instead of the conscious/unconscious taxonomy, proposes to distinguish between experience-as-such---the mere potential to experience something---and the contents of experience~\cite{thompson2014waking}. Here $\I{W}$ might be taken as the analogue of experience-as-such. Importantly, for Thompson experience-as-such cannot become a content of experience because it is the very precondition for experiencing any content at all. In a sense, it is always in the background. This is somewhat analogous to mirror's property {\bf M4} and to the fundamental unobservability of $[\I{W}]$. 

In analogy with this, we assume that the physical correlates of an experience, $\{S\}=[S] + [\I{W}]$ are of two kinds, those that allows us to discriminate a content of experience from any other content of experience, $[S]$, and those who allows us to experience any content at all, $[\I{W}]$. In our approach what specifies the system $S$ and the corresponding physical correlates, $[S]$,  is the Hamiltonian function, $\mcH$. In contrast, $[\I{W}]$, which is present in any conscious experience, cannot be turned into an object of observation but, as we will conjecture, could be modeled implicitly as all-pervasive, irreducible fluctuations characterized by the analogue of Planck's constant. Again, we do not want to jump ahead with assumptions and neglect $[\I{W}]$ {\em a priori}. Of course, we can neglect it {\em a posteriori} if the analysis suggests so. 

To summarize, when an observer-as-subject $\text{\sout{$I$}}_W$ observes an observer-as-object $F$ interact with an experimental system $S$, the degrees of freedom associated to $F$ and $S$ are in principle accessible to $\text{\sout{$I$}}_W$, but the degrees of freedom associated to $\left[\text{\sout{$I$}}_W\right]$ are not. The situation is somewhat analogous to that of a mass attached to a mechanical spring which is immersed in a gas. We can usually build an explicit mechanical model of the mass attached to the spring because the corresponding degrees of freedom are easily accessible. In this sense, the mass attached to the spring is analogous to $S_{\rm coup} = F+S$, which is {\em directly} accessible to an observer-as-subject. In contrast, the degrees of freedom associated to the gas are in practice inaccessible because there are too many molecules to track. In this sense, the gas is analogous to $\left[\text{\sout{$I$}}_W\right]$, which is also inaccessible to an observer-as-subject. 

Physicists usually deal with this situation by modeling the gas as a thermal bath whose exchanges of energy with the mass and the spring are modelled statistically as random fluctuations, or noise, characterized by a temperature parameter. We could follow a similar strategy and model $\left[\text{\sout{$I$}}_W\right]$ as a kind of thermal bath whose exchanges of energy with $F$+$S$ are modeled statistically as random fluctuations, or noise, characterized by a temperature-like parameter. This suggests that the parameter $\Gamma$ in Eq.~\eqref{em:F} can be interpreted precisely as this temperature-like parameter (see $W$ in Fig.~\ref{f:homunculus}E; cf. Fig.~\ref{f:homunculus}F). 

The analogy is even closer if instead of a mass-spring system in a gas, we consider a Brownian particle immersed in a fluid. Here, again, we can build an explicit mechanical model of the Brownian particle---in this case, the model is simply that of a free particle that would move with constant velocity, except for its interactions with the molecules of the fluid. In this sense, the Brownian particle is analogous to $F$+$S$. The fluid, like the gas, can be modeled as random fluctuations, or noise, that change the velocity of the Brownian particle and are characterized by a diffusion constant, $D$. 

If the Brownian particle is at position $x$ at a given time step $t$, at the next time step, $t+\epsilon$, it would be in position $x^\prime$ with probability $\mathcal{P}_{\rm Brown} = e^{-\left(x^\prime - x\right)^2/4 D\epsilon}/Z$, 
where $\epsilon$ is the time step size and $Z$ is the normalization constant. Notice that $\mathcal{P}_{\rm Brown}$ have the exact same mathematical form as Eq.~\eqref{em:F} for the case of a free particle---i.e., when $V(x) = 0$ in Eq.~\eqref{em:non-relativisticH}. The role of $D$ is played by $\hobs/2m$, where $m$ is the mass of the free particle. The analogy can be taken further if we notice that the diffusion coefficient can be written as $D = \kappa (T/\eta)/r$, where $r$ is the radius of the Brownian particle, while $T$ and $\eta$ are the temperature and viscosity of the fluid---$\kappa$ is a constant. So, $m$ is analogous to $r$ in that they both characterize the accessible degrees of freedom, and $\hobs$ is analogous to $T/\eta$ in that they both characterize the inaccessible degrees of freedom. 

There is an important difference, though, between the inaccessibility of the degrees of freedom of an observer-as-subject $W$, i.e., $\left[\text{\sout{$I$}}_W\right]$, and the inaccessibility of the molecular degrees of freedom of a fluid or a gas. Indeed, the latter are inaccessible only {\em in practice}, i.e., we could access them if we have powerful enough technologies. In contrast, Wigner's own degrees of freedom are always inaccessible to him, i.e., they are {\em fundamentally} inaccessible to Wigner. Interestingly, this is in line with the idea that the randomness associated to quantum physics is irreducible, i.e., that it cannot be described in terms of lower level mechanisms---like the fluid's molecular collisions in the case of a Brownian particle.  

Importantly, in spite of their irreducibility we can still in principle infer the influence of the inaccessible degrees of freedom on the accessible ones by measuring the random fluctuation of the latter. This is analogous to our ability to measure the temperature characterizing the influence of a thermal bath on a spring-mass system, or the diffusion coefficient characterizing the influence of a fluid on a Brownian particle, by measuring the random fluctuations on the spring, or on the Brownian particle.

Let us summarize this in the following conjecture or observer's property:

\begin{quote}
{\bf O4. Observers-as-subjects can be implicitly modeled as pervasive, irreducible fluctuations:} The in-principle accessible degrees of freedom associated to objects, including observers-as-objects, can be modeled via explicit dynamical models. In contrast, the in-principle inaccessible degrees of freedom associated to an observer-as-subject can be modeled as a kind of thermal bath, or fluid, that induces noise in {\em any} object being observed by the observer-as-subject
Such pervasive, irreducible fluctuations are characterized by the temperature- or diffusion-like parameter $\Gamma$ (see Eq.~\eqref{em:wholeP}).
\end{quote}

\begin{figure}
\includegraphics[width=\columnwidth]{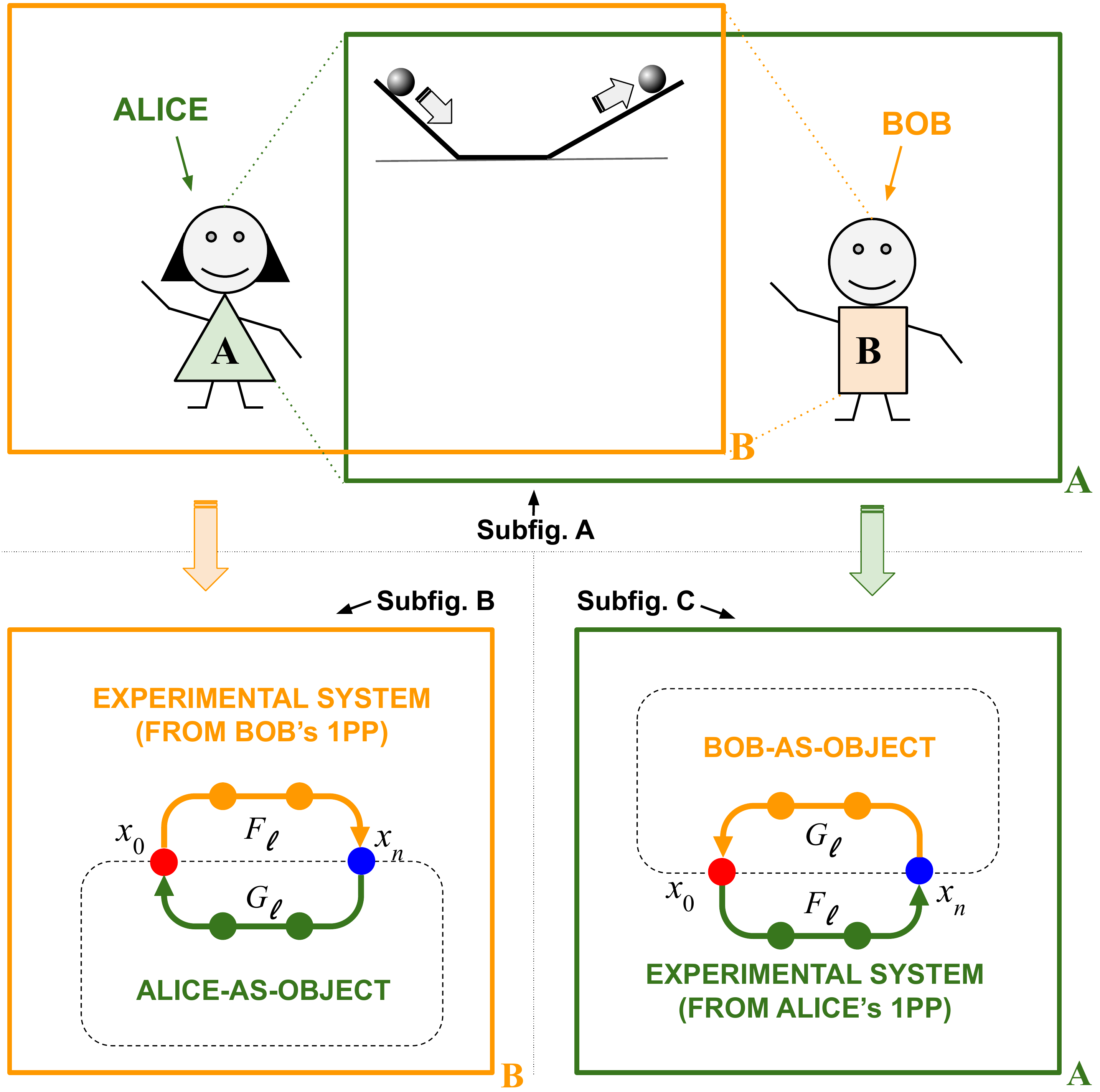}
\caption{{\it Reflexive coupling between observers:} (A; top half) \textsc{Alice} and \textsc{Bob} mutually experience each other, while observing the same experimental system. In this case, \textsc{Alice} and \textsc{Bob} play complementary roles as both the objects being experienced by and the subjects who experience each other. We remind the reader that a rectangle with a letter in the bottom right corner represents an observer-as-subject. (B,C) The bottom figures show the corresponding graphical models. Importantly, a change in perspective inverts the sense in which the dynamics flow. Indeed, from \textsc{Bob}'s first-person perspective (1PP), the dynamics of \textsc{Alice}-as-object interacting with the experimental system flow clockwise (B; bottom left). In contrast, from \textsc{Alice}'s 1PP, the dynamics of \textsc{Bob}-as-object interacting with the experimental system flows counter-clockwise (C; bottom right). We refer to this as an ``apparent time reversal'' (see observer's property {\bf O5} and mirror's property {\bf M5}) }\label{f:mind-body}
\end{figure}

So, distinguishing between observer-as-object, $F$, and observer-as-subject, $\I{W}$, allows us to stop the infinite regress because we do not need to keep on adding observers-as-objects {\em ad infinitum}. However, this step requires us to assume that observers play complementary roles as {\it both} subject and object. So, our description is still incomplete because we are taking account of $F$ as an object but not as a subject and of $W$ as a subject but not as an object (see Fig.~\ref{f:homunculus}E). We can take full account of these complementary roles that observers play by implementing a reflexive coupling where $F$ and $W$ mutually observe each other interact with the same experimental system $S$ (see Fig.~\ref{f:mind-body} where we use names Alice and Bob instead to emphasize this more symmetric situation).

This is analogous to the situation of two mirrors mutually reflecting both each other and a light bulb. As described in {\bf M3}, in this situation each mirror also plays complementary roles as the object being reflected by the other mirror and the object-reflecting ``subject'' that reflects the other mirror. As described in {\bf M5}, if for one of the two mirrors engaged in a reflexive coupling the object reflected moves with (horizontal) velocity $v$, for the other mirror it moves with velocity $-v$. We have referred to this as an ``apparent time reversal.'' 
In the reflexive coupling between two observers we conjecture there is also an ``apparent time-reversal.'' 
Before discussing the implementation of the reflexive coupling, let us discuss in more detail this ``apparent time reversal.'' 

\subsubsection{Constraints for a reliable science}\label{sm:constraints-2}
Here we start discussing the reflexive coupling between two observers, $\textsc{Alice}=(\I{A}, A)$ and $\textsc{Bob}=(\I{B}, B)$ (see Fig.~\ref{f:mind-body}). These new names and new notation highlight that we are considering the more symmetric situation wherein each observer plays the two complementary roles as both subject and object. We here discuss how the perspectives of \textsc{Alice} and \textsc{Bob}, which are involved in a reflexive coupling, are related by an ``apparent time reversal.'' In doing so, we also point out how the reliability conditions {\bf R1}-{\bf R3} in Sec.~\ref{sm:intro} might be incorporated in our framework. In Sec.~\ref{sm:acoplamiento} we continue the discussion of the reflexive coupling.

In our relational approach what we consider the experimental system external to an observer-as-object, say {\sc Alice} (or $A$) in Fig.~\ref{f:mind-body}B, does not exist in an absolute sense. Rather it is external to {\sc Alice} in the sense that it can be experienced by {\em other} observer, say {\sc Bob} (or $B$), different from {\sc Alice}---i.e., $B\neq A$. Figure~\ref{f:mind-body}B describes the situation we have been considering till now, wherein {\sc Bob} and his friend {\sc Alice} play the role of subject and object, respectively. From {\sc Bob}'s 1PP, factors $F_\ell$ and $G_\ell$, respectively, correspond to {\sc Bob}'s experience of the experimental system, which we here denote as $[S]_B$, and {\sc Bob}'s experience of the physical correlates of {\sc Alice}'s experience of the experimental system, which we denote as $[[S]_A]_B$---let us denote this situation as $[S]_B \sim  F_\ell $ and $[[S]_A]_B \sim  G_\ell $; here $\ell = 0,\dotsc , n$.

Figure~\ref{f:mind-body}C shows the complementary situation wherein the roles of {\sc Bob} and his friend {\sc Alice} are reversed---i.e., now {\sc Bob} is the object being observed by {\sc Alice}-as-subject. The reliability condition {\bf R1}---standardization---requires that this situation be described in the same manner. That is, we should use the same factors $F_\ell$ and $G_\ell$. Now, from {\sc Alice}'s 1PP, factors $F_\ell$ and $G_\ell$, respectively, correspond to {\sc Alice}'s experience of the experimental system, $[S]_A$, and {\sc Alice}'s experience of the physical correlates of {\sc Bob}'s experience of the experimental system, $[[S]_B]_A$---in short, $[S]_A \sim  F_\ell $ and $ [[S]_B]_A \sim G_\ell $. 

The reliability condition {\bf R2}---intersubjectivity---requires that, at each time step, what {\sc Bob}-as-subject considers as the experimental system, $[S]_B\sim F_\ell$, which is external to his friend {\sc Alice}, be the same as the experimental system that {\sc Bob}-as-object observes via the corresponding physical correlates $[[S]_B]_A\sim G_\ell$. Under an exchange of roles, the initial factor $F_0(x_1, x_0)$, for instance, characterizing the experimental system observed by {\sc Bob}-as-subject (see Fig.~\ref{f:mind-body}B), corresponds to the final factor $G_0 \equiv F_{2n-1}(x_0, x_{2n-1})$ characterizing the experimental system observed by {\sc Bob}-as-object performing the experiment (see Fig.~\ref{f:mind-body}C). However, the latter has to be transposed because it corresponds to paths traversed in the reversed direction. More generally, we have
\be\label{em:F=FT2}
G_\ell = F_{2n-\ell- 1} = F^T_\ell;
\ee
Here $\ell=0,\dotsc , n$. So, here again the factors $F_\ell$ and $G_\ell$ are the same, except for the fact that they correspond to paths traversed in reverse directions and this manifests in a transpose operation. 

This implies in particular that
\be\label{em:P_l=FFT}
\begin{split}
P_0 &= \frac{1}{Z} G_{0}\cdots G_{n-1} F_{n-1} \cdots F_0\\
       &= \frac{1}{Z} F_{0}^T\cdots F_{n-1}^T F_{n-1} \cdots F_0 .
       \end{split}
       \ee

       Reliability condition {\bf R3}---truthfulness---requires that if the internal and external dynamics coincide, no observer reports to the contrary. 

       So, under a change in perspective there is an ``apparent time reversal'' in that the processes referring to one and the same observer change direction. For instance, when {\sc Bob} plays the role of subject, the dynamics of the experimental system experienced by him, $[S]_B\sim F_\ell$, flows from $x_0$ to $x_n$ (see Fig.~\ref{f:mind-body}B). In contrast, when {\sc Bob} plays the role of object, the physical processes that refers to him are the physical correlates of his experience, $[[S]_B]_A\sim G_\ell $, which flow in the opposite direction (see Fig.~\ref{f:mind-body}C). This ``apparent time-reversal'' is characterized by the transposed operation; so, $G_\ell = F_\ell^T$. Let us summarize this in the following conjecture or observer's property:

       \

       \begin{quote}{\bf O5. Observers' reflexive coupling involves an ``apparent time-reversal'':} 
       Consider two (sets of) of observers, say $A$ and $B$, engaged in a reflexive coupling. That is, $B$ and $A$ mutually observe each other being engaged in an action-perception loop with an experimental system (see Fig.~\ref{f:mind-body}). Assume that $B$ describes the loop associated to $A$ in a clockwise direction, via changes $\Delta P_\ell =\epsilon [J_s, P_\ell]$. Then $A$ describes the loop associated to $B$ in a counter-clockwise direction, via the transposed changes $\Delta P^T_\ell = -\epsilon[J_s, P^T_\ell]$. Alternatively, if $\Delta P^A_\ell = \epsilon [J_s, P^A_\ell]$ are the changes associated to observer $A$, then $\Delta P^B_\ell = -\epsilon [J_s, P^B_\ell]$ are the changes associated to observer $B$. The latter is a less restrictive assumption since the former also requires that $P^B_\ell = (P_\ell^A)^T$. The latter assumption only requires the change $\epsilon\to - \epsilon$ due to the inversion of the temporal labels. Any one of these two assumptions will be enough for our purposes. \end{quote}

       \

       \subsubsection{Reflexive coupling and quantum dynamics with stoquastic Hamiltonians}\label{sm:acoplamiento}
       Here we finally describe the reflexive coupling between two observers, $\textsc{Alice}=(\I{A}, A)$ and $\textsc{Bob}=(\I{B}, B)$. Again, these new names and new notation highlight that we are considering the more symmetric situation wherein each observer plays the two complementary roles as both subject and object. As I mentioned in Sec.~\ref{sm:brief}, the results presented in this section should not be read as a derivation of genuine, real-time quantum dynamics from reflexivity. Rather, the intention of this section is to stress the similarity between Eqs.~\eqref{em:mirror<-} and \eqref{em:mirror->} for mirrors---or Eqs.~\eqref{em:VFR'B} and \eqref{em:VFRA} for video-systems---and the von Neumann equations, Eqs.~\eqref{em:vN-1} and \eqref{em:vN-2}, for $J_a=0$. We here speculate on the possibility of giving this similarity a simple physical meaning, although this is not completely successful. Furthermore, in Sec.~\ref{sm:discussion} we briefly discuss the conceptual plausibility of the reflexive coupling between observers, as illustrated in Fig.~\ref{f:mind-body}, in the light of some developments in the philosophy of mind.

       Intuitively, in line with observer's property {\bf O5}, the dynamics of $\textsc{Bob}$-as-object, as experienced by $\textsc{Alice}$-as-subject, is the transpose of the dynamics of $\textsc{Alice}$-as-object, as experienced by $\textsc{Bob}$-as-subject. The reflexive coupling between \textsc{Alice} and \textsc{Bob} couples these two transposed dynamics via a swap operation analogous to the one that connects Eqs.~\eqref{em:vN-1} and \eqref{em:vN-2}, which are equivalent to von Neumann equation, into Eqs.~\eqref{em:no-vN-1} and \eqref{em:no-vN-2}, which are formally analogous to the equations describing an embodied observer and its transposed. In this way, the reflexive coupling leads to a dynamics that manifests aspects of a genuine, real-time quantum dynamics. However, in this section we are considering the case of symmetric dynamical matrices---i.e., $J_a = 0$ so $J = J_s = J^T_s$. These correspond to so-called {\em stoquastic} Hamiltonians, $H=-\hbar J_s$---i.e., Hamiltonians with real, non-positive off-diagonal entries, which can be naturally interpreted in probabilistic terms. We discuss more general Hamiltonians in Sec.~\ref{sm:generalH} below. We now formalize these ideas.

       Embodiment---observer's property {\bf O1}---entails that an observer-as-object interacting with an experimental system is described by a real probability matrix $P_\ell$ following a dynamics given by Eq.~\eqref{em:O1-embodiment}. However, in our relational approach the dynamical changes in one observer-as-object, say \textsc{Alice}-as-object (or $A$), are relative to, or cognized by, another observer-as-subject, say \textsc{Bob}-as-subject (or $\I{B}$). To make this explicit, in analogy with video-systems, we will denote this as $\left.\Delta^{\rm e} P_\ell^A\right|_B^\to = \epsilon [J_s,P_\ell^A]$ (cf. middle and right hand sides of Eq.~\eqref{em:VFRA}). Here the superindex ``e'' emphasizes that these changes are external to {\textsc{Bob}}---i.e., these are the changes of \textsc{Alice}-as-object described from a 3PP. Furthermore, the arrow $\to$ plays a role analogous to the arrow in Eq.~\eqref{em:mirror->}, i.e., it emphasizes that this equation is one of a pair describing the reflexive coupling between \textsc{Alice} and \textsc{Bob}. In particular, the arrow $\to$ indicates the situation illustrated in Fig.~\ref{f:mind-body}B, where the circular dynamics flows clockwise. A reversed arrow $\leftarrow$ is analogous to the arrow in Eq.~\eqref{em:mirror<-}. That is, it indicates that, like in the case of mirrors (see property {\bf M5}), there is an ``apparent time reversal'' when \textsc{Alice} and \textsc{Bob} exchange perspectives (see observer's property {\bf O5})---in other words, it refers to the situation illustrated in Fig.~\ref{f:mind-body}C, where the dynamics flows counter-clockwise instead. 

       Since \textsc{Bob} is also a physical system, there are physical changes internal to him that correlate with his experience of the changes of \textsc{Alice}-as-object, $\obs[B]{\Delta^{\rm e} P_\ell^A}^\to$, which are external to him. Such internal physical changes should reflect the external ones. So, these changes should also be described in terms of a real probability matrix $P_\ell^B$ that describes \textsc{Bob}'s state at time step $\ell$. After experiencing these changes \textsc{Bob}'s state changes to $P_{\ell+1}^B = P_\ell^B +\obs[B]{\Delta^{\rm e} P_\ell^A}^\to$ (cf. Eq.~\eqref{em:video-system}). Importantly, like in the case of video-systems, the first term in the right hand side of this equation is $P_\ell^B$, not $P_\ell^A$, because \textsc{Bob} remembers his {\em own} previous state not \textsc{Alice}'s. In other words, before \textsc{Bob} experiences the changes $\obs[B]{\Delta^{\rm e} P_\ell^A}^\to$, he is in state $P^B_\ell$. \textsc{Bob}'s experience of the changes of \textsc{Alice}-as-object is supported by physical correlates $\Delta^{\rm i} P_\ell^B = P_{\ell+1}^B - P_\ell^B$---here the superindex ``i'' emphasizes that these changes are internal to \textsc{Bob}-as-subject. These internal changes reflect the external ones, so $\Delta^{\rm i} P_\ell^B= \obs[B]{\Delta^{\rm e} P_\ell^A}^\to$.
       So, in analogy with mirror property {\bf M1}, we introduce the following conjecture or observer's property:

       \begin{quote}
       \noindent {\bf O6. Embodied observers experience objects via physical changes---internal and external:} In between any two time steps, $\ell$ and $\ell+1$, the change in state of an \oo, $A$, interacting with an experimental system, as experienced by an \os,\, $\I{B}$, is given by 
       \be\label{em:boxedPAR}
       \left.{\Delta^{\rm e} P_\ell^A}\right|_B^{\to} 
       =\epsilon [J_s, P_\ell^A].
       \ee
       These changes, which are {\em external} to $\I{B}$ (as emphasized by superindex ``e''), induce changes $\Delta^{\rm i} P_\ell^B = P_{\ell+1}^B - P_\ell^B$, which are {\em internal} to $\I{B}$ (as emphasized by superindex ``i'') and correlate with $\I{B}$'s experience of observing those external changes. That is (cf. \eqref{em:VFRA}) 
       \be\label{em:inPR}
       \Delta^{\rm i} P_\ell^B = \left.{\Delta^{\rm e} P_\ell^A}\right|_B^\to,
       \ee 
       In other words, since observer $\textsc{Bob} = (\I{B}, B)$ is itself a physical system there are physical changes, $\Delta^{\rm i} P^B_\ell$, implicit or unobservable to $\textsc{Bob}$ that allows him to experience the \oo\, $A$. 
       \end{quote}

       Equation~\eqref{em:inPR} ignores the role of \textsc{Alice}-as-subject as well as the role of \textsc{Bob}-as-object. We can take account of \textsc{Alice}'s and \textsc{Bob}'s complementary roles as both subject and object, without introducing any new observers, by implementing a reflexive coupling between them. In this way, \textsc{Alice}'s and \textsc{Bob}'s descriptions are relative to each other, rather than to an unacknowledged external observer. As discussed in Sec.~\ref{sm:others...}, introducing such an external observer as an object would require us to introduce yet another observer that observes the former, which would result in an {\em infinite regress}. 

       To implement the reflexive coupling, we have to consider the case where \textsc{Alice} and \textsc{Bob} play the complementary roles of the subject who observes and the object being observed, respectively. In analogy with Eq.~\eqref{em:inPR}, this yields the equation $\Delta^{\rm i} P_\ell^A = \left.{\Delta^{\rm e} P_\ell^B}\right|_{A}^{\leftarrow}$, with $\Delta^{\rm i} P_\ell^A =P_{\ell +1}^A - P_\ell^A$. This equation is the analogue of Eq.~\eqref{em:mirror<-} for mirrors. Again, the reverse arrow $\leftarrow$ indicates that there is an ``apparent time reversal'' (see observer's property {\bf O5} and Figs.~\ref{f:mind-body}C; cf. the analogous mirror's property {\bf M5}). So, we introduce the following conjecture or observer's property:

       \begin{quote}
       \noindent {\bf O7. Reflexivity:} To describe the world from within, without any reference to external observers, two (sets of) observers must implement a reflexive coupling where they mutually observe each other (see Fig.~\ref{f:mind-body}). In analogy with Eqs.~\eqref{em:mirror<-} and \eqref{em:mirror->}, this is described by the pair of equations: 
       \BE
       \Delta^{\rm i} P^A_\ell &=& \left.{\Delta^{\rm e} P^B_\ell}\right|_A^{\leftarrow},\label{em:finally!B}\\
	       \Delta^{\rm i} P^{B}_\ell &=& \left.{\Delta^{\rm e} P^A_\ell}\right|_{B}^{\to}. \label{em:finally!A}
	       \EE
	       \end{quote}

	       So, the reflexive coupling implements a swap operation analogous to the one that turned Eqs.~\eqref{em:vN-1} and \eqref{em:vN-2}, which are equivalent to von Neumann equation, into Eqs.~\eqref{em:no-vN-1} and \eqref{em:no-vN-2}, which are formally analogous to the equations describing an embodied observer and its transposed---again, in this section we are considering the case of symmetric dynamical matrices, i.e., $J_a = 0$ so $J = J_s = J^T_s$. 

	       We will now explore in what sense Eqs.~\eqref{em:finally!B} and \eqref{em:finally!A} are indeed formally analogous to von Neumann equation. To begin, the right hand side of Eq.~\eqref{em:finally!A}, which is given by Eq.~\eqref{em:boxedPAR}, describes a situation analogous to that illustrated in Fig.~\ref{f:mind-body}B, where \textsc{Alice} and \textsc{Bob}, respectively, play the role of object (like $F$ in Fig.~\ref{f:homunculus}E), and subject (like $W$ in Fig.~\ref{f:homunculus}E). With this convention, the right hand side of Eq.~\eqref{em:finally!B}, $\obs[A]{\Delta P^B_\ell}^{\leftarrow}$, describes the complementary situation shown in Fig.~\ref{f:mind-body}C, where the roles of \textsc{Alice} and \textsc{Bob} are reversed. According to observer's property {\bf O5}, an exchange of roles entails an ``apparent time reversal'', which manifests in the change $\epsilon\to -\epsilon$. That is, 
	       \be\label{em:apparent-time-rev}
	       \obs[A]{\Delta^{\rm e} P^B_\ell}^{\leftarrow} = -\epsilon\left[J_s, P_\ell^B\right].
	       \ee
	       So, Eqs.~\eqref{em:finally!B} and \eqref{em:finally!A} can be written as
	       \BE
	       P^A_{\ell+1} &=& P^A_\ell- \epsilon\left[J_s, P^B\right],\label{em:finally!Aite}\\
		       P^{B}_{\ell + 1} &=& P^B_\ell + \epsilon\left[J_s , P^A_\ell\right].\label{em:finally!Bite}
		       \EE

		       If we can show that $P^A_\ell = (P^B_\ell)^T = P_\ell$, for all $\ell$, then Eqs.~\eqref{em:finally!Aite} and  \eqref{em:finally!Bite} become $\Delta P_\ell = -\epsilon [J_s,P_\ell^T]$ and $\Delta P_\ell^T = \epsilon [J_s,P_\ell]$, respectively, which are equivalent to Eqs.~\eqref{em:vN-1} and \eqref{em:vN-2} (again, with $J_a = 0$), which are in turn formally analogous to von Neumann equation, Eq.~\eqref{em:drhodt=[H,rho]}. So, all that we need to show is that $P^A_\ell = (P^B_\ell)^T$, for all $\ell$.  

		       We will start with the base case, i.e., 
		       \be\label{em:P0P0Tinit}
		       P_0^A = (P_0^B)^T. 
		       \ee
		       Indeed, according to Eqs.~\eqref{em:P_l=FFT}, the initial state of the dynamics of an embodied scientist is given by $P_0 = \widetilde{F}_n\widetilde{F}_n^T = P_0^T$, which is symmetric---here $\widetilde{F}_n = F_0^T\cdots F_{n-1}^T$ (see Eqs.~\eqref{em:Ftilde} and \eqref{em:F=FT2}). So, it seems natural to assume that \textsc{Alice} and \textsc{Bob} have the same initial symmetric state $P_0^A = P_0 = P_0^T = P_0^B$, in which case they are trivially related by a transpose operation. We summarize this in the following conjecture or observer's property:

		       \begin{quote}
{\bf O8. The initial states of two reflexively coupled observers is the same and symmetric:} This is given by $P_0 = \widetilde{F}_n\widetilde{F}_n^T = P_0^T$, which implies that the states of the two observers, \textsc{Alice} and \textsc{Bob}, are trivially related by a transposed operation, i.e., $P_0^A = P_0 = P_0^T = P_0^B$---so, Eq.~\eqref{em:P0P0Tinit} is naturally satisfied. We will discuss the case of more general initial states in Sec.~\ref{sm:genericStates} below.
\end{quote}

We now show that with the base case expressed in Eq.~\eqref{em:P0P0Tinit} we have that $P_\ell^A=(P_\ell^B)^T$, for all $\ell$, and so Eqs.~\eqref{em:finally!B} and \eqref{em:finally!A} are formally analogous to von Neumann equation, Eq.~\eqref{em:drhodt=[H,rho]}, for symmetric initial states and $J_a=0$. We will show this by induction, i.e., by showing that if $P_\ell^A = (P_\ell^B)^T$ at time step $\ell$ then $P_{\ell+1}^A = (P_{\ell+1}^B)^T$ at time step $\ell + 1$. We already have the base case for $\ell=0$, as expressed in Eq.~\eqref{em:P0P0Tinit}. So, let us assume that $P_{\ell}^A = (P_{\ell}^B)^T$ at a generic time step $\ell$. Taking the transpose of Eq.~\eqref{em:finally!Aite} yields $(P^A_{\ell+1})^T = (P^A_\ell)^T+ [J_s, (P^B_\ell)^T]$ since $J_s = J_s^T$ is symmetric. This equation can be written as 
\be\label{em:latter}
(P^A_{\ell+1})^T = P^B_\ell + [J_s, P^A_\ell],
	\ee
	since $(P_\ell^A)^T = P_\ell^B$. Now, the right hand side of Eq.~\eqref{em:latter} equals the right hand side of Eq.~\eqref{em:finally!Bite}. Therefore, the left hand sides of those same equations should be equal, i.e., $P_{\ell+1}^B = (P_{\ell+1}^A)^T$. Thus, Eq.~\eqref{em:finally!Aite} is the transposed of Eq.~\eqref{em:finally!Bite} as we wanted to show. 

	Alternatively, we can obtain the same result if we assume that an ``apparent time reversal'' manifests as a transposed operation as discussed in observer's property {\bf O5}. In this way, Eq.~\eqref{em:apparent-time-rev} can be replaced with the equation 
	\be\label{em:apparent-alt}
	\obs[A]{\Delta^{\rm e} P^B_\ell}^{\leftarrow} = \left(\obs[B]{\Delta^{\rm e} P^A_\ell}^\to\right)^T = -\epsilon\left[J_s, \left(P_\ell^A\right)^T\right].
	\ee
	It can be shown along the same lines that, under the condition expressed in Eq.~\eqref{em:P0P0Tinit}, Eqs.~\eqref{em:apparent-alt} and \eqref{em:finally!A} are also equivalent to von Neumann equation, Eq.~\eqref{em:drhodt=[H,rho]}. 

	So, this suggests that the reflexive coupling between two embodied observers can entail a dynamics that manifests aspects of a genuine, real-time quantum dynamics with an initial state given by a density matrix that is real, symmetric and has non-negative entries. We will discuss in Sec.~\ref{sm:genericStates} below the situation of more general density matrices. 

	However, there is a technical assumption in the reflexive coupling that we now make more explicit. Equations~\eqref{em:finally!Aite} and \eqref{em:finally!Bite} build on Eq.~\eqref{em:O1-embodiment}, describing the dynamics of an observer-as-object from the perspective of an observer-as-subject. However, Eqs.~\eqref{em:finally!Aite} and \eqref{em:finally!Bite} entail a dynamics which is different, in general, to the dynamics entailed by Eq.~\eqref{em:O1-embodiment}. So, at time step $\ell$, the probability matrix $P_\ell$ entailed by Eqs.~\eqref{em:finally!Aite} and \eqref{em:finally!Bite} can in principle be outside the domain of the dynamics described by Eq.~\eqref{em:O1-embodiment}---e.g., $P_\ell$ might have negative off-diagonal entries due to the minus sign associated to the ``apparent time reversal.'' 

	So, we are implicitly assuming that Eq.~\eqref{em:O1-embodiment} connects the dynamics in between any two consecutive time steps, even in such more general cases. 
	That is, we have to relax the constrain that the off-diagonal elements have to be non-negative, while keeping the constraint that the diagonal elements are probabilities. This is somewhat analogous to one of the postulates Feynman used to derive his path-integral formulation of quantum mechanics. That is, that in between two consecutive time steps all paths are characterized by the corresponding {\em classical} action, even though most of those paths may {\em not} be the ones that a classical particle would follow. In our case, in between two consecutive time steps observers are characterized by the dynamics that an observer-as-object would follow, as described by another observer-as-subject, even though this is not the actual dynamics that observers follow after the reflexive coupling is made.
	The fact that, after a reflexive coupling, the diagonal of the probability matrices involved still represent probabilistic information suggests that 
	an integrated approach that couples from the start both embodiment and reflexivity may help better formalize these ideas. 

	%
	\subsection{Quantum dynamics with more general Hamiltonians}\label{sm:generalH}

	Our discussion up to now has been restricted to symmetric dynamical matrices $J_s$ with non-negative off-diagonal entries, which can be interpreted in probabilistic terms via the corresponding factors $F_\ell = \id + \epsilon J_s + O(\epsilon^2)$. So, we have only considered real Hamiltonians, $H_s = -\hobs J_s$, with non-positive off-diagonal entries (see Eq.~\eqref{em:HmapJ})---here $\hobs$ is the analogue of Planck constant. Genuine quantum dynamics does not seem to have this restriction, though, since general Hamiltonians, $H=H_s + H_a/i$, can also have positive and complex entries which, according to Eq.~\eqref{em:HmapJ}, can entail asymmetric dynamical matrices with negative entries, $J = -H_s/\hobs - H_a/\hobs$---here $H_s$ and $H_a$ are symmetric and anti-symmetric operators, respectively. 

	We now discuss how our approach can accommodate this more general situation. We do this from two different perspectives.  First, we discuss a couple of examples to show that effective asymmetric dynamical matrices with off-diagonal negative entries can arise from approximations of systems described by symmetric dynamical matrices with non-negative off-diagonal entries. Second, we motivate the introduction of the last observer's property which enables us to extend our results to dynamical matrices with a non-zero anti-symmetric part. 
	%

	\subsubsection{Effective non-stoquastic Hamiltonians as approximations to stoquastic ones}
	To begin, consider the well-known case of a (two-dimensional) two-level atom which arises after truncating the full (infinite dimensional) model of an atom interacting with a coherent radiation field~\cite{haken2005physics} (see Sec.~15.3 therein; see Appendix~\ref{s:two_atom} herein). The full model consists of an electron, described by the momentum operator $i\hbar\nabla_\bx $, moving in a potential $V (\bx)$ produced by the nucleus---here $\mathbf{x}$ and $\nabla_\bx$ are the three-dimensional position vector and differential operator, respectively. So, the free atom---without the coherent radiation field---is characterized by the Hamiltonian
	\be\label{em:H0atom}
	H_0 = -\frac{\hbar^2}{2m}\nabla^2_\mathbf{x} + V(\mathbf{x}) = \sum_{n=0}^\infty E_n\left|n\rangle\langle n\right|
	\ee
	The expression after the first equality is the three-dimensional version of Eq.~\eqref{em:H}. So, this example can be handled completely with the tools we have developed up to now. The coherent radiation field is characterized by a time-dependent potential energy $U(\mathbf{x}, t) = \mathbf{r}\cdot\mathbf{D_0} \cos(\omega t)$, considered as a perturbation to $H_0$---here $\mathbf{D_0}$ is a suitable constant vector and $\omega$ is the frequency at which the coherent radiation field oscillates. The perturbed model can still be handled completely with the tools we have developed up to now by simply replacing $V(\mathbf{x})$ in Eq.~\eqref{em:H0atom} with $V(\mathbf{x}) +U(\mathbf{x}, t) $. 

	The expression after the second equality in Eq.~\eqref{em:H0atom} is the expansion in the eigenbasis of $H_0$. The two-level atom is obtained by assuming that the coherent radiation field is near resonance with two relevant energy levels---say, $E_0$ and $E_1$. 
	Under this assumption we can keep only two terms in the series after the second equality in Eq.~\eqref{em:H0atom} and handle the coherent radiation field as a perturbation. The two-level atom so obtained is described by an effective Hamiltonian 
	\be\label{em:Heff2atom}
	H_{\rm eff} = \overline{E}\id_2 - C\sigma_Z + D\cos(\omega t)\sigma_X,
	\ee
	where $\overline{E}$, $C$, and $D$ are suitable constants, and $\id_2 = \left|0\rangle\langle 0\right| + \left|1\rangle\langle 1\right|$, $\sigma_Z=\left|0\rangle\langle 0\right| - \left|1\rangle\langle 1\right|$ and $\sigma_X = \left|0\rangle\langle 1\right| + \left|1\rangle\langle 0\right|$, respectively, are the two-dimensional identity matrix, the Pauli matrix in the $Z$ direction, which is diagonal, and the Pauli matrix in the $X$ direction, which has zeros in the diagonal and ones in the off-diagonal. The factor $\cos(\omega t)$ multiplying $\sigma_X$ can have positive and negative values, depending on the value of $t$. So, the effective Hamiltonian described in Eq.~\eqref{em:Heff2atom} can have both negative and positive off-diagonal entries, even though the original Hamiltonian $H_0 + U$ from which it is derived only has negative off-diagonal entries, given by the first term after the first equality in Eq.~\eqref{em:H0atom}.

	Now consider the case of more general complex Hamiltonians, $H_\ell = H_{s,\ell}+H_{a,\ell}/i$, which are associated to asymmetric dynamical matrices $J_\ell = J_{s,\ell} + J_{a,\ell}$ (see Eq.~\eqref{em:HmapJ}).
	There are also examples where these kinds of Hamiltonians with complex entries can be obtained as approximations to real Hamiltonians with non-positive entries. For instance, Vinci and Lidar discuss the case of superconducting flux qubits described by a real Hamiltonian with non-positive off-diagonal entries characterized by a kinetic term of the form $-(E_C/2)\sum_j\partial^2/\partial\phi^2_j$, where $\phi_j$ refer to the magnetic fluxes trapped by the flux qubits, and $E_C$ represents a charging energy~\cite{vinci2017non} (see Eq.~(1) therein). After some approximations Vinci and Lidar obtain an effective Hamiltonian that has a purely imaginary term, even though the original Hamiltonian is real and has non-positive off-diagonal entries~\cite{vinci2017non} (see Eq.~(11) therein). 

	This invites the question of whether general Hamiltonians can be obtained as approximations of real Hamiltonians with non-positive entries, which can be accommodated in our approach. Trying to answer this question might help us better understand what is fundamental in quantum theory and what is simply approximation methods. Anyways, we now discuss the final observer's property, which will enable us to obtain a class of Hamiltonians with complex entries.  

	\subsubsection{Observers only experience relative changes}\label{sm:obs-only-rel}
	Now consider the case of a free particle given by the real Hamiltonian in Eq.~\eqref{em:H} with $V = 0$. This yields the Schr\"odinger equation 
	\be\label{em:diffSchrodinger}
	i\hbar\frac{\partial\psi}{\partial t} = -\frac{\hbar^2}{2m}\frac{\partial^2\psi}{\partial x^2},
	\ee
	whose classical analogue is the diffusion equation with zero drift, i.e., $\partial p/\partial t = D\partial^2p/\partial x^2$. 
	Equation~\eqref{em:diffSchrodinger} acquires an imaginary part if we change to a reference frame moving with velocity $v$~\cite{padmanabhan2011nonrelativistic} (see Eqs.~(14)-(16) therein) 
	\be\label{em:diffSchrodingerDrift}
	i\hbar\frac{\partial\psi}{\partial t} = -\frac{\hbar^2}{2m}\frac{\partial^2\psi}{\partial x^2} + i \hbar v\frac{\partial\psi}{\partial x}.
	\ee
	This is because the time derivative accordingly changes as $\partial/\partial t\to\partial /\partial t + v\partial/\partial x$, and the time derivative is multiplied by $i$.
	The classical analogue of this is a diffusion equation with non-zero drift $\partial p/\partial t = D\partial^2p/\partial x^2 - v\partial p/\partial x$. Of course, the new term in Eq.~\eqref{em:diffSchrodingerDrift} can be canceled by introducing the rule that the wave function also has to be multiplied by a suitable phase under a change of reference frame~\cite{padmanabhan2011nonrelativistic} (see Eqs.~(21) and (22) therein). However, refraining from introducing this rule will help us illustrate the point we want to make. 

	So, the new Hamiltonian in the moving reference frame is $H_\ell = H_{s, \ell}+H_{a,\ell}/i$, with $H_{s,\ell} = -(\hbar^2/2m)\partial^2/\partial x^2$ and $H_{a,\ell} = \hbar v \partial / \partial x$, which according to Eqs.~\eqref{em:vN-1} and \eqref{em:vN-2} yields
	\BE
	\Delta P_\ell - \epsilon [J_{a, \ell} , P_\ell] &=& -\epsilon[J_{s, \ell}, P^T_\ell], \label{em:free1PPa}\\
		\Delta P^T_\ell - \epsilon [J_{a, \ell}, P^T_\ell] &=& \epsilon [J_{s, \ell}, P_\ell],\label{em:free1PPb}
		\EE
		where we have moved the term containing $J_{a,\ell}$ to the left hand side. Here $J_{s, \ell}=-H_{s,\ell}/\hbar$, $J_{a, \ell} = -H_{a,\ell}/\hbar $, and $\rho_\ell = (P_\ell + P_\ell^T)/2 + (P_\ell - P_\ell^T)/2i$.

		The commutators in the left-hand side of Eqs.~\eqref{em:free1PPa} and \eqref{em:free1PPb} here arise from a change of reference frame. The similarity between Eqs.~\eqref{em:free1PPa} and \eqref{em:free1PPb} and the general von Neumann equations, Eqs.~\eqref{em:vN-1} and \eqref{em:vN-2}, suggests that the latter have the structure of a reflexive coupling between two (sets of) observers, combined with something that looks like a reference frame that moves relative to the observers involved in the reflexive coupling. 

		Now, consider two mirrors involved in a reflexive coupling. If in some frame of reference $L$ the two mirrors are moving with the same velocity, $\bv$, they will reflect each other as being static---this is a simple instance of Galilean relativity. So, to determine what the mirrors reflect from the perspective of the reference frame $L$, we first have to remove the motion of the mirrors relative to $L$. The two mirrors cannot reflect the motion associated to $\bv$ because that motion is common to both mirrors in the sense that, from the perspective of $L$, they both move with the same velocity $\bv$. In other words, mirrors involved in a reflexive coupling only reflect changes relative to each other. 

		Similarly, we conjecture that the commutators in the left-hand side of Eqs.~\eqref{em:free1PPa} and \eqref{em:free1PPb} could be considered as removing the changes that are common to the two observers in the reflexive coupling, changes that, as  in the example of the quantum free particle discussed here, could arise due to the motion of the frame of reference. We can capture this intuition in the following conjecture or observer's property: 
		\begin{quote}
{\bf O9. Observers involved in a reflexive coupling can only experience relative changes:}
Two (sets of) observers involved in a reflexive coupling cannot observe the part characterized by $J_{a,\ell}$, since this refers to changes that are common to both observers---these changes could arise due to a change of frame of reference. So, when performing the reflexive coupling associated to a general dynamical matrix $J_\ell= J_{s, \ell} + J_{a, \ell}$, we first have to subtract the anti-symmetric part, precisely as in Eqs.~\eqref{em:free1PPa} and \eqref{em:free1PPb}. 
\end{quote}

Observer's property {\bf O9} leads to the analogue of von Neumann equation with a class of non-stoquastic Hamiltonians with complex entries. 
For instance, consider the real Hamiltonian-like function
\begin{widetext}
\be\label{em:HEM}
\mathcal{H}_{\rm EM}(\bx , \bx^\prime) = \frac{m}{2}\left(\frac{\bx - \bx^\prime}{\epsilon}\right)^2 + V\left(\frac{\bx+\bx^\prime}{2}, t\right) + \frac{e}{c}\left(\frac{\bx-\bx^\prime}{\epsilon}\right)\cdot\mathbf{A}\left(\frac{\bx+\bx^\prime}{2}, t\right) +\frac{e^2}{m c^2} \left[\mathbf{A}\left(\frac{\bx+\bx^\prime}{2}, t\right)\right]^2 . 
\ee
\end{widetext}
This leads to factors with an anti-symmetric component due to the third term in the right hand side, which is linear in $\bx-\bx^\prime$. Dealing with this anti-symmetric component according to observer's property {\bf O9} leads to the Schr\"odinger equation of a quantum particle in an electromagnetic field $\mathbf{A}$, i.e., 

     \be\label{em:SchEM}
     \begin{split}
     i\hbar\frac{\partial \psi(\bx,t)}{\partial t}=& -\frac{\hbar^2}{2m}\left(\nabla - i \frac{e}{\hbar c}\mathbf{A}\right)^2\psi(\bx,t)\\
	     &+ e V(\bx,t)\psi(\bx,t).
	     \end{split}
	     \ee
	     We show this in Appendix~\ref{s:EM}.

	     \

	     \subsection{More general initial quantum states and observables}\label{sm:genericStates}

	     Here we show how the previous results can be extended to more general initial density matrices and observables. In our approach the initial density matrix is given by $\rho_0 = P_0= \widetilde{F}_n\widetilde{F}_n^T$, which is real and so symmetric. This is not necessarily a restriction, though. More general ``initial'' quantum states $\rho_{\rm prep}= U_{\rm prep}\rho_0 U_{\rm prep}^\dag$ can be prepared after applying a suitable quantum operation $U_{\rm prep}$ to $\rho_0$. In principle, we can write $U_{\rm prep} = U_m\cdots U_0$, for a suitable number of time steps $m$, where each $U_\ell = \id - i\epsilon H_\ell/\hobs$ is obtained from a factor $F_\ell = \id + \epsilon J_\ell$ via $H_\ell = -\hobs (J_{s, \ell} + J_{a,\ell}/i)$---here $\ell=0,\dotsc , m$.

	     Furthermore, we have mostly focused on one observable, i.e., position. However, according to Feynmann ``all measurements of quantum-mechanical systems could be made to reduce eventually to position and time measurements (e.g., the position of the needle on a meter or the time of flight of a particle). Because of this possibility a theory formulated in terms of position measurements is complete enough in principle to describe all phenomena''~\cite{feynman2010quantum} (p. 96).

	     Indeed, this view aligns with our focus on modeling how science is actually performed {\em in practice} and how concepts are constructed out of this. For instance, the concept of momentum $p=-i\hbar\partial/\partial x$ of a particle is usually taken as existing in an abstract space. However, to actually measure momentum in practice we need to make the focus system interact with another system that serves as a measuring device. The measuring device can be the position $X$ of a probe particle that interacts with the focus system. 

	     The concept of momentum can emerge out of the description of what actually happens in practice while measuring it. Indeed, let the initial state of the probe be a Gaussian, $\psi_{\rm dev}(X)\propto e^{-X^2/4\sigma^2}$, centered around zero, with $0<\sigma\ll 1$ characterizing the initial uncertainty on the probe's position. Since $e^{i k x}$ are eigenfunctions of the momentum operator, with eigenvalues $\hbar k$, it is convenient to write the initial state of the system as $\psi_{\rm sys}(x)\propto \int c_k e^{i k x}\mathrm{d} k$, where $c_k$ are suitable coefficients. After a suitable interaction between the two particles, we can obtain the joint state $\Psi(x,X)\propto \int c_k e^{i k x} e^{-(X- g_0\hbar k)^2/4\sigma^2}\mathrm{d}x$, where $g_0$ is a constant, and the corresponding joint probability $\mcP(x, X) = |\Psi(x,X)|^2$~\cite{svensson2013pedagogical} (see Sec.~4.1 therein; see also Appendix~\ref{a:momentum} herein). 

	     However, in practice we are {\em not} interested in observing $x$, but on {\em inferring} the momentum of the system by observing {\em only} the probe. Marginalizing $x$ yields $\mcP_{\rm dev}(X) \approx\int {|c_k|^2} \delta(X- g_0 \hbar k )\mathrm{d}k$ for $\sigma\ll 1$, where $\delta$ is the Dirac delta function. Thus with probability $\propto\, |c_k|^2$ the position of the probe, $X= g_0\hbar k$, is proportional to the momentum eigenstate $\hbar k$. This is the (projective) measurement postulate of quantum theory. 

	     \

	     \section{Discussion}\label{sm:discussion}
	     %
	     As we have mentioned in Sec.~\ref{sm:brief}, this work is to be read as a set of conjectures informed by the philosophy of mind and a reverse-engineering of science and quantum physics---see Sec.~5 in Ref.~\cite{realpe2} for a brief conceptual presentation of the main ideas involved. Here we summarize our work up to now and place it in the landscape of the philosophy of mind. This can be interpreted as a potential answer to the question posed in the title of this work. Of course, such an answer will not be conclusive as we have only presented a set of conjectures. Importantly, what we conjecture is not that the mind somehow ``emerges'' out of quantum physics, but the other way around. We ask: could quantum physics emerge from modelling classical observers, with both an objective and a subjective aspect, interacting with classical experimental systems?

	     \subsection{Embodiment and imaginary-time quantum dynamics}
	     Our results on embodiment suggest that scientists do {\em not} actually escape the action-perception loop but only {\em appear} to do so by describing the world with some aspects of quantum theory. More precisely, the circular causality associated to embodiment, as traditionally understood, can manifest aspects of imaginary-time quantum dynamics (see Sec.~\ref{sm:third}), which already manifests genuinely quantum phenomena, like (constructive) interference (see Sec.~\ref{sm:two_slit}). This gives the impression that scientists can actually obtain an observer-independent description of the world---a world that just happens to manifests these quantum aspects for no reason. However, our results suggest that there is indeed a reason for why scientists would have to use these quantum-like tools: to effectively account for the observer and the experimental context. 

	     As we said, imaginary-time quantum dynamics already displays some quantum-like features~\cite{Zambrini-1987}. For instance, using our framework we have shown that a classical two-slit experiment can entail constructive interference (see Sec.~\ref{sm:two_slit}). This provides a fresh perspective to think about quantum interference. When an embodied observer has information about which slit the particle goes through---e.g., when only one slit is open---this has to be reflected in the physical correlates of the experiment ``internal'' to her. So, the variable $x_1$ describing the slit, which is external to the observer, and the variable $x_3$ associated to the corresponding physical correlate, which is internal to the observer, must be equal, i.e., $x_3 = x_1$ (see Fig.~\ref{f:slits}). In this view, the imaginary-time version of quantum interference arises because, when
	     an embodied observer cannot access any information about which slit the particle goes through, the values of $x_1$ and $x_3$ do not have to coincide even though they refer to the same ``thing'' (i.e., the slits).

	     More technically, we have shown that an embodied scientist performing an experiment can be described in terms of a real probability matrix that satisfies an imaginary-time von Neumann equation. Alternatively, using the cavity method of statistical mechanics, we have shown that, when the experimental system is initially located at a given position $x_0=x_0^\ast$, which is an instance of a pure state, embodied scientists can describe the world in terms of forward and backward cavity messages, $\mu_{\to}^\ast$ and $\mu_{\leftarrow}^\ast$. These forward and backward cavity messages are formally analogous to the imaginary-time wave function and its conjugate, respectively, and follow a belief propagation dynamics described by the imaginary-time Schr\"odinger equation and its conjugate. The imaginary-time analogue of the Born rule for the probability $p(x, t)$ for being at location $x$ at time $t$ is the standard rule of the cavity method: $p (x,t) = \mu_{\to}^\ast(x,t)\mu_\leftarrow^\ast(x,t)$.

	     However, according to Eq.~\eqref{em:matrix}, given a set of factors, $F_0, \dotsc , F_{k-1}$, characterizing the internal and external dynamics of an embodied scientist performing an experiment, the initial state is restricted to $P_0=F_{k-1}\cdots F_0$. Although there is some freedom to choose the initial state if some of the factors are considered as preparing the initial probabilistic state---say $F_\ell$ for $\ell = 0,\dotsc ,\ell_{\rm prep}$, it is not clear that this freedom is enough to cover all possible initial states. This situation is not alien to genuine, real-time quantum mechanics, though. Indeed, Aharonov {\em et al.}~\cite{aharonov2023conservation} argued recently that not any quantum state can be prepared in nature.

	     \subsection{Reflexivity and real-time quantum dynamics}
	     Why imaginary-time and not genuine, real-time quantum dynamics? We discussed this extensively based on ideas of reflexive systems (see Sec.~\ref{sm:first}). We introduced some conjectures characterizing an observer, and we showed that these lead to a dynamics with aspects of a genuine, real-time quantum dynamics---we refer to these conjectures collectively as the {\em reflexive coupling hypothesis}. 

	     However, Sec.~\ref{sm:first} should not be considered as a rigorous derivation of real-time quantum dynamics from reflexivity. The reason is that, reflexivity being a rather subtle and scarcely studied subject, the connection between the ideas of reflexivity and the conjectures we introduce may not be completely clear. We introduce this section in this work because we find it conceptually plausible and we hope it can suggest future research on potential connections between the philosophy of mind and quantum physics. The literature on reflexive systems is rather scarce and we hope that an interdisciplinary approach to this topic could help further clarify or improve the conjectures we introduce here. So, with this in mind, let us summarize our work and discuss its potential connections with the philosophy of mind.

	     Our approach leads us to conjecture that the core message of quantum theory might be encoded in two core principles: (i) Observers are embodied; this is associated to observer's property {\bf O1}. (ii) The world must be described from within---that is, without any reference to external observers; this is effectively associated to observer's properties {\bf O2}-{\bf O9}. Principles (i) and (ii) entail embodiment and reflexivity, respectively. Here embodiment refers to the idea that when scientists interact with an experimental system they are involved in an instrumentally-mediated action-perception loop (see Figs.~\ref{f:effective}B, \ref{f:circular_noRQM}A). Reflexivity refers to the idea that for scientists to describe the world without any reference to external observers, two (sets of) observers should mutually describe each other (see Fig.~\ref{f:mind-body})---in this sense, observers are relative to each other rather than to an external, unacknowledged observer. Embodiment and reflexivity are rather generic concepts, and our model is rather minimal. So, our approach should not be restricted to a specific kind of observer. 

	     We assume that a scientist, say Fabienne, doing an experiment does not exist in an absolute sense, but she is relative to another external observer, say Wigner. In this case Fabienne plays the role of an object being observed by Wigner---this additional observer is typically neglected in cognitive science. Now, to be consistent, we should also take into account the observer that observes Wigner. Otherwise, Wigner would exist in an absolute, non-relational way. So, who does observe Wigner? If we added another observer, say Rovelli, so that Wigner becomes an object of observation for Rovelli, we would be headed into an {\it infinite regress}. Indeed, we can now ask: who does observe Rovelli? And so on (see Figs.~\ref{f:homunculus}A-D). 

	     We escaped this infinite regress in two steps. First, we assumed that observers play two complementary roles: as objects being observed by other observers and as ``subjects'' that can observe other objects, including {\it other} observers (see Fig.~\ref{f:homunculus}E). Here ``subject'' is used in the strict technical sense of the opposite of object, or not-an-object. For instance, in the example above Fabienne plays the role of an object for Wigner, but Wigner plays the role of a subject for himself in the sense that Wigner cannot completely become an object of observation for himself (see Fig.~\ref{f:homunculus}E). This is analogous to a mirror that cannot directly reflect itself---though it can do so indirectly with the help of another mirror. Acknowledging that Wigner cannot become an object of observation for himself stops the chain of infinite regress since we do not need to keep on adding observers-as-objects {\it ad infinitum}. We acknowledge that there is something that cannot become an object of observation, a ``subject.'' This reminds us of Thompson's words~\cite{thompson2014waking} (p. 100):
		     \begin{quote}
		     ``Consciousness is our way of being, and it cannot be objectified, that is, treated as just another kind of object out there in the world, because it is that by which any object shows up for us at all.''
		     \end{quote}

		     This may be a subtle point, so let us discuss it a bit more. Please imagine, for instance, that you observe a physical system $S$. Besides $S$, which is external to you, there are physical processes inside you that allow you to experience $S$, e.g., the neural correlates associated to your experience of $S$---denoted here as $\{S\}$. For $S$ and $\{S\}$ to be correlated, there has to be a physical interaction between them. Such a physical interaction may or may not be negligible; however, we do not want to {\it a priori} neglect it, but to determine {\it a posteriori} whether we can do so. While $\{S\}$ is key in allowing you to experience $S$, it is absent for you in the sense that you cannot simultaneously and directly experience both $S$ and $\{S\}$ as objects of observation. If you simultaneously and directly observe both $S$ and $\{S\}$ you are observing a different physical object, i.e., $S+\{S\}$, with associated physical correlates $\{S+\{S\}\}$, which you cannot directly observe at the same time that you observe $S+\{S\}$. Again, this is the analogue of a mirror that cannot directly reflect itself.

		     Now, when you play the role of a subject that observes a system $S$, you describe $S$ from your own first-person perspective (1PP)---you can refer to this as ``I observe $S$.'' In contrast, when somebody else observes you, while you observe $S$, you play the role of an object, you are being described from a third-person perspective (3PP)---the other observer can refer to you as ``he observes $S$.'' In analogy with this, we denoted an observer-as-subject, say Wigner ($W$), as $\I{W}$. Here the symbol $I_W$ refers to Wigner's 1PP, while the cancellation refers to its ``absential nature'' to use an expression by Deacon~\cite{deacon2011incomplete} (ch. 0)---this cancellation is inspired in Heidegger {\it sous erasure}. Here $\I{W}$ is similar to the number zero in that it is a placeholder to indicate the absence of something.

		     Due to this ``absential nature'' we cannot directly access the degrees of freedom of the physical correlates of $\I{W}$. So, we model such physical correlates indirectly in the same way that we model the inaccessible degrees of freedom of a thermal bath, i.e., as fluctuations or noise characterized by a temperature- or diffusion-like constant, $\hobs$, which in this approach plays a role similar to that of Planck's constant, $\hbar$. Such fluctuations are all-pervasive and irreducible since $\I{W}$ is always present whenever an object is being observed and cannot be reduced to an object of observation like, e.g., the atoms of a thermal bath. These properties are analogous to the properties of actual quantum fluctuations, which are also all-pervasive and irreducible. If it turns out that $\hobs = \hbar$, the bar in $\hbar$ could remind us of the ``absential nature'' of observers-as-subjects.

		     Here we built on an {\em analogy} with Thompson's neurophenomenological perspective~\cite{thompson2014waking} which distinguish between contents of experience and experience as such---i.e., the mere capacity to experience any content at all. For Thompson, while contents of experience are objects of observation, experience as such cannot become an object of observation because it is the very precondition to experience any object at all. In line with this, we distinguish between the physical correlates $[S]$, that allows a subject, say Wigner, to discriminate between $S$ and any other system $S^\prime\neq S$, and the physical correlates associated to the mere capacity to experience any system at all, which we refer to as $[\I{W}]$ (see below). So, here $\{S \} = [S]+[\I{W}]$. In principle, $[S]$ can be {\it inferred} from $S$ because they both refer to the same object. In contrast $[\I{W}]$ refers to the mere capacity to experience from a 1PP, not to any specific system. In line with Thompson, we assumed $[\I{W}]$ to be fundamentally inaccessible to Wigner, and thus we modeled it as noise.

		     Anyways, coming back to our previous discussion of Fabienne and Wigner, assuming that Wigner plays the role of a subject and Fabienne that of an object allows us to stop the infinite regress. However, the description is still incomplete because we are neglecting the role of Wigner as object and that of Fabienne as subject. This brings us to the second step to escape the infinite regress. We have to implement a reflexive coupling between Fabienne and Wigner where they play both roles as the subjects that observe each other from a 1PP, and the objects being observed by each other from a 3PP. So, while embodiment implements a kind of circular relationalism between observers-as-object and experimental system, reflexivity implements a kind of circular relationalism between subject and object, between the 1PP and the 3PP (see Fig.~\ref{f:mind-body}). We could summarize this by saying that every experience has a physical correlate and that every physical phenomenon is an experience for someone---this does not imply that rocks have experiences, but that rocks are rocks for someone who experience them as such. This is {\em analogous} to what is sometimes referred to in the literature as ``subject-object non-duality''~\cite{thompson2014waking,tang2015neuroscience}---i.e., the interdependence between subject and object. 

		     This perspective is analogous to a view emerging in the philosophy of mind. Indeed, Fig.~\ref{f:mind-body} resonates with Sharf's words~\cite{deguchi2021can} (p. 156 and 160):

			     \begin{quote}
			     ``Philosophers sometimes speak of the two opposing positions as the first-person or subjective point of view, and the third-person or objective point of view [...] [T]he subjective and objective perspectives constitute two poles of an antinomy; they are not merely interdependent but also subsume and enfold one another [...] The world is within me, and I am within the world. And it is impossible to specify where one perspective ends and the other begins; they fold back upon one another seamlessly, like the two sides of a M\"obious strip.''
			     \end{quote}

			     Or Merleau-Ponty's~\cite{merleau1962phenomenology} (p. 430):

				     \begin{quote}
				     ``The world is inseparable from the subject, but from a subject which is nothing but a project of the world, and the subject is inseparable from the world, but from a world which the subject itself projects.''
				     \end{quote}

				     Our work aligns with growing evidence suggesting that in quantum theory facts are relative~\cite{brukner2020facts}. Indeed, our approach shares some elements with the main interpretations of quantum theory that endorse relative facts~\cite{Rovelli-1996,mermin2014physics,debrota2018faqbism,fuchs2013quantum,pienaar2021qbism,brukner2017quantum}: Like QBism, our approach explicitly acknowledges the role of scientists in science~\cite{mermin2014physics,fuchs2013quantum,debrota2018faqbism,pienaar2021qbism}.
				     Like the (neo-)Copenhagen interpretation of quantum theory~\cite{brukner2017quantum}, our approach explicitly acknowledges the experimental context. These elements already appear when we take account of an embodied scientist, i.e. they do not need reflexivity (see Fig.~\ref{f:circular_noRQM}). 
				     Like relational quantum mechanics (RQM)~\cite{Rovelli-1996,rovelli2021helgoland}, our approach assumes that the relation ``an observer observes a phenomenon'' is itself relative to another observer. This element appears when dealing with reflexivity (see Fig.~\ref{f:homunculus}).

				     However, there are important differences too. To begin, QBism treats scientists as rather abstract agents immersed in a publicly shared physical universe. The quantum formalism is seen as a normative criterion~\cite{debrota2018faqbism} (see Sec. 18 therein) setting ``the standard to which agents should strive to hold their expectations''. Such agents are betting, implicitly or explicitly, on their subsequent experiences, based on earlier ones, and the quantum formalism is a tool to help them place better bets~\cite{mermin2018making} (p. 8). Even though agents are acknowledged to be physical systems and also part of the world, QBism's emphasis on agent's subjective beliefs seems to move it a bit away from the objective side of things and towards the subjective. As DeBrota and Stacey said recently ``subjective judgments [...] comprise much of the quantum machinery''~\cite{debrota2018faqbism}. There is no mention of the quantum formalism emerging out of the dynamics of agent and world, as we suggest here. 

				     On the other hand, regarding RQM, unlike us, Rovelli treats observers as generic quantum systems---that is, as far as physics is concerned, for Rovelli there are no relevant differences between an electron and an observer. Furthermore, for Rovelli to ``think that a human being, their mind [...] plays any special role in the grammar of nature is nonsense''~\cite{rovelli2021helgoland} (p.~140).

				     Our approach suggests a middle way between QBism and RQM since we treat observers as physical systems with a dynamical role to play, but not as any kind of physical system. Instead of working with an {\em abstract} notion of what physicists might {\em assume} an observer is, however, our approach attempts to build on general insights gained by the areas of science dedicated to investigate {\em actual} observers. Moreover, we do not treat observers as quantum systems, but as {\em classical} cognitive systems. Importantly, non-trivial aspects of the quantum formalism seem to emerge out of two key concepts: embodiment and reflexivity. To clarify, observers can be described by {\em other} observers as quantum systems since they are physical systems too. However, our approach suggests that this is due to the relationalism between {\em classical} observers and classical experimental systems, which manifests as embodiment and reflexivity. 

				     In our approach the kind of relationalism associated to quantum theory seem to be the {\it circular} co-dependence between perceiver and world, which, according to Varela {\it et al.}~\cite{varela2017embodied} (p. 172), allows embodied cognition to find a middle way between materialism and idealism. Indeed, in this approach neither the perceiver nor the world are primary, but they depend on each other like ``two sheaves of reeds propping each other up.'' More precisely, the kind of relationalism associated to our approach suggests a double circularity, one associated to embodiment and another to reflexivity---a kind of ``strange loop''~\cite{hofstadter2013strange}. In contrast, Rovelli conceives relationalism in terms of a generic network of relations, not in terms of circularity. A network of relations that does not seem to include the mind, as if the mind were an island independent of the physical world. This contrasts with the Buddhist tradition, a tradition that Rovelli tries to relate to his relational interpretation~\cite{rovelli2021helgoland} (ch.~5)---more specifically, the particular tradition known as the Middle Way or Madhyamaka~\cite{garfield1995fundamental,westerhoff2009nagarjuna}. However, the tradition considered to be the best interpretation of Madhyamaka is Madhyamaka-Pr\a=asa\.ngika, which includes the mind in the network of relations. Indeed, according to Westerhoff, this tradition maintains that~\cite{westerhoff2024candrakirti} (p. 114):

					     \begin{quote}
					     ``[T]he mind is part of the network of dependent origination like everything else, [hence] its existence is thereby regarded as only dependent, but cannot be fundamental.''
					     \end{quote}

					     Here ``dependent origination'' refers, na\"ively, to ``relations'' (see, e.g., Ref.~\cite{garfield2014engaging}, p. 25-36, for a more precise description of this expression). Rovelli seems to see his relational interpretation mostly in terms of causal relations. However, according to Westerhoff, the Madhyamaka-Pr\a=asa\.ngika tradition maintains that~\cite{westerhoff2009nagarjuna} (p.~124):

					     \begin{quote}
					     ``[T]he causal relation does not exist from its own side, is conceptually constructed... it follows that each material object must be conceptually constructed.''
					     \end{quote}

					     So, for the Madhyamaka-Pr\a=asa\.ngika tradition neither the mind nor the material world is fundamental, as they depend on each other; there is no ground to which we can grasp.

					     The prospects of a science that depends on the observer may be seen as something negative. However, we believe this possibility could also be seen in a positive light. To begin, if taking explicit account of the observer indeed happens to entail the formalism of quantum theory, this means that an observer-dependent science does not have to violate current scientific knowledge.
					     It would only violate our {\em assumption} that we have the special status of understanding the world from a disembodied perspective, independent of our capacity to experience it, as if we were not part of the world.  Indeed, we might learn something useful by rigorously investigating whether indeed we have this special status in the same way that our understanding of the universe advanced when we doubted our special status of being at the center of it. 

					     Furthermore, the prospects of an observer-dependent science suggest a potential relationship between quantum physics and the areas of science that rigorously investigate observers, such as cognitive science and neurophenomenology. In particular, if the quantum formalism---the foundation on which the skyscraper of science stands---already integrates subject and object, the 1PP and the 3PP, the question of how subjective experience ``emerges'' out of physics might become more tractable. What would emerge is not experience as such but increasingly complex contents of experience. 
					     This would parallel the emergence of increasingly complex physical phenomena
					     from the ``basic constituents of matter.'' 

					     Importantly, this also suggests new kinds of experiments, where the objective and subjective aspects of the observer can be part of the experimental setup. For instance, rigorous mind training techniques may allow scientists to directly experience the relational nature of the world or the random fluctuations that we associate to the observer-as-subject. There have been reports about this in the Buddhist tradition even before the advent of science as we know it: the former seems related to the experience of emptiness (see, e.g., Ref.~\cite{garfield2014engaging}, ch.~3, Ref.~\cite{bitbol2019two} or Ref.~\cite{rovelli2021helgoland}, ch.~5) and the latter seems related to the experience of a ``subtle energy'' associated to what is considered the most fundamental aspect of consciousness, i.e., ``pure awareness.'' On the latter Thompson says~\cite{thompson2014waking} (p. 342-344):

						     \begin{quote}
						     ``The Dalai Lama [said] that the physical basis for pure awareness is a subtle energy whose presence can be felt in the body. This energy [...] is said to carry all excitation and movement, including at the level of cells. [...] The Dalai Lama suggested that the scientific concept of matter may need to be modified in order to appreciate this energy.'' 
						     \end{quote}

						     Notice that, in our approach, there is a kind of modification to the concept of matter when dealing with the physics of the observer-as-subject. Indeed, we cannot model it in terms of objects following cause-effect mechanisms, as traditionally done in science. Rather, as we said, we have to model it as an all-pervasive noise irreducible to lower-level mechanisms, which looks similar to the irreducible, all-pervasive fluctuations in quantum physics.

						     \

						     \noindent {\bf Data availability:} Data sharing not applicable to this article as no datasets were generated or analysed during the current study.

						     \acknowledgements

						     I thank Marcela Certuche, Harold Certuche and Blanca Dominguez for their support, financial and otherwise, during a substantial part of this project. I thank Tobias Galla, Alan J. McKane, and the University of Manchester for their support at the beginning of this project. I thank Guen Kelsang Sangton for insightful discussions on Buddhist philosophy. 
						     I thank Marcin Dziubi\'nski for his brief but useful lessons on recursion and self-reference. I thank Shailesh Date and Jose Jaramillo for useful comments. I thank Max Velmans, Michel Bitbol, Jerome Busemeyer, Diana Chapman Walsh, Alejandro Perdomo-Ortiz, Addishiwot Woldesenbet Girma, Delfina Garc\'ia Pintos, Marcello Benedetti, Kenneth Augustyn, John Myers, Marcus Appleby, Nathan Killoran, Markus M\"uller, Michael R. Sheehy, Nathan Berkovitz, Eduardo Pont\'on, Roberto Kraenkel, Camila Sardeto Deolindo, Cerys Tramontini, Hernan Ocampo, Oscar Bedoya, Gonzalo Ordo\~nez, Maria Schuld and Robinson F. Alvarez for comments and constructive criticism. I thank Christopher A. Fuchs for clarifying comments regarding QBism. I thank Mariela G\'omez Ram\'irez and Nelson Jaramillo G\'omez for bringing my attention to these ideas. This research is funded in part by the Gordon and Betty Moore Foundation (Grant GBMF7617) and by the John Templeton Foundation as part of the Boundaries of Life Initiative (Grant 60973). I thank FAPESP grant 2016/01343-7 for funding my visit to ICTP-SAIFR from 20-27 January 2019 where part of this work was done.

						     \

						     \appendix

						     \appendix

						     \section{Particle in an electromagnetic field via real non-negative kernels}\label{s:EM}

						     Here we discuss the case of a quantum particle in a classical electromagnetic field, which is associated to a complex (and so non-stoquastic) Hamiltonian operator. We show that this can also be written in terms of non-negative real kernels. This adds further evidence that the non-negativity of the factors in our approach does not necessarily restrict it to stoquastic Hamiltonians. 

						     The Schr\"odinger equation of a particle of charge $e$ interacting with an electromagnetic field can be written as 
						     \be\label{e:Main_SchrodingerEM}
						     \begin{split}
						     i\hbar\frac{\partial \psi(\bx,t)}{\partial t}=& -\frac{\hbar^2}{2m}\left(\nabla - i \frac{e}{\hbar c}\mathbf{A}\right)^2\psi(\bx,t)\\
							     &+ e V(\bx,t)\psi(\bx,t),\\
							     \end{split}
							     \ee
							     where $\bx$ denotes the position vector in three dimensional space, while $V$ and $\mathbf{A}$ denote the scalar and vector fields respectively. Notice that the Hamiltonian associated to Eq.~\eqref{e:Main_SchrodingerEM} now contains an imaginary part given by the terms linear in $\mathbf{A}$ arising from the expansion of ${(\nabla - i e\mathbf{A} /\hbar c)^2\psi(\bx,t)}$.

							     We will show in a series of three theorems and a corollary that Eq.~\eqref{e:Main_SchrodingerEM} can be written as a pair of equations analogous to Eqs.~\eqref{em:vN-1} and \eqref{em:vN-2} in the main text with a non-negative real kernel. As we discussed in the main text, these pair of equations are equivalent to von Neumann equation.

							     In the first theorem and corollary, we show that Eq.~\eqref{e:Main_SchrodingerEM} and the corresponding von Neumann equation can be written in terms of convolutions with a complex-valued kernel $\mathcal{C}\propto e^{-\epsilon\widetilde{\mathcal{H}}_{EM}/\hbar}$, where $\widetilde{\mathcal{H}}_{EM}$ is a complex-valued Hamiltonian-like function. Afterwards, in the second theorem, we show that $\mathcal{C}$ can be replaced by a real-valued kernel $\mathcal{W}\propto e^{-\epsilon\mathcal{Q}_{EM}/\hbar}$ in the limit when $\epsilon\to 0$. However, $\mathcal{Q}_{EM}$ depends on $\hbar$, unlike the Hamiltonian-like function in Eq.~\eqref{em:non-relativisticH} in the main text, which is independent of $\hbar$. In the last theorem we show that in the limit $\epsilon\to 0$ it is possible to replace $\mathcal{W}$ by another real-valued kernel $\mathcal{K}\propto e^{-\epsilon\mathcal{H}_{EM}/\hbar}$, where $\mathcal{H}_{EM}$ is independent of $\hbar$. In this way we show that the von Neumann equation associated to Eq.~\eqref{e:Main_SchrodingerEM} can be written as a pair of real-valued matrix equations, like Eqs.~\eqref{em:vN-1} and \eqref{em:vN-2} in the main text, in terms of $\mathcal{K}$, which is real-valued and non-negative. 

							     We begin by showing that Eq.~\eqref{e:Main_SchrodingerEM} can be written in terms of a convolution with a complex-valued kernel in the following

							     \begin{theorem}
							     The Schr\"odinger equation for a charged particle in an electromagnetic field, Eq.~\eqref{e:Main_SchrodingerEM}, is equivalent to
							     \be\label{e:partial}
							     \epsilon\frac{\partial\psi}{\partial t} = i [\mathcal{C}\ast\psi - \psi],
							     \ee
							     in the limit $\epsilon\to 0$. Here 
							     \be\label{e:C*psi=}
							     [\mathcal{C}\ast\psi](\bx) = \int \mathcal{C}(\bx-\bx^\prime)\psi(\bx^\prime)\mathrm{d}^3\bx
							     \ee
							     denotes the convolution between the wave function $\psi$ and the kernel
							     \be\label{e:C}
							     \mathcal{C}(\bx , \bx^\prime) = \frac{1}{\mathcal{Z}_{EM}}\exp{\left[-\frac{\epsilon}{\hbar}\widetilde{\mathcal{H}}_{EM}(\bx, \bx^\prime) \right]} ,
							     \ee
							     where
							     %
							     \be\label{e:classicalH_EM}
							     \begin{split}
							     \widetilde{\mathcal{H}}_{EM}(\bx, \bx^\prime) =& \frac{m}{2}\left(\frac{\bx - \bx^\prime}{\epsilon}\right)^2 + V\left(\frac{\bx+\bx^\prime}{2}, t\right)\\
								     &- i\frac{e}{c}\left(\frac{\bx-\bx^\prime}{\epsilon}\right)\cdot\mathbf{A}\left(\frac{\bx+\bx^\prime}{2}, t\right),
							     \end{split}
							     \ee
							     %
							     and $\mathcal{Z}_{EM} = (2\pi\hbar\epsilon/m)^{3/2}$ is a normalization constant.
							     \end{theorem}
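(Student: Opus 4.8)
The plan is to prove Eq.~\eqref{e:partial} by a short-time Gaussian expansion of the integral operator $\psi\mapsto\int\mathcal{C}(\bx,\bx^\prime)\,\psi(\bx^\prime)\,\mathrm{d}^3\bx^\prime$, following exactly the route used in the main text to pass from the test-function integral Eq.~\eqref{em:test} and the Gaussian kernel Eq.~\eqref{em:F_Gaussian} to the operator identity $F_\ell=\id-\epsilon H/\hobs+O(\epsilon^2)$. First I would change variables to $\boldsymbol{\xi}=\bx-\bx^\prime$, so that the quadratic kinetic part of $\widetilde{\mathcal{H}}_{EM}$ in Eq.~\eqref{e:classicalH_EM} produces a normalized Gaussian $G(\boldsymbol{\xi})=\mathcal{Z}_{EM}^{-1}\exp[-m|\boldsymbol{\xi}|^2/(2\hbar\epsilon)]$ with $\langle\xi_i\rangle=0$ and $\langle\xi_i\xi_j\rangle=(\hbar\epsilon/m)\,\delta_{ij}$. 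Because this Gaussian concentrates on $|\boldsymbol{\xi}|=O(\sqrt{\hbar\epsilon/m})$, every remaining factor may be Taylor expanded in $\boldsymbol{\xi}$, treating $\boldsymbol{\xi}$ as order $\sqrt{\epsilon}$ and retaining only contributions surviving to order $\epsilon$ after integration.

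Concretely I would expand three factors: (i) the wave function $\psi(\bx-\boldsymbol{\xi})$ to second order in $\boldsymbol{\xi}$; (ii) the scalar-potential factor $\exp[-\epsilon V(\tfrac{\bx+\bx^\prime}{2},t)/\hbar]$ to first order in $\epsilon$, for which evaluating $V$ at $\bx$ suffices since the midpoint shift is an $O(\epsilon^{3/2})$ correction; and (iii) the phase factor $\exp[i\tfrac{e}{\hbar c}\,\boldsymbol{\xi}\cdot\mathbf{A}(\tfrac{\bx+\bx^\prime}{2},t)]$ coming from the imaginary term of Eq.~\eqref{e:classicalH_EM}, which, since the phase is itself $O(\sqrt{\epsilon})$, must be expanded to \emph{second} order, while simultaneously expanding $\mathbf{A}$ about the midpoint $\tfrac{\bx+\bx^\prime}{2}=\bx-\boldsymbol{\xi}/2$. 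Carrying out the Gaussian integrals (odd moments vanish, fourth moments are $O(\epsilon^2)$), the surviving $O(\epsilon)$ pieces are $\tfrac{\hbar\epsilon}{2m}\nabla^2\psi$ from the curvature of $\psi$, a cross term $-i\tfrac{e\epsilon}{mc}\,\mathbf{A}\cdot\nabla\psi$, a term $-i\tfrac{e\epsilon}{2mc}(\nabla\cdot\mathbf{A})\psi$ from the midpoint expansion of $\mathbf{A}$, a term $-\tfrac{e^2\epsilon}{2m\hbar c^2}|\mathbf{A}|^2\psi$ from the quadratic term of the phase expansion, and $-\tfrac{\epsilon}{\hbar}(eV)\psi$ from the scalar potential. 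These assemble precisely into $\tfrac{\hbar\epsilon}{2m}(\nabla-i\tfrac{e}{\hbar c}\mathbf{A})^2\psi-\tfrac{\epsilon}{\hbar}(eV)\psi=-\tfrac{\epsilon}{\hbar}H\psi$, with $H$ the electromagnetic Hamiltonian appearing in Eq.~\eqref{e:Main_SchrodingerEM}; that is, $\int\mathcal{C}(\bx,\bx^\prime)\psi(\bx^\prime)\,\mathrm{d}^3\bx^\prime-\psi=-\tfrac{\epsilon}{\hbar}H\psi+O(\epsilon^2)$.

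Finally I would close the argument by inserting the Schr\"odinger equation \eqref{e:Main_SchrodingerEM} in the form $H\psi=i\hbar\,\partial\psi/\partial t$, which turns $\mathcal{C}\ast\psi-\psi$ into $-i\epsilon\,\partial\psi/\partial t$; multiplying through by $i$ gives $i[\mathcal{C}\ast\psi-\psi]=\epsilon\,\partial\psi/\partial t$, which is exactly Eq.~\eqref{e:partial} in the limit $\epsilon\to0$.

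The delicate step is (iii), the treatment of the vector-potential phase. Two features must combine correctly: the linear-in-$\boldsymbol{\xi}$ part of the phase pairs with the first derivative of $\psi$ to produce $\mathbf{A}\cdot\nabla\psi$, while the \emph{midpoint} (symmetric) evaluation of $\mathbf{A}$ supplies the extra $\tfrac12(\nabla\cdot\mathbf{A})$ needed to reconstruct the symmetric ordering inside $(\nabla-i\tfrac{e}{\hbar c}\mathbf{A})^2$; an endpoint evaluation would silently drop this piece and break gauge covariance. Moreover the $|\mathbf{A}|^2$ contribution of the covariant Laplacian is absent from $\widetilde{\mathcal{H}}_{EM}$ altogether, so it must be generated by the second-order term in the exponential expansion of the phase. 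Verifying that all numerical prefactors conspire to yield exactly the gauge-covariant kinetic operator --- rather than, say, $\nabla^2-2i\tfrac{e}{\hbar c}\mathbf{A}\cdot\nabla$ with a missing or mis-weighted $\nabla\cdot\mathbf{A}$ --- is the crux of the computation. (The identification of the kernel's scalar term $V$ with the potential energy $eV$ of Eq.~\eqref{e:Main_SchrodingerEM} is merely a matter of convention and does not affect the structure of the argument.)
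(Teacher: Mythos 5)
Your proposal is correct and follows essentially the same route as the paper's proof: the change of variables $\bu=\bx-\bx^\prime$, the Gaussian moment expansion with $\langle u_ju_k\rangle=\delta_{jk}\hbar\epsilon/m$, the second-order expansion of the vector-potential phase with midpoint evaluation of $\mathbf{A}$ (yielding precisely the $-i\tfrac{e\epsilon}{mc}\mathbf{A}\cdot\nabla\psi$, $-i\tfrac{e\epsilon}{2mc}(\nabla\cdot\mathbf{A})\psi$ and $-\tfrac{e^2\epsilon}{2\hbar mc^2}\mathbf{A}^2\psi$ terms), and their reassembly into the gauge-covariant kinetic operator are exactly the paper's computation. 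The only cosmetic differences are that the paper substitutes the expansion of $\mathcal{C}\ast\psi$ into Eq.~\eqref{e:partial} and recovers the Schr\"odinger equation rather than running the implication in your direction, and the paper likewise glosses over the $V$ versus $eV$ labeling that you correctly flag as conventional.
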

							     \begin{proof}
							     We have to show that Eq.~\eqref{e:partial} is equivalent to Eq.~\eqref{e:Main_SchrodingerEM} in the limit $\epsilon\to 0$. To do so, notice that the Gaussian factor in the complex kernel $\mathcal{C}$ defined in Eq.~\eqref{e:C} associated to the kinetic term in Eq.~\eqref{e:classicalH_EM} has a variance proportional to $\epsilon$, which allows us to expand the other factors in the integral in Eq.~\eqref{e:C*psi=} around $\bx$ up to second order in $|\bx - \bx^\prime|$ or to first order in $\epsilon$, since $\epsilon\to 0$. More precisely, by introducing the variable ${\bu = \bx-\bx^\prime}$, so ${(\bx+\bx^\prime)/2 = \bx - \bu/2}$ as well as ${\bx^\prime = \bx - \bu}$, we can write

							     \begin{widetext}
							     \be\label{e:C*psi}
							     [\mathcal{C}\ast\psi](\bx , t) =\bra f(\bx, \bu, t)\left[\psi(\bx ,t) - \bu\cdot\nabla\psi(\bx ,t) + \frac{1}{2}\bu\cdot \mathbf{H}\psi(\bx, t)\cdot\bu \right]\ket_\bu + O(\epsilon^2),
							     \ee
							     \end{widetext}
							     where 
							     \be\label{e:Gauss...}
							     \bra\cdots\ket_\bu = \frac{1}{|\mathcal{Z}_{EM}|}\int \exp\left(-\frac{m \bu^2}{2 \hbar\epsilon } \right)(\cdots),
							     \ee
							     denotes the Gaussian average associated to the kinetic term in Eq.~\eqref{e:classicalH_EM}, $\mathbf{H}\psi$ stands for the Hessian or matrix of second derivatives of $\psi$. Furthermore, the function
							     \begin{widetext}
							     \be\label{e:f=}
							     \begin{split}
							     f(\bx, \bu, t) &= \exp\left[-\frac{\epsilon}{\hbar} V\left(\bx-\bu/2, t\right) + i\frac{\epsilon e}{\hbar c} \frac{\bu}{\epsilon}\cdot\mathbf{A}\left(\bx-\bu/2, t\right)\right]\\
									       &= 1-\frac{\epsilon}{\hbar}V(\bx, t) + i\frac{e}{\hbar c}\bu\cdot\mathbf{A}(\bx,t) - i\frac{e}{2\hbar c}\bu\cdot\nabla\mathbf{A}(\bx,t)\cdot\bu -\frac{1}{2}\left[\frac{e}{\hbar c}\bu\cdot\mathbf{A}(\bx, t)\right]^2  + O(\epsilon^2, \epsilon |\bu| , |\bu|^3),
							     \end{split}
							     \ee
							     \end{widetext}
							     gathers all the interaction terms in $\mathcal{C}$, i.e., those containing $V$ and $\mathbf{A}$, but not the kinetic term. The expansion in the right hand side of Eq.~\eqref{e:f=} contains only those terms that give contribution up to first order in $\epsilon$ in the convolution $\mathcal{C}\ast\psi$, since the remaining terms vanish in the limit $\epsilon\to 0$.

							     Taking into account that the first two moments of $\bu$ are 
							     \BE
							     \bra u_j\ket_\bu &=& 0,\label{e:<u>}\\
										\bra u_j u_k\ket_\bu &=& \delta_{j k}\hbar\epsilon/m ,\label{e:<uu>}
										\EE
										where $\langle\cdots\rangle_\bu$ refers to the average taken with the Gaussian ${\exp(-m\bu^2/2\hbar\epsilon)/\mathcal{Z}_{EM}}$ (see Eq.~\eqref{e:Gauss...}), and that terms containing $\epsilon |\bu|$ and $|\bu|^3$ or higher can be neglected, the Gaussian average in Eq.~\eqref{e:C*psi} yields 
										\begin{widetext}
										\be\label{e:C*psiSeries}
										[\mathcal{C}\ast\psi](\bx, t) = \left(1-\frac{\epsilon}{\hbar} V\right)\psi +\frac{\hbar\epsilon}{2 m}\nabla^2\psi - i\frac{e\epsilon}{mc}\mathbf{A}\cdot\nabla\psi - i\frac{e \epsilon}{2 m c}\nabla\cdot\mathbf{A} \psi - \frac{e^2\epsilon}{2 \hbar m c^2 }\mathbf{A}^2\psi
										\ee
										\end{widetext}

										Furthermore, taking into account that
										\be 
										\begin{split}
										\left(\nabla - i\frac{e}{\hbar c}\mathbf{A}\right)^2\psi =& \nabla^2\psi - \left(\frac{e}{\hbar c}\right)^2\mathbf{A}^2\psi - \\
											& i \frac{e}{\hbar c}\left[2 \mathbf{A}\cdot\nabla\psi + (\nabla\cdot\mathbf{A})\psi\right],
										\end{split}
										\ee
										we can replace the last four terms in the right hand side of Eq.~\eqref{e:C*psiSeries} by $(\hbar\epsilon/2m)\left(\nabla - i\frac{e}{\hbar c}\mathbf{A}\right)^2\psi$. So, Eq.~\eqref{e:C*psiSeries} becomes
										\begin{widetext}
										\be\label{e:C*psi_final}
										[\mathcal{C}\ast\psi](\bx , t) =\psi(\bx ,t) + \frac{\epsilon}{\hbar}\left[\frac{\hbar^2}{2 m}\left(\nabla - i\frac{e}{\hbar c}\mathbf{A}\right)^2 \psi(\bx ,t) - V(\bx, t)\psi(\bx ,t)\right] + O(\epsilon^2).
										\ee
										\end{widetext}

										Finally, introducing Eq.~\eqref{e:C*psi_final} into Eq.~\eqref{e:partial}, multiplying by $i\hbar/\epsilon$, and taking the limit $\epsilon\to 0$ yields Eq.~\eqref{e:Main_SchrodingerEM} as we wanted to prove.
										\end{proof}

										We now use this result to prove that the von Neumann equation of a charged particle in an electromagnetic field can be written in the usual way, replacing the Hamiltonian operator by the complex-valued kernel above. We do this in the following
										\begin{corollary}
										The von Neumann equation of a charged particle in an electromagnetic field can be written as
										\be\label{e:C*rho}
										\frac{\partial\rho}{\partial t} = \frac{i}{\epsilon} (\mathcal{C}\ast\rho - \rho\ast\mathcal{C})\equiv \frac{i}{\epsilon}[\mathcal{C}, \rho] ,
										\ee
										where $\mathcal{C}$ is given in Eq.~\eqref{e:C}.
										\end{corollary}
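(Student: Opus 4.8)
The plan is to reduce the corollary directly to the Theorem just proved, which already contains all the dynamical input; what remains is algebra of the commutator. Comparing Eq.~\eqref{e:partial} with the Schr\"odinger equation $\partial\psi/\partial t = -(i/\hbar)H\psi$ read off from Eq.~\eqref{e:Main_SchrodingerEM}, I would first record the operator correspondence

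\be
H \;\longleftrightarrow\; \frac{\hbar}{\epsilon}\left(\id - \mathcal{C}\ast\right),
\ee

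valid to the order in $\epsilon$ at which Eq.~\eqref{e:partial} holds. This identity says that, on wave functions and in the limit $\epsilon\to 0$, the Hamiltonian acts as the identity minus a convolution with the complex kernel $\mathcal{C}$ of Eq.~\eqref{e:C}.

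Next I would write the von Neumann equation in standard form, $i\hbar\,\partial\rho/\partial t = [H,\rho] = H\rho - \rho H$, and interpret each product through the kernel $\rho(\bx,\bx^\prime)$. Reading $\mathcal{C}(\bx,\bx^\prime)$ as the matrix element $\langle\bx|\mathcal{C}|\bx^\prime\rangle$, acting with $H$ from the left is convolution in the first argument, so $H\rho \leftrightarrow (\hbar/\epsilon)(\rho - \mathcal{C}\ast\rho)$, while acting from the right is convolution in the second argument, so $\rho H \leftrightarrow (\hbar/\epsilon)(\rho - \rho\ast\mathcal{C})$. The two identity pieces cancel in the difference, leaving

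\be
[H,\rho] = \frac{\hbar}{\epsilon}\left(\rho\ast\mathcal{C} - \mathcal{C}\ast\rho\right).
\ee

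Dividing by $i\hbar$ and using $1/i = -i$ then yields $\partial\rho/\partial t = (i/\epsilon)(\mathcal{C}\ast\rho - \rho\ast\mathcal{C})$, which is exactly Eq.~\eqref{e:C*rho}. As an independent cross-check I would treat the pure-state case $\rho = \psi\psi^\dagger$, differentiate with the product rule, and substitute Eq.~\eqref{e:partial} for $\partial\psi/\partial t$ and its complex conjugate for $\partial\psi^\dagger/\partial t$; the two contributions should reassemble into the same convolution commutator.

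The main obstacle is controlling the error terms. The correspondence above carries an $O(\epsilon^2)$ remainder per application of $\mathcal{C}\ast$ (inherited from the Gaussian expansion in the Theorem), and it is multiplied by the singular prefactor $\hbar/\epsilon$; I would therefore verify that these remainders stay $O(\epsilon)$ after forming the difference $H\rho - \rho H$, so that they vanish in the stated limit $\epsilon\to 0$ and do not contaminate the leading finite term. A secondary point worth making explicit is that left- versus right-multiplication by $H$ genuinely translates into convolution on the first versus the second kernel argument; this matters here because, once the vector-potential term in Eq.~\eqref{e:classicalH_EM} is present, $\mathcal{C}$ is complex and no longer symmetric, so $\mathcal{C}\ast\rho$ and $\rho\ast\mathcal{C}$ must be kept carefully distinct.
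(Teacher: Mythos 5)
Your proposal is correct, but your primary route is genuinely different from the paper's. The paper proves the corollary exactly the way you describe your ``independent cross-check'': it takes a pure state $\rho(\bx,\bx^\prime,t)=\psi(\bx,t)\psi^\ast(\bx^\prime,t)$, differentiates with the product rule, and substitutes Eq.~\eqref{e:partial} and its complex conjugate, remarking that the extension to mixed states is straightforward. Your main argument instead substitutes the operator correspondence $H\leftrightarrow(\hbar/\epsilon)(\id-\mathcal{C}\ast)$ directly into the commutator $[H,\rho]$, which buys generality (mixed states are handled in one stroke) and forces the left/right distinction into the open --- usefully so, since the paper's own intermediate display, Eq.~\eqref{e:Gaussian_vN}, prints $\mathcal{C}\ast\rho-\rho$ in both terms (as written, its right-hand side vanishes identically); the intended second term is $\rho\ast\mathcal{C}-\rho$, which your bookkeeping produces correctly. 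One step of your main route deserves to be spelled out rather than asserted: the preceding theorem only establishes the action of $\mathcal{C}$ by integration over its second argument (i.e., on column vectors), so the correspondence $\rho H\leftrightarrow(\hbar/\epsilon)(\rho-\rho\ast\mathcal{C})$ does not follow from kernel bookkeeping alone. It requires the Hermiticity of the kernel, $\mathcal{C}(\by,\bx^\prime)=\mathcal{C}^\ast(\bx^\prime,\by)$ --- which holds because the $\mathbf{A}$-term in Eq.~\eqref{e:classicalH_EM} is odd under exchange of arguments while the other terms are even --- combined with $H^\dagger=H$: apply the theorem to the conjugated test function and then conjugate the resulting identity. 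The pure-state route gets this for free, because the right convolution there arises from conjugating Eq.~\eqref{e:partial}, which is precisely where the paper implicitly uses the same Hermiticity. Finally, your handling of the error terms is sound: each replacement carries an $O(\epsilon^2)$ remainder, so after the $\hbar/\epsilon$ prefactor the difference $H\rho-\rho H$ is exact up to $O(\epsilon)$, which vanishes in the stated limit.
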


										\begin{proof}
										For simplicity, we show this corollary for a pure density matrix ${\rho(\bx, \bx^\prime , t) = \psi(\bx, t)\psi^\ast(\bx^\prime, t)}$. The extension to more general density matrices is straightforward. Taking the time derivative of this density matrix yields
										\be\label{e:Psi_time} 
										\frac{\partial \rho(\bx, \bx^\prime , t)}{\partial t} = \frac{\partial\psi(\bx, t)}{\partial t}\psi^\ast(\bx^\prime, t) + \psi(\bx, t)\frac{\partial\psi^\ast(\bx^\prime, t)}{\partial t};
\ee 
now, replacing the time derivatives of the wave function $\psi$ and its conjugate $\psi^\ast$ in Eq.~\eqref{e:Psi_time}, respectively, by the right hand side of Eq.~\eqref{e:partial} and its conjugate we obtain
\be\label{e:Gaussian_vN}
\frac{\partial \rho}{\partial t} = \frac{i}{\epsilon}\left(\mathcal{C}\ast\rho - \rho\right)  - \frac{i}{\epsilon}\left(\mathcal{C}\ast\rho - \rho \right).
\ee
Clearly, the terms $\rho$ in the right hand side cancel out, which yields Eq.~\eqref{e:C*rho} as we wanted to prove.
\end{proof}

We now show that the von Neumann equation above can be written as a pair of 
real matrix equations like Eqs.~\eqref{em:vN-1} and \eqref{em:vN-2} in the main text. We do this in the following
\begin{theorem}
Equation~\eqref{e:C*rho}, which is equivalent to the von Neumann equation for a charged particle in an electromagnetic field, is equivalent to the following pair of real equations (cf. Eqs.~\eqref{em:vN-1} and \eqref{em:vN-2} in the main text):
	\BE
	\frac{\partial P}{\partial t} &=& - \frac{1}{\epsilon}[\mathcal{W}_s, P^T] +\frac{1}{\epsilon}[\mathcal{W}_a, P]  ,\\
		\frac{\partial P^T}{\partial t} &=& \frac{1}{\epsilon}[\mathcal{W}_s, P]  + \frac{1}{\epsilon}[\mathcal{W}_a, P^T]  ,
	\EE
	where 
	\BE
	\mathcal{W}_s(\bx,\bx^\prime) &=& \frac{1}{2}\left[\mathcal{W}(\bx, \bx^\prime) + \mathcal{W}(\bx^\prime , \bx )\right],\\
		\mathcal{W}_a(\bx,\bx^\prime) &=& \frac{1}{2}\left[\mathcal{W}(\bx, \bx^\prime) - \mathcal{W}(\bx^\prime, \bx )\right],
	\EE
	are the symmetric and antisymmetric parts of a real kernel
	\be\label{e:WrealEM}
	\mathcal{W}(\bx , \bx^\prime) = \frac{1}{\mathcal{Z}_{EM}}\exp{\left[-\frac{\epsilon}{\hbar}\mathcal{Q}_{\rm EM}(\bx , \bx^\prime) \right]}.
	\ee
	Here
	%
	\be
	\begin{split}
	\mathcal{Q}_{\rm EM}(\bx , \bx^\prime) =& \frac{m({\bx - \bx^\prime})^2}{2\epsilon^2} + V\left(\frac{\bx+\bx^\prime}{2}, t\right)\\
		&+ \frac{e}{ c}\left(\frac{\bx-\bx^\prime}{\epsilon}\right)\cdot\mathbf{A}\left(\frac{\bx+\bx^\prime}{2}, t\right)\\
		&+ \frac{\epsilon }{\hbar }\left[\frac{e}{ c}\left(\frac{\bx-\bx^\prime}{\epsilon}\right)\cdot\mathbf{A}\left(\frac{\bx+\bx^\prime}{2}, t\right)\right]^2 .\label{e:Qpre}
		\end{split}
		\ee
		%
		\end{theorem}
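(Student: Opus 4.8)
The plan is to start from the corollary just proved, which recasts the von Neumann equation for the charged particle as $\partial\rho/\partial t=(i/\epsilon)[\mathcal{C},\rho]$ with the complex kernel $\mathcal{C}$ of Eq.~\eqref{e:C}, and to turn this single complex equation into the stated pair of real equations by two moves: (a) separating $\rho$ and $\mathcal{C}$ into real/symmetric and imaginary/antisymmetric parts, exactly as the operator computation of the main text turned Eq.~\eqref{em:drhodt=[H,rho]} into Eqs.~\eqref{em:vN-1}--\eqref{em:vN-2}; and (b) trading the complex kernel $\mathcal{C}$ for the real kernel $\mathcal{W}$ of Eq.~\eqref{e:WrealEM} inside the commutators, valid in the limit $\epsilon\to 0$.

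First I would factor the kernel as $\mathcal{C}=\mathcal{C}_0\,e^{i\phi}$, where $\mathcal{C}_0\propto\exp[-\tfrac{\epsilon}{\hbar}(\text{kinetic}+V)]$ is real and even in $\bu=\bx-\bx^\prime$, hence symmetric, and $\phi=(e/\hbar c)\,\bu\cdot\mathbf{A}$ is the real phase, odd in $\bu$. Since $\widetilde{\mathcal{H}}_{\rm EM}(\bx^\prime,\bx)=\widetilde{\mathcal{H}}_{\rm EM}^\ast(\bx,\bx^\prime)$, the kernel is Hermitian, so its symmetric part is the real $\mathcal{C}_0\cos\phi$ and its antisymmetric part is the purely imaginary $i\,\mathcal{C}_0\sin\phi$. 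Writing $\rho=P_s+P_a/i$ with $P=P_s+P_a$ and $P^T=P_s-P_a$ as in Eq.~\eqref{em:rhomapP}, I would substitute into $(i/\epsilon)[\mathcal{C},\rho]$, expand the commutator, and collect real and imaginary parts. This is the exact kernel analogue of the derivation of Eqs.~\eqref{em:vN-Ps}--\eqref{em:vN-Pa}, and it produces a pair of real equations in which the symmetric kernel $\mathcal{C}_0\cos\phi$ plays the role of $\epsilon J_s$ (up to the identity, which drops from every commutator) and $\mathcal{C}_0\sin\phi$ plays the role of $-\epsilon J_a$.

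The crux is then to show that, inside the $1/\epsilon$-weighted commutators and as $\epsilon\to 0$, one may replace $\mathcal{C}_0\cos\phi$ by $\mathcal{W}_s=\mathcal{C}_0\,e^{-\phi^2}\cosh\phi$ and $\mathcal{C}_0\sin\phi$ by $-\mathcal{W}_a=\mathcal{C}_0\,e^{-\phi^2}\sinh\phi$, which are precisely the symmetric and antisymmetric parts of the single real kernel $\mathcal{W}=\mathcal{C}_0\,e^{-\phi-\phi^2}$ obtained from $\mathcal{Q}_{\rm EM}$. Here I would reuse the Gaussian localization from the first theorem: convolution against a test function forces $\bu=O(\sqrt{\hbar\epsilon/m})$, so $\phi=O(\sqrt\epsilon)$ and, by the moments \eqref{e:<u>}--\eqref{e:<uu>}, $\langle\phi^2\rangle=O(\epsilon)$. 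Taylor expansion then gives $\cos\phi=1-\tfrac12\phi^2+O(\phi^4)$ and $e^{-\phi^2}\cosh\phi=1-\tfrac12\phi^2+O(\phi^4)$, so the symmetric kernels in fact agree through $O(\phi^4)$; likewise $\sin\phi$ and $e^{-\phi^2}\sinh\phi$ both equal $\phi+O(\phi^3)$. Because every commutator carries the prefactor $1/\epsilon$, only the $O(\epsilon)$ behaviour of the convolved kernels survives, while the mismatches drop out. Adding and subtracting the resulting equations for $P_s$ and $P_a$ finally rewrites them in the $P,\,P^T$ form claimed in the statement.

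I expect the main obstacle to be the error bookkeeping in this last replacement, together with arguing that the factor $e^{-\phi^2}$ is not cosmetic but mandatory. The naive real kernel $\mathcal{C}_0 e^{-\phi}$ would give $\cosh\phi=1+\tfrac12\phi^2$ in the symmetric sector, differing from $\cos\phi$ already at $O(\phi^2)=O(\epsilon)$; multiplied by $1/\epsilon$ this leaves a spurious finite term, and it is exactly the $e^{-\phi^2}$ prefactor that flips the sign of the quadratic correction back to $\cos\phi$. The more delicate estimate is in the antisymmetric sector, where the kernel already acts at $O(\epsilon)$ through the single gradient it generates against $P$: I must verify that the $O(\phi^3)$ discrepancy between $\sin\phi$ and $e^{-\phi^2}\sinh\phi$ contributes only at $O(\epsilon)$ after division by $\epsilon$, which requires pairing the odd cubic $\phi^3$ with one gradient of $P$ and invoking the quartic Gaussian moment $\langle u_i u_j u_k u_l\rangle=O(\epsilon^2)$, rather than merely counting powers of $\phi$.
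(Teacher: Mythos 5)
Your proposal is correct and takes essentially the same route as the paper: both decompose the kernel von Neumann equation $\partial\rho/\partial t=(i/\epsilon)[\mathcal{C},\rho]$ into real and imaginary (equivalently, by Hermiticity, symmetric and antisymmetric) parts exactly as in the main-text derivation of the $P$, $P^T$ pair, and then trade $\mathcal{C}_0\cos\phi$ and $\mathcal{C}_0\sin\phi$ for the symmetric and antisymmetric parts of the real kernel $\mathcal{W}\propto e^{-\epsilon\mathcal{Q}_{\rm EM}/\hbar}$ using the Gaussian localization $\phi=O(\sqrt{\epsilon})$, the paper doing this in a single stroke via $\cos z-\sin z=e^{-z-z^2}+O(z^3)$. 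Your sector-by-sector error bookkeeping --- the parity/quartic-moment argument for the $O(\phi^3)$ mismatch in the antisymmetric part, and the observation that the $e^{-\phi^2}$ factor is forced by the symmetric part --- is sound and in fact sharper than the paper's brief assertion that higher-order terms contribute beyond $O(\epsilon)$.
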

		\begin{proof}
		Equation~\eqref{e:C*rho} has the same form of von Neumann equation in the main text (see Eq.~\eqref{em:drhodt=[H,rho]}). Furthermore, by separating its real and imaginary parts, the kernel $\mathcal{C}$ defined in Eq.~\eqref{e:C} can be written as $\mathcal{C} = \widetilde{\mathcal{W}}_s + \widetilde{\mathcal{W}}_a / i$, where 
		%
		\BE
		\widetilde{\mathcal{W}}_s(\bx , \bx^\prime) &=& \frac{1}{\mathcal{Z}_{EM}}e^{-{\epsilon}\mcH^0\left(\bx, \bx^\prime\right)/{\hbar}  }\cos(z),\label{e:Ws=} \\
			\widetilde{\mathcal{W}}_a(\bx , \bx^\prime) &=& -\frac{1}{\mathcal{Z}_{EM}}e^{-{\epsilon} \mcH^0\left(\bx, \bx^\prime\right)/{\hbar}  }\sin(z).\label{e:Wa=} 
			\EE
			%
			Here 
	\be
\mcH^0(\bx,\bx^\prime) = \frac{m}{2}\frac{({\bx - \bx^\prime})^2}{\epsilon^2} +V\left(\frac{\bx+\bx^\prime}{2}, t\right) 
	\ee
	and 
	\be\label{e:z}
	z = \frac{\epsilon }{\hbar }\frac{e}{ c}\left(\frac{\bx-\bx^\prime}{\epsilon}\right)\cdot\mathbf{A}\left(\frac{\bx+\bx^\prime}{2}, t\right).
	\ee
	The expressions $\widetilde{\mathcal{W}}_s$ and $\widetilde{\mathcal{W}}_a$ defined in Eqs.~\eqref{e:Ws=} and \eqref{e:Wa=} are clearly symmetric and antisymmetric, respectively, under an exchange of $\bx$ and $\bx^\prime$ since $\cos (-z) = \cos z$ and $\sin(-z) = -\sin z$, where $z$ is given by Eq.~\eqref{e:z}. So, $\widetilde{\mathcal{W}}_s$ and $\widetilde{\mathcal{W}}_a$ can be considered the symmetric and antisymmetric parts of a kernel 
	%
	\be
	\begin{split}
	\widetilde{\mathcal{W}}(\bx , \bx^\prime) &= \widetilde{\mathcal{W}}_s + \widetilde{\mathcal{W}}_a\\ 
	& = \tfrac{1}{\mathcal{Z}_{EM}}e^{-{\epsilon}\mcH^0({\bx,\bx^\prime})/{\hbar} }[\cos z - \sin z],\label{e:Kpre}
	\end{split}
	\ee
	%

	Due to the very sharp Gaussian factor (since ${\epsilon\to 0}$), we can expand the sine and cosine functions up to second order in their argument since the rest gives contributions of order higher than $\epsilon$. Now, up to second order we have 
	\be\label{e:cossinexp}
	\begin{split}
	\cos z - \sin z =& \exp(-z-z^2) + O(z^3)\\
			 &= 1 - z -\frac{z^2}{2} +O(z^3),
	\end{split}
	\ee
	So, we can safely replace $\cos z - \sin z$ by $\exp(-z-z^2) $ in Eq.~\eqref{e:Kpre}. That is, we can replace $\widetilde{\mathcal{W}}$ by the kernel
	\be
	\mathcal{W}(\bx , \bx^\prime) = \tfrac{1}{\mathcal{Z}_{EM}}e^{-{\epsilon}\mcH^0({\bx,\bx^\prime})/{\hbar} -z - z^2 }.\label{e:W=}
	\ee
	Replacing $z$ in this equation by the right hand side of Eq.~\eqref{e:z} we obtain Eq.~\eqref{e:WrealEM} as we wanted to prove.
	\end{proof}
	However, the function $\mathcal{Q}_{EM}$ defined in Eq.~\eqref{e:Qpre} is not a standard Hamiltonian-like function. Indeed, the last term in the right hand side of Eq.~\eqref{e:Qpre} is proportional to $1/\hbar$. However, it is possible to replace $\mathcal{Q}_{EM}$ by a proper Hamiltonian-like function that does not depend on $\hbar$ according to the following
	\begin{theorem}
	In the limit $\epsilon\to 0$, the kernel $\mathcal{W}$ defined in Eq.~\eqref{e:WrealEM} can be replaced by the kernel
	\be\label{e:KrealEM}
	\mathcal{K}(\bx , \bx^\prime) = \frac{1}{\mathcal{Z}_{EM}}\exp{\left[-\frac{\epsilon}{\hbar}\mathcal{H}_{\rm EM}(\bx , \bx^\prime) \right]}
	\ee
	where the Hamiltonian-like function (with no tilde) is given by
	%
	\be\label{e:HrealEM}
	\begin{split}
	\mathcal{H}_{\rm EM}(\bx , \bx^\prime) &= \frac{m}{2}\left(\frac{\bx - \bx^\prime}{\epsilon}\right)^2 + V\left(\frac{\bx+\bx^\prime}{2}, t\right)\\
		&+ \frac{e}{c}\left(\frac{\bx-\bx^\prime}{\epsilon}\right)\cdot\mathbf{A}\left(\frac{\bx+\bx^\prime}{2}, t\right) \\
		&+\frac{e^2}{ m c^2} \left[\mathbf{A}\left(\frac{\bx+\bx^\prime}{2}, t\right)\right]^2 
		\end{split}
		\ee
		%
		\end{theorem}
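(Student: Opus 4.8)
The plan is to show that $\mathcal{W}$ and $\mathcal{K}$ produce the same convolution with any well-behaved test function $g$ up to order $\epsilon^2$, so that after dividing by $\epsilon$ and sending $\epsilon\to 0$ in the pair of real equations of Theorem~2 they generate identical limiting dynamics. First I would compare the two Hamiltonian-like functions term by term. From Eqs.~\eqref{e:Qpre} and \eqref{e:HrealEM}, $\mathcal{Q}_{\rm EM}$ and $\mathcal{H}_{\rm EM}$ share the kinetic term, the potential $V$, and the term linear in $\mathbf{A}$; they differ only in their quadratic-in-$\mathbf{A}$ piece. Writing $\bu = \bx-\bx^\prime$ and recalling $z$ from Eq.~\eqref{e:z}, this means $\mathcal{W} = \mathcal{K}_0\,e^{-z^2}$ while $\mathcal{K} = \mathcal{K}_0\,e^{-(\epsilon/\hbar)(e^2/mc^2)\mathbf{A}^2}$, where $\mathcal{K}_0\propto e^{-m\bu^2/2\hbar\epsilon}\,e^{-\epsilon V/\hbar}\,e^{-z}$ collects the common factors.

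Next I would observe that, under the sharp Gaussian $e^{-m\bu^2/2\hbar\epsilon}$, both $z^2$ and $(\epsilon/\hbar)(e^2/mc^2)\mathbf{A}^2$ are of order $\epsilon$, so I can expand both exponentials to first order, $e^{-z^2}=1-z^2+O(\epsilon^2)$ and similarly for $\mathcal{K}$. The theorem then reduces to showing that the two first-order corrections agree inside the convolution, i.e. that $\int \mathcal{K}_0\,z^2\,g(\bx^\prime)\,\mathrm{d}^3\bu$ equals $(\epsilon/\hbar)(e^2/mc^2)\mathbf{A}^2\int\mathcal{K}_0\,g(\bx^\prime)\,\mathrm{d}^3\bu$ up to $O(\epsilon^2)$. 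On the right the prefactor is already $O(\epsilon)$, so to leading order $\mathcal{K}_0$ collapses to the normalized Gaussian and the integral returns $g(\bx)$; on the left $z^2$ is itself $O(\epsilon)$, so I may likewise drop the $V$ and $z$ factors in $\mathcal{K}_0$, replace $g(\bx^\prime)$ by $g(\bx)$ and the midpoint argument of $\mathbf{A}$ by $\mathbf{A}(\bx)$, reducing the left side to $\bra z^2\ket_\bu\,g(\bx)$. The key computation is then the Gaussian second moment: using Eqs.~\eqref{e:<u>} and \eqref{e:<uu>}, $\bra z^2\ket_\bu = (e^2/\hbar^2 c^2)\sum_{j,k}\bra u_j u_k\ket_\bu A_j A_k = (e^2/\hbar^2 c^2)(\hbar\epsilon/m)\,\mathbf{A}^2 = (\epsilon/\hbar)(e^2/mc^2)\mathbf{A}^2$, which matches the right side exactly. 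Hence $\mathcal{W}\ast g = \mathcal{K}\ast g + O(\epsilon^2)$, and the replacement is legitimate.

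The hard part will be the bookkeeping that justifies dropping every subleading $\bu$-dependence, i.e. proving that the fluctuation $z^2-\bra z^2\ket_\bu$ together with the cross terms produced by expanding $g$, $V$, $z$, and the midpoint argument of $\mathbf{A}$ about $\bx$ all contribute only at $O(\epsilon^2)$. The clean way to control this is power counting: since $z\sim\bu\sim\sqrt{\hbar\epsilon/m}$, the factor $z^2$ already supplies the full $O(\epsilon)$ weight, so any extra factor of $\bu$ either flips the parity in $\bu$ and kills the integral by the vanishing of odd moments, or raises the power of $\epsilon$; likewise the neglected $\tfrac12 z^4$ in the expansion of $e^{-z^2}$ is controlled by $\bra z^4\ket_\bu = O(\epsilon^2)$. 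This is the same truncation logic already invoked in the proof of Theorem~1 for the complex kernel, so I would cite it rather than redo it in full. Once this is established, dividing the pair of equations of Theorem~2 by $\epsilon$ and taking $\epsilon\to 0$ yields the identical limiting von Neumann dynamics whether $\mathcal{W}$ or $\mathcal{K}$ is used, and since $\mathcal{H}_{\rm EM}$ no longer carries an explicit $1/\hbar$, the desired $\hbar$-independent Hamiltonian-like function is obtained, completing the proof.
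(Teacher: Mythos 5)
Your proposal is correct and follows essentially the same route as the paper's proof: both isolate the quadratic-in-$\mathbf{A}$ term as the only difference between $\mathcal{Q}_{\rm EM}$ and $\mathcal{H}_{\rm EM}$, expand $e^{-z^2}$ and the $\mathbf{A}^2$ exponential to first order under the sharp kinetic Gaussian, and close the argument with the second-moment identity $\bra z^2\ket_\bu = \frac{\epsilon}{\hbar}\frac{e^2}{mc^2}\mathbf{A}^2(\bx,t) + O(\epsilon^2)$, with the remaining bookkeeping handled by the same parity/power-counting logic the paper invokes. The only cosmetic difference is that you compare convolutions against a generic test function $g$ and factor out a common kernel $\mathcal{K}_0$, whereas the paper works directly with $\psi$ and its function $g(\bx,\bu)$, but these are the same argument.
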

		\begin{proof}
		The only difference between $\mathcal{Q}_{EM}$ and $\mcH_{EM}$ is the last terms in the right hand side of Eqs.~\eqref{e:Qpre} and \eqref{e:HrealEM}, respectively. So, it is convenient to single out these terms in the convolutions  $\mathcal{W}\ast\psi$ and $\mathcal{K}\ast\psi$. Let us start with the convolution $\mathcal{W}\ast\psi$. Using Eqs.~\eqref{e:WrealEM} and \eqref{e:Qpre} we can write
		\be\label{e:<gz>}
		[\mathcal{W}\ast\psi](\bx) = \bra g(\bx , \bu) e^{-z^2}
		\ket_\bu
		\ee
		where we have introduced the change of variables $\bu = \bx-\bx^\prime$, so that $(\bx + \bx^\prime)/2 = \bx - \bu/2$. Here $z$ is given in Eq.~\eqref{e:z} and $\bra\cdots\ket_\bu$ 
		%
		%
		denotes the Gaussian average associated to the kinetic term in Eq.~\eqref{e:Qpre} (see Eq.~\eqref{e:Gauss...}). Furthermore,
		\be\label{e:gfunc}
		g(\bx , \bu)= e^{-\epsilon V(\bx-\bu/2)/\hbar + e\bu\cdot\mathbf{A}(\bx-\bu/2,t)/c\hbar}\psi(\bx - \bu),
		\ee
		denotes the remaining terms in the convolution $\mathcal{W}\ast\psi$. We can expand $e^{-z^2} = 1-z^2 + O(z^4)$ in Eq.~\eqref{e:<gz>} up to first order in $z^2$ since terms $O(z^4)$ give contributions of $O(\epsilon^2)$. So,
		\be\label{e:W*psi+O(eps)}
[\mathcal{W}\ast\psi](\bx) = \bra g(\bx , \bu) \ket_{\bu} - g(\bx, \bx)\bra z^2\ket_\bu + O(\epsilon^2)
	\ee
	where 
	%
	\be\label{e:averages}
	\begin{split}
	\bra z^2\ket_\bu &= \left(\frac{e}{\hbar c}\right)^2\bra\left[\bu\cdot\mathbf{A}\left(\bx-\frac{\bu}{2}, t\right)\right]^2\ket_\bu \\ 
	&=\left(\frac{e}{\hbar c}\right)^2\bra\bu^2\ket_\bu\cdot[\mathbf{A}\left(\bx, t\right)]^2 + O(\epsilon^2)\\ 
	&= \frac{\epsilon}{\hbar}\frac{e^2}{ m c^2}[\mathbf{A}(\bx, t)]^2 + O(\epsilon^2).
	\end{split}
	\ee
	%
	Furthermore, we have done $\mathbf{A}(\bx-\bu/2, t)= \mathbf{A}(\bx, t) + O(|\bu|)$ in Eq.~\eqref{e:averages} and $g(\bx , \bu) = g(\bx, \bx) + O(|\bu|,\epsilon)$ in Eq.~\eqref{e:W*psi+O(eps)}, respectively, because $\mathbf{A}(\bx ,t)$  and $g(\bx, \bx)$ are the only terms that contribute to first order in $\epsilon$ since $\bra \bu^2\ket_\bu $ is already $O(\epsilon)$ (see Eqs.~\eqref{e:<u>} and \eqref{e:<uu>})%

	Now, proceeding similarly with the convolution $\mathcal{K}\ast\psi$ we have
	\be
	[\mathcal{K}\ast\psi](\bx) = \bra g(\bx , \bu) e^{-y}\ket_\bu
	\ee
	where $g(\bx, \bu)$ is defined in Eq.~\eqref{e:gfunc} and
	\be\label{e:y=}
	y = \frac{\epsilon e^2}{ \hbar m c^2 } \left[\mathbf{A}\left(\bx-\bu/2, t\right)\right]^2.
	\ee
	Notice that $y$ is already of first order in $\epsilon$. So, we can neglect the dependency of $\mathbf{A}$ on $\bu$ and do the expansion
	\be
	e^{-y} = 1- \frac{\epsilon e^2}{\hbar m c^2 } \left[\mathbf{A}\left(\bx, t\right)\right]^2  + O(\epsilon^2, \epsilon |\bu|),
	\ee
	where we have introduced the explicit value of $y$ in the right hand side, which is given in Eq.~\eqref{e:y=}.
	So, 
	\be\label{e:K*psi+O(eps)}
[\mathcal{K}\ast\psi](\bx) = \bra g(\bx , \bu) \ket_{\bu} - g(\bx, \bx)\frac{\epsilon e^2}{ m c^2 \hbar} \left[\mathbf{A}\left(\bx, t\right)\right]^2  + O(\epsilon^2)
	\ee
	Finally, introducing Eq.~\eqref{e:averages} into Eq.~\eqref{e:W*psi+O(eps)} and comparing to Eq.~\eqref{e:K*psi+O(eps)} we can see that $\mathcal{K}\ast\psi = \mathcal{W}\ast\psi + O(\epsilon^2)$, that is $\mathcal{K}\ast\psi = \mathcal{W}\ast\psi$ in the limit $\epsilon\to 0$ as we wanted to prove.
	\end{proof}

	\

	\section{Modeling scientists doing experiments}\label{s:model_scientists}

	Here we discuss two well-known modeling frameworks in cognitive science, i.e., active inference (Appendix~\ref{s:active}) and enactive cognition (Appendix~\ref{s:enactive}), that are relevant for our purpose. However, we take a more relational approach than traditionally done in these two modeling frameworks. Indeed, somewhat analogous to the relational interpretation of quantum mechanics (RQM)~\cite{Rovelli-1996}, the modeling of a scientist doing an experiment is done from the perspective of another scientist (see Sec.~\ref{sm:others...}). 

	\subsection{Active inference: world as a generative process, scientists as generative models}\label{s:active}
	\begin{figure*}
	\includegraphics[width=0.8\textwidth]{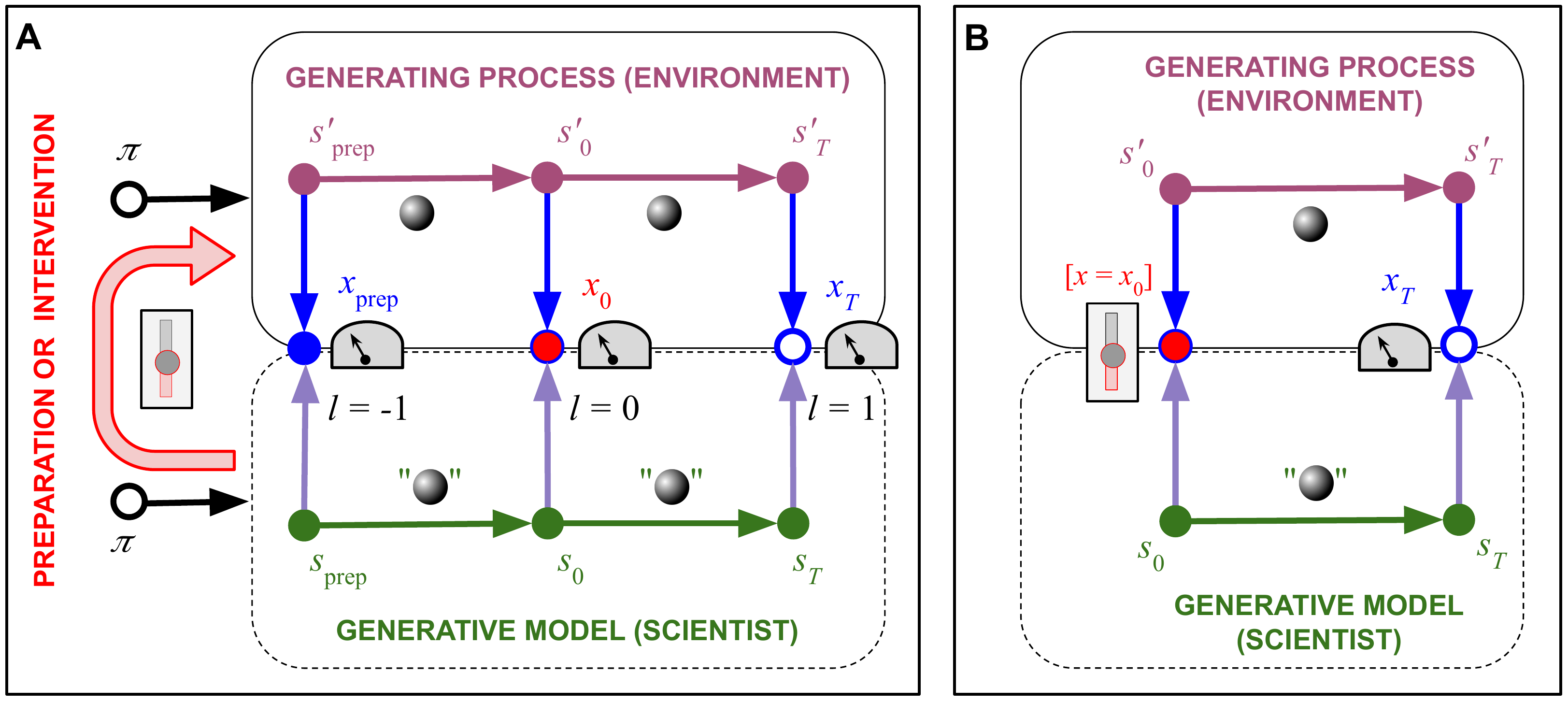}
	\caption{{\em Active inference:} (A) Graphical model characterizing active inference (cf. Figs. 1 and 2 in Ref.~\cite{friston2017graphical} as well as Figs.~2 and 3 in Ref.~\cite{schwobel2018active}). The upper graphical model enclosed within a solid line is the {\em generative process} associated to the external system. The only accessible information about this generative process is the data it generates on the observer's sensors. The lower graphical model enclosed within a dashed line is the {\em generative model} the observer has about the external world. (B) When the scientist performs the required actions to consistently transform the variable position $x_{\rm prep}$ into the same initial position $x_0$, she effectively removes all causal dependencies before the start of the experiment at time step $\ell=0$. This could be interpreted as a form of causal intervention on the system. We denote this here as $\textsc{do}[x=x_0]$. 
	}\label{f:active}
\end{figure*}

Here we briefly discuss some aspects of active inference in the framework of a scientist carrying out an experiment. Although we present some technical details for the reader that may not be familiar with it, our main purpose is to highlight the main underlying concepts. In active inference the external world---an experimental system in this case---is considered as a generative {\em process}, while the organism---here a scientist---perceiving, interacting with, and learning about such an external world is considered as (or to have) a generative {\em model} (see Fig.~\ref{f:active}; cf. Fig.~2 in Ref.~\cite{friston2017graphical} and Figs. 1 and 2 in Ref.~\cite{schwobel2018active}). We discuss these in the next subsections, closely following Ref.~\cite{schwobel2018active}.

\subsubsection{Experimental systems as generative processes}\label{s:gen_process}

Following active inference, the scientist's (controlled) environment, i.e. the experimental system, is considered hidden to her; she can only indirectly access it by the data it generates in her sensorium via her observations. In Fig.~\ref{f:active}A we represent the environment by a Bayesian network enclosed within a solid rounded rectangle, which depends on the actions of the organism (external arrow pointing towards the solid rounded rectangle; cf. Fig. 2 in Ref.~\cite{friston2017graphical}; see Sec.~2.1 in Ref.~\cite{schwobel2018active}). Accordingly, the state of the environment at time step $\ell$ is described by hidden variables $s_\ell^\prime$ (top dark magenta circles) which can generate an observation $x_\ell$ (center blue and red circles) with a probability $\Omega_\ell(x_\ell | s_\ell^\prime)$ (blue arrows pointing downwards).

The environment dynamics is specified by the transition probability $\Theta_\ell(s_{\ell+1}^\prime | s_\ell^\prime , a_\ell)$ that the environment is in state $s_{\ell +1}^\prime$ at time step $\ell +1$, given that at the previous time step its state was $s_\ell^\prime$ and the scientist performed action $a_\ell$, e.g., by moving some knobs. The dynamical dependency between hidden variables is represented in Fig.~\ref{f:active}A by the top horizontal dark magenta arrows. The dependency of these dynamics on the scientist's actions is represented by the black arrow external to the solid rounded rectangle and pointing towards it. This is to emphasize that the scientist can select a whole sequence of actions according to a behavioral policy~\cite{schwobel2018active,friston2017graphical}, $\pi$, as discussed in the next subsection. 

To keep the discussion at the minimal level of complexity required to illustrate the relevant concepts for our purpose, we focus here only on three time steps, $\ell = -1,0, 1$ (see Fig.~\ref{f:active}A). However, each transition from a time step $\ell$ to the next $\ell+1$ can be partitioned into as many time steps as desired~\cite{schwobel2018active}.

\subsubsection{Scientists as generative models}\label{s:gen_model}

Following active inference, the scientist is considered to be, or to have physically encoded in her neural system and perhaps body, a generative model of her (controlled) environment, i.e. of the experimental system. This generative model is represented in Fig.~\ref{f:active}A by a Bayesian network within a dashed rounded rectangle, which mirrors the Bayesian network representing the environment. The generative model is defined as a joint probability distribution over observations $x_\ell$ (middle blue and red circles), internal ``copies'' $s_\ell$ of the environment's hidden states $s_\ell^\prime$ (bottom green circles), which are encoded in the scientist's neural system or body, and behavioral policies $\pi$ (black node external to the solid rounded square). The latter could be specified, for instance, by a sequence of control states $u_\ell$ (see Sec.~2.2 in Ref.~\cite{schwobel2018active}), i.e. $\pi = (u_{-1}, u_0 , u_1)$, which denote a subjective abstraction of an action, such as a neuronal command to execute a specific action in the environment~\cite{schwobel2018active}. In Ref.~\cite{schwobel2018active} a one-to-one mapping is assumed between a selected control state $u_\ell$ and executed action $a_\ell$ in each time step $\ell$.

The generative model is represented in Fig.~\ref{f:active}A by a Bayesian network within a dashed rounded rectangle, which mirrors the Bayesian network representing the environment. It can be written as~\cite{schwobel2018active} (see Eq.~(2.4) therein)
\begin{widetext}
\be\label{e:GM}
\mathcal{P}^{\rm gen}(\bx, \bs, \pi) = p^{\rm pol}(\pi) p_{-1}(s_{-1})\prod_{\ell = 0}^1 \mathcal{P}^{\rm obs}_\ell(x_\ell | s_\ell)\mathcal{P}_\ell^{\rm dyn}(s_\ell |s_{\ell-1}, \pi ),
	\ee
	\end{widetext}
	where $\bx = (x_{-1},x_0, x_1)$ and $\bs = (s_{-1}, s_0 , s_1)$. Here $\mcP^{\rm dyn}$ (bottom horizontal green arrows in Fig.~\ref{f:active}A) specifies the scientist's model of the environment's hidden dynamics, which can be affected by the actions the scientist performs according to the behavioral policy $\pi$. Furthermore, $\mcP_\ell^{\rm obs}$ (bottom purple arrows pointing upwards in Fig.~\ref{f:active}A) specifies the model of how hidden states of the environment generate observations. Finally, $p_{-1}$ and $p^{\rm pol}$ are priors over the initial state of the environment and the policy, respectively.

	Now, when carrying out an experiment a scientist first prepares the state of the experimental system at the start of the experiment, i.e., at time step $\ell= 0$. Say the experimental system is a particle in a piece-wise linear potential (see Fig.~\ref{f:circular_noRQM}A in the main text). This could be done, for instance, by performing a measurement at a previous time step, $\ell = -1$, say of the position of the particle $x_{-1} = x_{\rm prep}$ as displayed in a reading device---this would correspond to the first time step in Fig.~\ref{f:active}A. Afterwards, the scientist can act on the system to consistently obtain a desired observation, $x_0=x_0^\ast$, at time step $\ell =0 $ when the experiment starts. 

	For instance, the scientist can generate some commands that would lunch a mechanism that moves the particle an amount $x_0^\ast-x_{\rm prep}$ in such a way that the scientist consistently observes a given position, $x_0=x_0^\ast$, as displayed on a reading device, modulo experimental error. Different observations $x_{\rm prep}$ at time step $\ell =-1$ would lead to different actions. The aim of those actions is precisely that an observation at time step $\ell=0$ always yields the same result, $x_0=x_0^\ast$. Since the observation at time step $\ell = 0$ yields consistently the same result, this effectively removes the dynamical dependencies before this time step, when the experiment starts. This amounts at a form of causal intervention. We denote this here as $\textsc{do}[x_0=x_0^\ast]$.

	\subsection{Enactivism: dynamical coupling between scientist and world}\label{s:enactive}
	\begin{figure*}
	\includegraphics[width=0.8\textwidth]{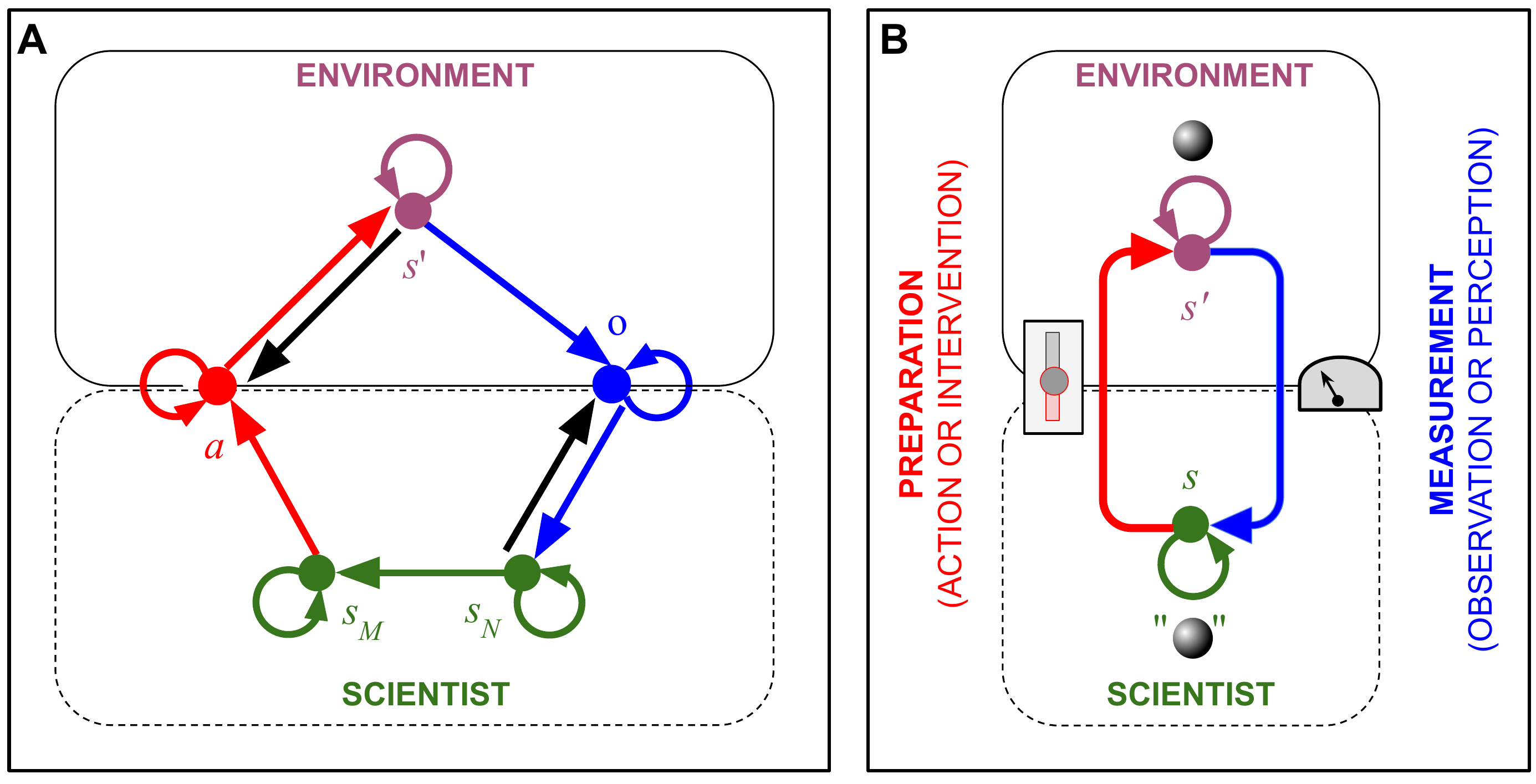}
	\caption{{\em Enactivist framework:} (A) Dependency graph of the enactive cognitive model described by Eqs.~\eqref{e:env}-\eqref{e:act}, as presented in Ref.~\cite{di2017sensorimotor} (see Ch. 3 and Fig.~3.5 therein; see also Ref.~\cite{buhrmann2013dynamical}). Nodes represent variables. An arrow indicates that the variable it points to depends on the variable in its tail---in particular, circular arrows indicate recurrent dependencies. This dependency graph represents a circular interaction: scientist's actions, $a$, influence the environment's state, $s^\prime$; environment's states influence the scientist's sensor activity, $o$, via observations; sensor activity influences neural activity, $s_N$; neural activity influences motor activity, $s_M$, i.e. outflowing movement-producing signals; finally, motor activity influences scientist's actions, which closes the interaction loop. Although, internal neural activity and environment's states can influence back, respectively, sensor activity and scientist's actions---e.g. by changing body configuration---the global dynamics is clockwise. (B) Simplified dependency graph that only shows the circular dependency between environment's states, $s^\prime$, and scientist's internal states, $s= (s_M, s_N)$. An action can  prepare a desired state of the experimental system, e.g., a hand movement to turn a knob that places a particle in a desired location---in this sense it may be considered as a form of causal intervention. An observation can be mediated via a reading device, e.g. to determine the final position of the particle (see Fig.~\ref{f:circular_noRQM}A in the main text).  }\label{f:enactive}
	\end{figure*}

	Active inference, as briefly described above, still has a representationalist flavour in that the task of the scientist is to learn a model, i.e. a representation, as accurate as possible of the environment's dynamics, including how her own actions affect it. The environment, which is described by the {\em fixed} probability distributions $\Theta_\ell$ and $\Omega_\ell$ in Sec.~\ref{s:gen_process}, is considered as something externally given. This is reflected in that the topology of the Bayesian network representing the scientist mirrors the topology of the Bayesian network representing the environment. In particular, the internal and external dynamics (horizontal arrows in Fig.~\ref{f:active}) flow in the same direction. 

	In contrast, the enactive approach~\cite{varela2017embodied,di2017sensorimotor,gallagher2017enactivist} puts a stronger emphasis on the dynamical coupling between scientist and environment~\cite{buhrmann2013dynamical, di2017sensorimotor}. The focus is often on the particular sensor and motor systems of an individual like, e.g., a human or a robot. However, scientists manage to transcend their own sensorimotor limitations with the aid of technological devices that therefore enable them to couple to the world in ``more fundamental'' ways. For instance, the kind of manipulations and observations associated to light-matter interaction experiments are enabled by, e.g., lasers and electron microscopes. These kinds of couplings between scientists and world are hardly possible without such technologies. Such technologies are created by scientists themselves in their quest for lawful regularities. In this quest scientists have to learn how to build suitable experimental devices, how to stabilize the experimental system and achieve repeatability, how to obtain a decent measurement precision, etc. In general, how to achieve reliability---conditions {\bf R1}-{\bf R3} in Sec.~\ref{sm:intro} in the main text.
	Our work is focused only on the post-learning stage, so it does not depend on a specific theory of learning. 

	From an enactive perspective, we could consider both the scientist and the environment as physical systems involved in a circular interaction possibly enabled by technological devices (see Fig.~\ref{f:enactive}; cf. Fig.~3.5 in Ref.~\cite{di2017sensorimotor}).  
	However, to the best of our knowledge, the mathematical formalization of enactivism is not as well developed as that of active inference. Indeed, we are aware of only a couple of rather recent works~\cite{buhrmann2013dynamical, di2017sensorimotor} that attempt to do that. Here we briefly discuss some of the main concepts underlying enactivism, closely following~\cite{di2017sensorimotor} (see Ch. 3 therein; see also Ref.~\cite{buhrmann2013dynamical}). 

	For instance, the (controlled) environment or experimental system could be described by state variables $s^\prime$, e.g., the position of a particle in a piece-wise linear potential (see Fig.~\ref{f:circular_noRQM}A in the main text). Similarly, we could use variables $a$ to represent actions the scientist perform on the experimental system, e.g. by moving her hand to turn a knob that puts the particle in a desired position---these kinds of actions could be considered effectively as state preparations or causal interventions. The dynamics of the environment can then be described by~\cite{di2017sensorimotor} 
	\be\label{e:env}
	\frac{\mathrm{d} s^\prime}{\mathrm{d} t} = \mathcal{E}(s^\prime , a),
	\ee
	where the function $\mathcal{E}$ captures the dependency of the environment's current state on its previous state and the scientist's previous actions. 

	The scientist's sensor activity, here denoted by variables $o$, is influenced by the environment via her observations that stimulate her sensorium.  Furthermore, in Refs.~\cite{di2017sensorimotor, buhrmann2013dynamical} the scientist is assumed to have an internal neural dynamics, here described by variable $s_N$, which modulates the sensors activity. The scientist's sensors' dynamics can then be described by
	\be\label{e:obs}
	\frac{\mathrm{d} o}{\mathrm{d} t} = \mathcal{O}(s^\prime , s_N),
	\ee
	where the function $\mathcal{O}$ captures the dependency of the scientist's sensor dynamics on the state of both the environment and the scientist's internal neural dynamics.

	The dynamics of neural activity is assumed to depend on sensor activity and on the neural activity itself, i.e.
	\be\label{e:neu}
	\frac{\mathrm{d} s_N}{\mathrm{d} t} = \mathcal{N}(s , s_N),
	\ee
	where the function $\mathcal{N}$ captures such dependencies. Additionally, the scientist's outflowing movement-producing signals, or motor activity, denoted here by $s_M$, is assumed to be influenced by the neural activity, $s_N$, i.e.
	\be\label{e:mot}
	\frac{\mathrm{d} s_M}{\mathrm{d} t} = \mathcal{M}(s_N),
	\ee
	where the function $\mathcal{M}$ captures such an influence. 

	Finally, the interaction loop is closed by assuming the scientist's actions, which can be implemented via body configurations, depend on the current actions she performs, on her internal motor activity, and on the state of the environment. So, the scientist's actions dynamics can be described as
	\be\label{e:act}
	\frac{\mathrm{d} a}{\mathrm{d} t} = \mathcal{A}(a, s_M, s^\prime),
	\ee
	where the function $\mathcal{A}$ captures such dependencies. 

	\

	\section{Principle of maximum caliber and factor graphs}\label{s:MaxCal}

	Here we discuss the principle of maximum dynamical entropy, or principle of maximum caliber. This is a general variational principle, similar to the free energy principle, from which a variety of models at, near, and far from equilibrium can be derived~\cite{presse2013principles}.  We have used this principle in the main text to derive the form of the stationary distribution over the dynamical trajectories characterizing a scientist interacting with an experimental system. 

	The principle of maximum entropy~\cite{jaynes2003probability} to derive some common equilibrium probability distributions in statistical physics can be extended to the so-called principle of maximum caliber to deal with non-equilibrium distributions on trajectories~\cite{presse2013principles}. In particular Markov chains and Markov processes can be derived from the principle of maximum caliber (see e.g. Sec. IX B in Ref.~\cite{presse2013principles}). We introduce this principle here with an example relevant for our discussion in the main text. 

	Consider a probability distribution $\widetilde{\mathcal{P}}(\widetilde{\bx})$ on (discretized) close paths $x_0\to x_1\to\cdots\to x_{k-1}\to x_0$, denoted as $\widetilde{\bx} = (x_0,\dotsc , x_{k-1})$, where $x_\ell$ refers to the position at time $t=\ell\epsilon$. Assume that we only have information about the average energy on the (discretized) paths given by
	\be\label{e:Eav}
	\mathcal{H}_{\rm av}[\widetilde{\mathcal{P}}] = \bra\frac{1}{T}\sum_{\ell=0}^{k-1}\mathcal{H}_\ell(x_{\ell+1}, x_\ell)\epsilon\ket_{\widetilde{\mcP}} ,
	\ee
	where $x_k=x_0$, $T=k\epsilon$ is the total time duration of the path, and $\mathcal{H}_\ell$ is the ``energy'' or Hamiltonian-like function at time step $\ell$. Here
	\be\label{e:Eav}
	\bra f\ket_{\widetilde{\mathcal{P}}} = \int \widetilde{\mathcal{P}}(x_0,\dotsc ,x_{k-1})f(x_0,\dotsc , x_{k-1})\prod_{\ell = 0}^{k-1}\mathrm{d}x_\ell  ,
	\ee
	denotes the average value of a generic function $f$ of a path, with respect to a generic path probability distribution $\widetilde{\mcP}$. For convenience, here we are using integrals instead of sums, as in the main text. However, our analysis is valid for discrete variables too by changing these integrals by sums, $\int\to\sum$. 

	The principle of maximum caliber tells us that among all possible probability distributions we should choose the one that both maximizes the entropy
	\be\label{e:entropy}
	\mathcal{S}[\widetilde{\mathcal{P}}] = -\bra\ln\widetilde{\mathcal{P}}(x_0,\dotsc ,x_{k-1})\ket_{\widetilde{\mcP}},
	\ee
	and is consistent with the information we have, i.e. $\mathcal{H}_{\rm av}[\widetilde{\mathcal{P}}] = E_{\rm av}$, where $E_{\rm av}$ is the fixed value of the average energy. Introducing a Lagrange multiplier $\lambda$ to enforce the constraint on the average energy, the constrained maximization of $\mathcal{S}[\widetilde{\mathcal{P}}]$ becomes equivalent to the maximization of the Lagrangian $\mathcal{S}[\widetilde{\mcP}] - \lambda \mathcal{H}_{\rm av}[\widetilde{\mathcal{P}}]$. The solution to this problem is the distribution
	\be\label{e:MaxCal}
	\widetilde{\mathcal{P}}(x_0,\dotsc , x_{k-1}) = \frac{1}{\mathcal{Z}}\exp\left[-\frac{\lambda}{T}\sum_{\ell=0}^{n-1}\mathcal{H}_\ell(x_{\ell+1} , x_{\ell})\epsilon\right],
	\ee
	where $\mathcal{Z}$ is the normalization factor. 

	Notice that $\widetilde{\mathcal{P}}$ in Eq.~\eqref{e:MaxCal} can be written as a product of factors
	\be\label{e:P_prodF}
	\widetilde{\mathcal{P}}(x_0,\dotsc , x_{k-1}) = \frac{1}{Z}\prod_{\ell=0}^{k-1} F_\ell(x_{\ell+1} , x_{\ell}),
	\ee
	with $x_k =x_0$. Without loss of generality, we can choose the factors as
	\be\label{e:F}
	F_\ell(x_{\ell + 1} , x_{\ell}) = \frac{1}{|\mathcal{A}|}\exp\left[-\frac{\lambda}{T} \mathcal{H}_\ell(x_{\ell+1} , x_{\ell})\epsilon\right],
	\ee
	with $|\mathcal{A}|= \sqrt{2\pi T\epsilon / m\lambda}$, so $Z = \mathcal{Z}/|\mathcal{A}|^{n}$ in Eq.~\eqref{e:P_prodF}.  

	\

	\section{Factor graphs on chains and Markov processes}\label{s:chainMarkov}
	Here we show the well-known fact that a factor graph with the topology of a chain can be written as a Markov chain. First, notice that by marginalizing the probability distribution defined in Eq.~\eqref{e:P_prodF} over all variables except $x_\ell$ and $x_{\ell+1}$ we obtain
	\begin{widetext}
	\BE
	\mathcal{P}_\ell(x_{\ell +1} , x_{\ell }) &=& \frac{1}{Z} F_\ell(x_{\ell +1}, x_{\ell }) Z_{\to\ell}(x_\ell)Z_{\ell+1\leftarrow}(x_{\ell +1}),\label{e:P_BP}\\
		p_\ell(x_\ell ) &=& \sum_{x_{\ell+1}}\mathcal{P}_\ell(x_{\ell +1} , x_{\ell }) = \frac{1}{Z} Z_{\to\ell}(x_\ell)Z_{\ell\leftarrow}(x_{\ell }),\label{e:p_BP}
		\EE
		\end{widetext}
		where the partial partition functions $Z_{\to\ell}(x_\ell)$ and $Z_{\ell\leftarrow}(x_{\ell })$ of the original factor graph are given by the partition functions of the modified factor graphs that contain all factors $F_{\ell^\prime}$ to the left (i.e. $\ell^\prime < \ell$) and to the right (i.e. $\ell^\prime \geq \ell$) of variable $x_\ell$, respectively; i.e. (cf. Eq. (14.2) in Ref.~\cite{Mezard-book-2009}).
		\BE
		Z_{\to\ell}(x_\ell) &=& \sum_{x_0,\dotsc , x_{\ell -1}} \prod_{\ell^\prime = 0}^{\ell-1}F_{\ell^\prime}(x_{\ell^\prime +1}, x_{\ell^\prime}),\label{e:Z->}\\
			Z_{\ell\leftarrow}(x_{\ell }) &=& \sum_{x_{\ell+1},\dotsc , x_{n}} \prod_{\ell^\prime = \ell}^{n-1}F_{\ell^\prime}(x_{\ell^\prime + 1}, x_{\ell^\prime }),\label{e:Z<-}
			\EE
			with 
			\BE
			Z_{\to 0}(x_0) &=& 1,\label{e:Z->0=1}\\ 
			Z_{n\leftarrow}(x_n) &=& 1. \label{e:Zn<-=1}
			\EE
			$Z_{\to\ell}(x_\ell)$ and $Z_{\ell\leftarrow}(x_{\ell })$ can be interpreted as information that arrives to variable $\ell$ from the left and from the right side of the graph, respectively.

			Indeed, from Eqs.~\eqref{e:P_BP} and \eqref{e:p_BP} we obtain
			\be
			\begin{split}
			\mcP_\ell^+(x_{\ell +1}|x_\ell) =& \frac{\mcP_\ell(x_{\ell +1},x_\ell)}{p_\ell(x_\ell)}\\ 
			=& F_\ell(x_{\ell +1},x_\ell)\frac{Z_{\ell +1\leftarrow}(x_{\ell+1})}{Z_{\ell\leftarrow}(x_\ell)}.\label{e:P(|)fromF}
			\end{split}
			\ee
			So, we can write
			\be
			\begin{split}
			\mcP_{\rm ch}(\bx) = &\frac{1}{Z}\prod_{\ell =0}^{n-1}F_\ell(x_{\ell +1},x_\ell)\\ 
			=& \prod_{\ell = 0}^{n-1}\mcP_\ell^+(x_{\ell +1}|x_\ell) \frac{Z_{0\leftarrow}(x_0)\prod_{\ell = 1}^{n-1}Z_{\ell\leftarrow}(x_\ell)}{Z\prod_{\ell = 0}^{n-2}Z_{\ell +1\leftarrow}(x_{\ell+1})}\\
			 =&  p_0(x_0)\prod_{\ell=0}^{n-1} \mcP_\ell^+(x_{\ell +1}|x_\ell).\label{e:fromFtoMarkov}
			 \end{split}
			 \ee

			 To go from the first line to the second line in Eq.~\eqref{e:fromFtoMarkov} we have written $F_\ell(x_{\ell +1 },x_\ell)$ in terms of $\mcP_\ell^+(x_{\ell +1 }|x_\ell)$, $Z_{\ell\leftarrow}(x_\ell)$, and $Z_{\ell + 1\leftarrow}(x_{\ell +1})$ using Eq.~\eqref{e:P(|)fromF}. Furthermore, we have taken the term $Z_{0\leftarrow}(x_0)$ out of the product in the numerator in the second line, so the remaining product starts at $\ell = 1$. We have also taken into account that $Z_{n\leftarrow}(x_n) = 1$, so the product in the denominator goes up to $\ell = n-2$ only. Notice that the products
	\be
\prod_{\ell = 1}^{n-1}Z_{\ell\leftarrow}(x_\ell),\hspace{0.5cm} \textrm{and}\hspace{0.5cm}\prod_{\ell = 0}^{n-2}Z_{\ell +1\leftarrow}(x_{\ell+1})
	\ee
	in the numerator and denominator, respectively, are equal, so they cancel out. Finally, to go from the second to the third line in Eq.~\eqref{e:fromFtoMarkov}, we have taken into account that
	\be
	p_0(x_0) = \frac{Z_{0\leftarrow}(x_0)}{Z},
	\ee
	since $Z_{\to 0}(x_0) = 1$ (see Eqs.~\eqref{e:p_BP} and \eqref{e:Z->0=1}).

	\section{Factor graphs on cycles and Bernstein processes}\label{s:cycleBernstein}
	Here we show that a factor graph with the topology of a circle, which is given by Eq.~\eqref{em:circular}, can be written as a Bernstein process rather than as a Markov chain. Indeed, from Eq.~\eqref{em:circular} (with $x_0^\prime = x_0$) we obtain for the two variable marginal in Eq.~\eqref{em:P_Bernstein}
	\be\label{e:P(xn,x0)}
	\begin{split}
	p(x_0, x_n) =& \sum_{x_1,\dotsc , x_{n-1}}\mcP(\bx) \\
		     =&\frac{1}{Z}\widetilde{F}_n(x_0,x_n)\sum_{x_1,\dotsc , x_{n-1}}\prod_{\ell =0}^{n-1} F_\ell(x_{\ell+1},x_\ell),
	\end{split}
	\ee
	so,
	\be
	\begin{split}\label{e:mcP(|x0xn)}
	\mcP(\bx^\prime| x_n, x_0) =& \frac{\mcP(\bx)}{p(x_0,x_n)} \\
				    =&\frac{1}{Z(x_n,x_0)}\prod_{\ell =0}^{n-1} F_\ell(x_{\ell+1},x_\ell),
	\end{split}
	\ee
	where $\bx^\prime=(x_1,\dotsc , x_{n-1})$ and 
	\be
	Z(x_n,x_0)=\sum_{x_1,\dotsc , x_{n-1}}\prod_{\ell =0}^{n-1} F_\ell(x_{\ell+1},x_\ell),
	\ee
	is the new normalization constant. In particular, the term $\widetilde{F}_n(x_0,x_n)$ which closes the loop has cancelled out (see Eq.~\eqref{em:Ftilde}). Thus, the new graphical model $\mcP(\bx^\prime| x_n, x_0)$ on variables $\bx^\prime = (x_1,\dotsc , x_{n-1})$ has the topology of a chain and we obtain the equivalent of Eqs.~\eqref{e:P_BP} and \eqref{e:p_BP}
	\begin{widetext}
	\BE
	\mathcal{P}_\ell(x_{\ell +1} , x_{\ell }|x_n, x_0) &=& \sum_{x_1,\dotsc , x_{\ell-1},x_{\ell+2},\dotsc x_{n-1}}\mcP(\bx^\prime|x_n,x_0) = \frac{ F_\ell(x_{\ell +1}, x_{\ell }) Z_{\to\ell}(x_\ell|x_0)Z_{\ell+1\leftarrow}(x_{\ell +1}|x_n)}{Z(x_n, x_0)},\label{e:P_BPx0xn}\\
		p_\ell(x_\ell |x_n, x_0) &=& 
		\frac{1}{Z(x_n , x_0)} Z_{\to\ell}(x_\ell|x_0)Z_{\ell\leftarrow}(x_{\ell }|x_n),\label{e:p_BPx0xn}
		\EE
		\end{widetext}
		where $\ell =1,\dotsc , n-2$ for Eq.~\eqref{e:P_BPx0xn} and $\ell =1,\dotsc , n-1$ for Eq.~\eqref{e:p_BPx0xn}. Here
		%


		\BE
		Z_{\to\ell}(x_\ell|x_0) &=& \sum_{x_1,\dotsc , x_{\ell -1}} \prod_{\ell^\prime = 0}^{\ell-1}F_{\ell^\prime}(x_{\ell^\prime +1}, x_{\ell^\prime}),\label{e:Z->x0xn}\\
			Z_{\ell\leftarrow}(x_{\ell }|x_n) &=& \sum_{x_{\ell +1}, \dotsc x_{n-1}} \prod_{\ell^\prime = \ell}^{n-1}F_{\ell^\prime}(x_{\ell^\prime + 1}, x_{\ell^\prime }).\label{e:Z<-x0xn}
			\EE
			%
			Importantly, $Z_{\to\ell}(x_\ell|x_0) $ is only conditioned on $x_0$ because it propagates information from ``past'' to ``future'' and $Z_{\ell\leftarrow}(x_{\ell }|x_n)$ is only conditioned on $x_n$ because it propagates information from ``future'' to ``past.'' Notice that, according to Eqs.~\eqref{e:Z->x0xn} and \eqref{e:Z<-x0xn}, we have 
			\BE
			Z_{\to 1}(x_1|x_0) &=& F_0(x_1, x_0),\label{e:Z->1x0}\\
				Z_{n-1\leftarrow}(x_{n-1}|x_n) &=& F_{n-1}(x_n, x_{n-1})\label{e:Zn-1<-xn}.
				\EE

				So, the equivalent of Eq.~\eqref{e:P(|)fromF} is
				\be
				\begin{split}
				\mcP_\ell^+(x_{\ell +1}|x_n,x_\ell)=& \mcP_\ell^+(x_{\ell +1}|x_n,x_\ell, x_0) \\
					=&\frac{\mcP_\ell(x_{\ell +1},x_\ell|x_n, x_0)}{p_\ell(x_\ell|x_n, x_0)}\\ 
					=& F_\ell(x_{\ell +1},x_\ell)\frac{Z_{\ell +1\leftarrow}(x_{\ell+1}|x_n)}{Z_{\ell\leftarrow}(x_\ell|x_n)},\label{e:P(|x0xn)fromF}
					\end{split}
					\ee
					for $\ell =1,\dotsc , n-2$. Notice that $\mcP_\ell^+(x_{\ell +1}|x_n,x_\ell) $ is not conditioned on $x_0$ because the terms $Z_{\to\ell}(x_\ell|x_0)$ in the numerator and denominator in Eq.~\eqref{e:P(|x0xn)fromF} cancel out (see Eqs.~\eqref{e:P_BPx0xn} and \eqref{e:p_BPx0xn}). 

					Therefore, the equivalent of Eq.~\eqref{e:fromFtoMarkov} is 
					\begin{widetext}
					\be
					\begin{split}
					\mcP(\bx^\prime|x_n, x_0) = &\frac{F_0(x_1,x_0)F_{n-1}(x_n, x_{n-1})}{Z(x_n,x_0)}\prod_{\ell =1}^{n-2}F_\ell(x_{\ell +1},x_\ell)\\ 
					=& \frac{Z_{\to 1}(x_1|x_0)Z_{n-1\leftarrow}(x_{n-1}|x_n)}{Z(x_n,x_0)}\prod_{\ell = 1}^{n-2}\mcP_\ell^+(x_{\ell +1}|x_n, x_\ell) \frac{\prod_{\ell = 1}^{n-2}Z_{\ell\leftarrow}(x_\ell|x_n)}{\prod_{\ell = 1}^{n-2}Z_{\ell +1\leftarrow}(x_{\ell+1}|x_n)}\\
					 =& \frac{Z_{\to 1}(x_1|x_0)Z_{1\leftarrow}(x_1|x_n)}{Z(x_n,x_0)}\prod_{\ell = 1}^{n-2}\mcP_\ell^+(x_{\ell +1}|x_n, x_\ell) \frac{\prod_{\ell = 2}^{n-1}Z_{\ell\leftarrow}(x_\ell|x_n)}{\prod_{\ell = 1}^{n-2}Z_{\ell +1\leftarrow}(x_{\ell+1}|x_n)}\\ 
					 =&  p_1(x_1|x_n,x_0)\prod_{\ell=1}^{n-2} \mcP_\ell^+(x_{\ell +1}|x_n,x_\ell).\label{e:fromFtoMarkov_x0xn}
					 \end{split}
					 \ee
					 \end{widetext}
					 In the first line of Eq.~\eqref{e:fromFtoMarkov_x0xn} we have used Eq.~\eqref{e:mcP(|x0xn)} and we have taken out of the corresponding product the terms $F_0(x_1,x_0)$ and $F_{n-1}(x_n,x_{n-1})$, which are equal to $Z_{\to 1}(x_1|x_0)$ and $Z_{n-1\leftarrow}(x_{n-1}|x_n)$ according to Eqs.~\eqref{e:Z->1x0} and \eqref{e:Zn-1<-xn}. In the second line of Eq.~\eqref{e:fromFtoMarkov_x0xn} we have used Eq.~\eqref{e:P(|x0xn)fromF} to write $F_\ell(x_{\ell+1},x_\ell)$ in terms of $\mcP_\ell^+(x_{\ell +1}|x_n, x_\ell)$, $Z_{\ell\leftarrow}(x_\ell|x_n)$, and $Z_{\ell +1\leftarrow}(x_{\ell+1}|x_n)$. In the third line of Eq.~\eqref{e:fromFtoMarkov_x0xn} we have incorporated the term $Z_{n-1\leftarrow}(x_{n-1}|x_n)$ into the product $\prod_{\ell=1}^{n-2}Z_{\ell\leftarrow}(x_\ell|x_n)$ in te numerator, so the product now runs until $\ell = n-1$ instead of $\ell =n-2$. We have also taken out of this same product the term $Z_{1\leftarrow}(x_1|x_n)$, so the product now runs from $\ell=2$ rather than $\ell =1$. Finally, to obtain the fourth line of Eq.~\eqref{e:fromFtoMarkov_x0xn} we have used the fact that the products
					 \be
					 \prod_{\ell = 2}^{n-1}Z_{\ell\leftarrow}(x_\ell|x_n)\hspace{0.3cm}\textrm{and}\hspace{0.3cm}\prod_{\ell = 1}^{n-2}Z_{\ell +1\leftarrow}(x_{\ell+1}|x_n),
					 \ee
					 in the numerator and denominator, respectively, are equal, so they cancel out. Furthermore, we have used Eq.~\eqref{e:p_BPx0xn} to introduce the marginal $p_1(x_1|x_n,x_0)$.
					 Now, writing $\mcP_0^+(x_{1}|x_n,x_0) = p_1(x_1|x_n,x_0)$ in Eq.~\eqref{e:fromFtoMarkov_x0xn} and using Eq.~\eqref{e:mcP(|x0xn)} to write $\mcP(\bx)=p(x_0,x_n)\mcP(\bx^\prime|x_n,x_0)$ we obtain Eq.~\eqref{em:P_Bernstein} in the main text.

					 \

					 \section{Derivation of Eq.~\eqref{em:F=1+J} for continuous variables}\label{s:derivationF=1+J}
					 Here we derive Eq.~\eqref{em:F=1+J} for continuous variables in the particular case where the factors $F_\ell$ are given by Eq.~\eqref{em:F_Gaussian}. Introducing Eq.~\eqref{em:F_Gaussian} into Eq.~\eqref{em:test} and expanding $g(x^\prime)$ up to second order in $\xi = x^\prime - x$ we obtain
					 \begin{widetext}
					 \be
					 [F_\ell g](x) = \left[1 - \frac{\epsilon}{\hobs} V(x) + O(\epsilon^2)\right]\left[g(x)\int p_\sigma(\xi) \mathrm{d}\xi +\frac{\partial g(x)}{\partial x}\int\xi p_\sigma(\xi) \mathrm{d}\xi + \frac{1}{2}\frac{\partial^2 g(x)}{\partial x^2}\int\xi^2 p_\sigma(\xi) \mathrm{d}\xi + O(\epsilon^2)\right],
					 \ee
					 \end{widetext}
					 where
					 \be
					 p_\sigma(\xi)=\tfrac{1}{\sqrt{2\pi\sigma^2}}e^{-\xi^2/{2\sigma^2}},
					 \ee
					 is the Gaussian factor in Eq.~\eqref{em:F_Gaussian}---so, $\sigma^2=\epsilon\hobs/m$ which goes to zero as $\epsilon\to 0$. Here we have also expanded the term
					 \be
					 e^{-\frac{\epsilon}{2\Gamma} [V(x)+V(x^\prime)]} = 1-\frac{\epsilon}{\Gamma} V(x) + O(\epsilon^2),
					 \ee
					 up to first order in $\epsilon$---we have replaced $V(x^\prime)$ for $V(x)$ since the term $\epsilon[V(x) + V(x^\prime)]/2\Gamma$ is already of order $\epsilon$. Taking into account that $\int p_\sigma(\xi)\mathrm{d}\xi = 1$, that $\int\xi p_\sigma(\xi)\mathrm{d}\xi = 0$, and that $\int\xi^2 p_\sigma(\xi)\mathrm{d}\xi = \sigma^2$ we get
					 \be
					 \begin{split}
					 [F_\ell g](x) =& \left[1 - \frac{\epsilon}{\hobs} V(x) \right]\left[g(x) + \frac{\epsilon\Gamma}{2 m}\frac{\partial^2 g(x)}{\partial x^2} \right]+ O(\epsilon^2)\\
							=& g(x) - \frac{\epsilon}{\hobs} \left[V(x) g(x) - \frac{\Gamma^2}{2 m}\frac{\partial^2 g(x)}{\partial x^2}\right] + O(\epsilon^2)\\
							=& \left[\id - \frac{\epsilon}{\hobs} H\right]g(x) + O(\epsilon^2),
					 \end{split}
					 \ee
					 where $H$ is given by Eq.~\eqref{em:H}. This yields Eq.~\eqref{em:F=1+J} with $J_\ell = -H/\hobs$.

					 \

					 \section{Hamiltonians with positive off-diagonal entries}\label{s:two_atom}

					 Here we discuss the example of an infinite-dimensional quantum system described by factors with non-negative entries which, after truncation to its first two energy levels, turns into an effective system described by factors with negative off-diagonal entries. The latter is known as a two-level atom interacting with a coherent radiation field~\cite{haken2005physics} (see Sec.~15.3 therein).

					 Indeed, consider the Hamiltonian of an atom modeled as an electron, described by the momentum operator $i\hbar\nabla_{\mathbf{x}}$, moving in a potential field $V(\bx)$ produced by the nucleus,
					 \be\label{e:H0_atom}
					 H_0 = -\frac{\hbar^2}{2m}\nabla_{\mathbf{x}}^2 + V(\bx) = \sum_{n=0}^\infty E_n \left.| n\ket\bra n\right|.
					 \ee
					 In the second equality we have expanded the Hamiltonian in terms of its eigenvalues $E_n$ and its eigenvectors $\left| n\ket$, where $n$ is an integer, $n\geq 0$. 

					 Now consider a perturbation 
					 \be\label{e:U(t)}
					 U(\mathbf{x},t) = e \mathbf{x}\cdot\mathbf{E}(t) = e\mathbf{x}\cdot\mathbf{E}_0\cos(\omega t),
					 \ee
					 so the perturbed Hamiltonian becomes ${H = H_0 + U}$. Notice that the full Hamiltonian operator, $H$, can be derived via a path integral with Lagrangian 
					 \be\label{e:L2}
					 L = \frac{m}{2}\dot{\mathbf{x}}^2 - V(\bx) - U(\mathbf{x},t).
					 \ee
We can also derive $H$ via a real non-negative factors or kernels (cf. Eq.~\eqref{em:F})
	\be
	\mathcal{K}_\epsilon(\mathbf{x}^\prime, \mathbf{x}) = e^{-\epsilon\mathcal{H}(\mathbf{x}^\prime, \mathbf{x})/\hbar}\geq 0,\label{e:K2}
	\ee
	where
	\be
	\mathcal{H}(\mathbf{x}^\prime, \mathbf{x}) = \frac{m}{2}\left(\frac{\mathbf{x}^\prime-\mathbf{x}}{\epsilon}\right)^2 + V\left(\frac{|\mathbf{x}^\prime + \mathbf{x}|}{2}\right) + U(t).\label{e:HK2}
	\ee

	We will now see that, after a standard truncation of the full Hamiltonian $H=H_0+U$ (see Eqs.~\eqref{e:H0_atom} and \eqref{e:U(t)}) into an effective two-level system, we lose the equivalence with the positive kernel given by Eqs.~\eqref{e:K2} and \eqref{e:HK2}. Indeed, in the derivation of the Hamiltonian of a two-level atom it is usually assumed that the perturbation defined in Eq.~\eqref{e:U(t)} is near resonance with two relevant energy levels of the Hamiltonian $H_0$, say $E_0$ and $E_1$, i.e. {$|\omega - \omega_0|\lll\omega_0$}, where ${\hbar\omega_0 = E_1 - E_0}$. In this case, it is usually assumed that only the dynamics of these two energy levels matter. So, we can write
	\be\label{e:H_R}
	\begin{split}
	H =& E_0\left| 0 \ket\bra 0\right| + E_1\left| 1 \ket\bra 1\right| + U_{01}\left(\left| 0 \ket\bra 1\right| + \left| 1 \ket\bra 0\right|\right)\\
	   &+ H_\mathcal{R},
	\end{split}
	\ee
	where the first three terms in the right hand side of Eq.~\eqref{e:H_R} correspond to the transitions taking place within the subspace spanned by $\{\left|0\ket ,\left|1\ket\}$, and 
	\be\label{e:Residual}
	\begin{split}
	H_\mathcal{R} =& \sum_{n=2}^\infty E_n \left.| n\ket\bra n\right|\\
		&+ \sum_{m=0}^\infty\sum_{n> m, n\neq 0, 1}^\infty \left(U_{mn}\left| m \ket\bra n\right| + U_{nm}\left| n \ket\bra m\right|\right),
	\end{split}
	\ee
	collects all the remaining transitions. Here we have written 
	\be 
	U_{mn} = U_{nm}^\ast =  \bra m\right| U(t) \left| n \ket,
	\ee
	for $m, n$ integers, $m,n\geq 0$. For the sake of illustration, we are restricting here to the case where $U_{01}=U^\ast_{01}$ can be chosen to be real and $U_{00} = U_{11} = 0$~\cite{haken2005physics} (see Sec.~15.3 therein); this explains the form of Eq.~\eqref{e:H_R}.

	At this point it is argued that we can neglect $H_\mathcal{R}$ since the system is near resonance.  This yields the effective two-level Hamiltonian 
	\be\label{e:H2eff}
	H_{\rm eff} = \overline{E}\id^{(01)} -\frac{\hbar\omega_0}{2} \sigma_Z^{(01)} + D\cos(\omega t)\sigma_X^{01}, 
	\ee
	where $\overline{E}=(E_0+E_1)/2$, $D = \bra 0\right| \mathbf{r}\cdot\mathbf{E}_0\left| 1 \ket$, and
	\BE
	\id^{(01)} &=& \left| 0 \ket\bra 0\right| + \left| 1 \ket\bra 1\right|,\\
		\sigma_X^{(01)} &=& \left| 0 \ket\bra 1\right| + \left| 1 \ket\bra 0\right|,\\
		\sigma_Z^{(01)} &=& \left| 0 \ket\bra 0\right| - \left| 1 \ket\bra 1\right|.
		\EE

		If we now try to write this as a real factor 
		\be\label{e:F_eff_atom}
		F_{\rm eff} = \id + \epsilon J_{\rm eff},
		\ee
		with $J_{\rm eff} = -H_{\rm eff}/\hbar$, we end up with off-diagonal negative entries due to the factor $\cos(\omega t)$ accompanying $\sigma_X$ in Eq.~\eqref{e:H2eff}. So, the full Hamiltonian $H=H_0 + U$ in Eq.~\eqref{e:H_R} can be represented in terms of the real positive kernel given by  Eqs.~\eqref{e:K2} and \eqref{e:HK2}, but the truncated effective Hamiltonian $H_{\rm eff}$ in Eq.~\eqref{e:H2eff} cannot. What happened? The full factor $F = \id-\epsilon H/\hbar$ associated to kernel $\mathcal{K}_\epsilon $ (see Eqs.~\eqref{e:K2} and \eqref{e:H_R}), which has only non-negative entries, can be written as 
		\be\label{e:F_R_atom}
		F = \mathcal{R} + F_{\rm eff} . 
		\ee
		So, even though the effective factor $F_{\rm eff}$ in Eq.~\eqref{e:F_eff_atom} can have negative off-diagonal entries, those would be ``corrected'' by the ``reference'' term $\mathcal{R}= -\epsilon H_\mathcal{R}/\hbar$ yielding only positive quantities with a clear probabilistic interpretation. 

		\section{Measuring momentum}\label{a:momentum}
		Here we describe how to measure the momentum of a system particle whose position is described by $x$, using a probe particle whose position is described by $X$. The initial states of the system particle and the probe particle are
		\BE
		\psi_{\rm sys}(x) &= \int c_k e^{i kx}\mathrm{d}k,\\
			\psi_{\rm dev}(X) &=\left[\frac{e^{-X^2/2\sigma^2}}{\sqrt{2\pi\sigma^2}}\right]^{\frac{1}{2}},
		\EE
		respectively. For convenience, here we write the initial state of the system particle in terms of its Fourier transform, $c_k$, and the initial state of the probe particle as a Gaussian packet centred around zero and with a very small standard deviation, i.e. $0<\sigma\ll 1$. Furthermore, the initial state of the total system composed of system and probe particles is $\Psi_0(x,X)=\psi_{\rm sys}(x)\psi_{\rm dev}(X)$. That is, the system and probe particles are initially independent of each other.

		The interaction between the system particle and the probe particle is given by the Hamiltonian
		\be
		H_{\rm int} = -g(t)\hbar^2\frac{\partial^2}{\partial x\partial X},
		\ee
		where $g(t)$ describes the strength of the interaction at time $t$. In the time interval $[0,T]$, the function $g(t)$ is equal to $g_0/T$, where $g_0$ and $T$ are constants,  and zero elsewhere. For convenience, this Hamiltonian can be written as $H_{\rm int}= g(t) p P$, in terms of the operators $p=-i\hbar\partial/\partial x$ and $P=-i\hbar\partial/\partial X$. For simplicity, the system and probe Hamiltonians are neglected, as usually done in the field of quantum measurement. So, $H_{\rm int}$ is the total Hamiltonian from here on.

		At time $T$ the initial state of the composed system is
		\be
		\begin{split}
		\Psi_T(x,X) &= e^{-i\int_0^T g(t) p P\mathrm{d}t/\hbar}\Psi_0(X, x), \\
			       &= \int c_k e^{ixk} e^{-i g_0 k P} \psi_{\rm dev}(X)\mathrm{d}k,\\
			       &= \int c_k e^{ixk} e^{- g_0 \hbar k \partial/\partial X} \psi_{\rm dev}(X)\mathrm{d}k,\\
			       &= \int c_k e^{ixk} \psi_{\rm dev}(X- g_0 \hbar k)\mathrm{d}k.
			       \end{split}
			       \ee
			       Here to go from the first to the second line we used the fact that $e^{ikx}$ is an eigenstate of the operator $\partial/\partial x$ with eigenvalue $ik$. We have also done the integral $\int_0^T g(t)\mathrm{d} t = g_0/T$. Simply using $P=-i\hbar\partial/\partial X$ takes us from the second to the third line. To go from the third to the fourth line we have used the fact that $e^{b \partial/\partial X}f(X) = f(X+b)$, where $f$ is a generic smooth function---this can be seen by doing a Taylor expansion of the operator $e^{b\partial/\partial X}$. 

			       Now, the joint probability that the system and probe particles, respectively, are at position $x$ and $X$ is given by $\mcP(x,X)=|\Psi_T(x,X)|^2$. So, 
			       \begin{widetext}
			       \be
			       \mcP(x,X) = \int c_k c_{k^\prime}^\ast e^{ix(k-k^\prime)} \psi_{\rm dev}(X- g_0 \hbar k)\psi_{\rm dev}(X- g_0 \hbar k^\prime)\mathrm{d}k\mathrm{d}k^\prime.
			       \ee
			       \end{widetext}

			       Marginalizing $\mcP(x,X)$ over $x$, as we are only interested in {\it inferring} the momentum of the system particle from measuring position $X$, we have
			       \begin{widetext}
			       \be
			       \begin{split}
			       \mcP_{\rm dev}(X) &= \int c_k c_{k^\prime}^\ast \left[\int e^{ix(k-k^\prime)}\mathrm{d}x\right] \psi_{\rm dev}(X- g_0 \hbar k)\psi_{\rm dev}(X- g_0 \hbar k^\prime)\mathrm{d}k\mathrm{d}k^\prime,\\
				       &= \int c_k c_{k^\prime}^\ast \delta(k-k^\prime) \psi_{\rm dev}(X- g_0 \hbar k)\psi_{\rm dev}(X- g_0 \hbar k^\prime)\mathrm{d}k\mathrm{d}k^\prime,\\
				       &= \int |c_k|^2 \frac{e^{(X- g_0 \hbar k)^2/2\sigma^2}}{\sqrt{2\pi\sigma^2}}\mathrm{d}k,\\
				       &\approx \int |c_k|^2 \delta(X- g_0 \hbar k)\mathrm{d}k.
				       \end{split}
				       \ee
				       \end{widetext}
				       Here, to go from the first to the second line, we used the fact that $\int e^{ix(k-k^\prime)}\mathrm{d}x = 2\pi\delta(k-k^\prime)$ is a representation of the Dirac delta function---the factor $2\pi$ has been absorbed in the terms $c_k$ and $c_k^\ast$ for simplicity. To go from the second to the third line, we used the fact that $\int\delta(k-k^\prime)f(k,k^\prime)\mathrm{d}k^\prime = f(k,k)$, where $f$ is a generic smooth function. We have also used the fact that $[\psi_{\rm dev}(X-g_0\hbar k)]^2$ yields a properly normalized Gaussian distribution. Finally, using the fact that a Gaussian distribution tends to a Dirac delta function when its standard deviation tends to zero takes us from the third line to the fourth line.


\begin{thebibliography}{10}

\bibitem{varela2017embodied}
F.~J. Varela, E.~Thompson, and E.~Rosch, {\em The embodied mind: Cognitive
  science and human experience (revised edition)}.
\newblock MIT press, 2017.

\bibitem{thompson2010mind}
E.~Thompson, {\em Mind in life}.
\newblock Harvard University Press, 2010.

\bibitem{di2017sensorimotor}
E.~Di~Paolo, T.~Buhrmann, and X.~Barandiaran, {\em Sensorimotor life: An
  enactive proposal}.
\newblock Oxford University Press, 2017.

\bibitem{shapiro2019embodied}
L.~Shapiro, {\em Embodied cognition---2nd edition}.
\newblock Routledge, 2019.

\bibitem{djebbara2019sensorimotor}
Z.~Djebbara, L.~B. Fich, L.~Petrini, and K.~Gramann, ``Sensorimotor brain
  dynamics reflect architectural affordances,'' {\em Proceedings of the
  National Academy of Sciences}, vol.~116, no.~29, pp.~14769--14778, 2019.

\bibitem{wilson2002six}
M.~Wilson, ``Six views of embodied cognition,'' {\em Psychonomic Bulletin \&
  Review}, vol.~9, no.~4, pp.~625--636, 2002.

\bibitem{bridgeman2011embodied}
B.~Bridgeman and P.~Tseng, ``Embodied cognition and the perception--action
  link,'' {\em Physics of Life Reviews}, vol.~8, no.~1, pp.~73--85, 2011.

\bibitem{friston2013life}
K.~Friston, ``Life as we know it,'' {\em Journal of the Royal Society
  Interface}, vol.~10, no.~86, p.~20130475, 2013.

\bibitem{velmans2009understanding}
M.~Velmans, {\em Understanding consciousness}.
\newblock Routledge, 2009.

\bibitem{thompson2014waking}
E.~Thompson, {\em Waking, dreaming, being: Self and consciousness in
  neuroscience, meditation, and philosophy}.
\newblock Columbia University Press, 2014.

\bibitem{bitbol2008consciousness}
M.~Bitbol, ``Is consciousness primary?,'' {\em NeuroQuantology}, vol.~6,
  pp.~53--71, 2008.

\bibitem{realpe2}
J.~Realpe, ``Observers, relational quantum mechanics, and {B}uddhist
  philosophy,'' {\em Mind and Matter}, vol.~22, no.~1, pp.~95--126, 2024.

\bibitem{friston2010free}
K.~Friston, ``The free-energy principle: a unified brain theory?,'' {\em Nature
  Reviews Neuroscience}, vol.~11, no.~2, pp.~127--138, 2010.

\bibitem{Rovelli-1996}
C.~Rovelli, ``Relational quantum mechanics,'' {\em International Journal of
  Theoretical Physics}, vol.~35, p.~1637, 1996.

\bibitem{schwobel2018active}
S.~Schw{\"o}bel, S.~Kiebel, and D.~Markovi{\'c}, ``Active inference, belief
  propagation, and the bethe approximation,'' {\em Neural Computation},
  vol.~30, no.~9, pp.~2530--2567, 2018.

\bibitem{presse2013principles}
S.~Press{\'e}, K.~Ghosh, J.~Lee, and K.~A. Dill, ``Principles of maximum
  entropy and maximum caliber in statistical physics,'' {\em Reviews of Modern
  Physics}, vol.~85, no.~3, p.~1115, 2013.

\bibitem{pearl2009causality}
J.~Pearl, {\em Causality}.
\newblock Cambridge university press, 2009.

\bibitem{Zambrini-1987}
J.~C. Zambrini, ``Euclidean quantum mechanics,'' {\em Physical Review A},
  vol.~35, pp.~3631--3649, May 1987.

\bibitem{Mezard-book-2009}
M.~Mezard and A.~Montanari, {\em Information, Physics, and Computation}.
\newblock Oxford Graduate Texts, Oxford University Press, USA, 2009.

\bibitem{debrota2018faqbism}
J.~B. DeBrota and B.~C. Stacey, ``{FAQBism},'' {\em arXiv preprint
  arXiv:1810.13401}, 2018.

\bibitem{mermin2018making}
N.~D. Mermin, ``Making better sense of quantum mechanics,'' {\em Reports on
  Progress in Physics}, vol.~82, no.~1, p.~012002, 2018.

\bibitem{fuchs2014introduction}
C.~A. Fuchs, N.~D. Mermin, and R.~Schack, ``An introduction to qbism with an
  application to the locality of quantum mechanics,'' {\em American Journal of
  Physics}, vol.~82, no.~8, pp.~749--754, 2014.

\bibitem{crutchfield1984space}
J.~P. Crutchfield, ``Space-time dynamics in video feedback,'' {\em Physica D:
  Nonlinear Phenomena}, vol.~10, no.~1-2, pp.~229--245, 1984.

\bibitem{rovelli2007quantum}
C.~Rovelli, {\em Quantum gravity}.
\newblock Cambridge University Press, 2007.

\bibitem{deacon2011incomplete}
T.~W. Deacon, {\em Incomplete nature: How mind emerged from matter}.
\newblock WW Norton \& Company, 2011.

\bibitem{dehaene2017consciousness}
S.~Dehaene, H.~Lau, and S.~Kouider, ``What is consciousness, and could machines
  have it?,'' {\em Science}, vol.~358, no.~6362, pp.~486--492, 2017.

\bibitem{haken2005physics}
H.~Haken and H.~C. Wolf, {\em The physics of atoms and quanta: introduction to
  experiments and theory}, vol.~1439.
\newblock Springer Science \& Business Media, 2005.

\bibitem{vinci2017non}
W.~Vinci and D.~A. Lidar, ``Non-stoquastic hamiltonians in quantum annealing
  via geometric phases,'' {\em npj Quantum Information}, vol.~3, no.~1, p.~38,
  2017.

\bibitem{padmanabhan2011nonrelativistic}
H.~Padmanabhan and T.~Padmanabhan, ``Nonrelativistic limit of quantum field
  theory in inertial and noninertial frames and the principle of equivalence,''
  {\em Physical Review D}, vol.~84, no.~8, p.~085018, 2011.

\bibitem{feynman2010quantum}
R.~Feynman, A.~Hibbs, and D.~Styer, {\em Quantum mechanics and path
  integrals---Emended edition}.
\newblock Dover Books on Physics, Dover Publications, 2010.

\bibitem{svensson2013pedagogical}
B.~E. Svensson, ``Pedagogical review of quantum measurement theory with an
  emphasis on weak measurements,'' {\em Quanta}, vol.~2, no.~1, pp.~18--49,
  2013.

\bibitem{aharonov2023conservation}
Y.~Aharonov, S.~Popescu, and D.~Rohrlich, ``Conservation laws and the
  foundations of quantum mechanics,'' {\em Proceedings of the National Academy
  of Sciences}, vol.~120, no.~41, p.~e2220810120, 2023.

\bibitem{tang2015neuroscience}
Y.-Y. Tang, B.~K. H{\"o}lzel, and M.~I. Posner, ``The neuroscience of
  mindfulness meditation,'' {\em Nature Reviews Neuroscience}, vol.~16, no.~4,
  pp.~213--225, 2015.

\bibitem{deguchi2021can}
Y.~Deguchi, J.~L. Garfield, G.~Priest, and R.~H. Sharf, {\em What Can't be
  Said: Paradox and Contradiction in East Asian Thought}.
\newblock Oxford University Press, 2021.

\bibitem{merleau1962phenomenology}
M.~Merleau-Ponty, ``Phenomenology of perception,'' {\em Routledge}, vol.~5,
  1962.

\bibitem{brukner2020facts}
{\v{C}}.~Brukner, ``Facts are relative,'' {\em Nature Physics}, vol.~16,
  no.~12, pp.~1172--1174, 2020.

\bibitem{mermin2014physics}
N.~D. Mermin, ``Physics: Qbism puts the scientist back into science,'' {\em
  Nature}, vol.~507, no.~7493, pp.~421--423, 2014.

\bibitem{fuchs2013quantum}
C.~A. Fuchs and R.~Schack, ``Quantum-{B}ayesian coherence,'' {\em Reviews of
  Modern Physics}, vol.~85, no.~4, p.~1693, 2013.

\bibitem{pienaar2021qbism}
J.~Pienaar, ``Qbism and relational quantum mechanics compared,'' {\em
  Foundations of Physics}, vol.~51, no.~5, pp.~1--18, 2021.

\bibitem{brukner2017quantum}
{\v{C}}.~Brukner, ``On the quantum measurement problem,'' in {\em Quantum [Un]
  Speakables II}, pp.~95--117, Springer, 2017.

\bibitem{rovelli2021helgoland}
C.~Rovelli, {\em Helgoland: Making sense of the quantum revolution}.
\newblock Penguin, 2021.

\bibitem{hofstadter2013strange}
D.~R. Hofstadter, {\em I am a strange loop}.
\newblock Basic Books, 2013.

\bibitem{garfield1995fundamental}
J.~L. Garfield {\em et~al.}, {\em The fundamental wisdom of the middle way:
  Nagarjuna's Mulamadhyamakakarika}.
\newblock Oxford University Press, 1995.

\bibitem{westerhoff2009nagarjuna}
J.~Westerhoff, {\em Nagarjuna's Madhyamaka: A philosophical introduction}.
\newblock Oxford University Press, 2009.

\bibitem{westerhoff2024candrakirti}
J.~Westerhoff, {\em Candrak{\=\i}rti's introduction to the middle way: A
  guide}.
\newblock Oxford University Press, 2024.

\bibitem{garfield2014engaging}
J.~L. Garfield, {\em Engaging Buddhism: Why it matters to philosophy}.
\newblock Oxford University Press, 2014.

\bibitem{bitbol2019two}
M.~Bitbol, ``Two aspects of {\'s}{\=u}nyat{\=a} in quantum physics: relativity
  of properties and quantum non-separability,'' in {\em Quantum Reality and
  Theory of {\'S}{\=u}nya}, pp.~93--117, Springer, 2019.

\bibitem{friston2017graphical}
K.~J. Friston, T.~Parr, and B.~de~Vries, ``The graphical brain: belief
  propagation and active inference,'' {\em Network Neuroscience}, vol.~1,
  no.~4, pp.~381--414, 2017.

\bibitem{buhrmann2013dynamical}
T.~Buhrmann, E.~A. Di~Paolo, and X.~Barandiaran, ``A dynamical systems account
  of sensorimotor contingencies,'' {\em Frontiers in Psychology}, vol.~4,
  p.~285, 2013.

\bibitem{gallagher2017enactivist}
S.~Gallagher, {\em Enactivist interventions: Rethinking the mind}.
\newblock Oxford University Press, 2017.

\bibitem{jaynes2003probability}
E.~T. Jaynes, {\em Probability theory: The logic of science}.
\newblock Cambridge University Press, 2003.

\end{thebibliography}

\end{document}